\def\TTT{\mathcal{T}}
\def\OOO{\mathcal{O}}
\def\KKK{\mathbb{K}}
\def\KLTL{KLTL }
\def\Acacia-K{Acacia-K }
\newcommand{\Prop}{\mathcal{P}}
\newcommand{\Propv}{\mathcal{P}_{v}}
\newcommand{\Propi}{\mathcal{P}_{i}}
\newcommand{\modelEnv}{\mathcal{M}_{E}}
\newcommand{\traces}{\text{traces}}
\newcommand{\trace}{\text{trace}}
\newcommand{\post}{\text{post}}
\newcommand{\tracev}{\text{trace}_{v}}
\newcommand{\proj}{\text{proj}}
\newcommand{\exec}{\textnormal{\text{exec}}}
\newcommand{\lang}{\mathcal{L}}
\newcommand{\aut}{\mathcal{A}}
\newcommand{\obs}{\text{o}}
\newcommand{\pKLTL}{\ensuremath{\textnormal{KLTL}^+}}
\newcommand{\RX}[2]{\textnormal{\text{exec}}(\modelEnv[#1], #2)}
\newcommand{\R}[1]{\textnormal{\text{exec}}(\modelEnv, #1)}
\begin{document}

\sloppy
\title{Safraless Synthesis for Epistemic Temporal Specifications\thanks{Work partially supported by the ANR research project ``EQINOCS'' no. ANR-11-BS02-0004}}

\author{Rodica Bozianu\inst{1} \and C\u at\u alin Dima\inst{1} \and Emmanuel Filiot\inst{2,}\thanks{F.R.S.-FNRS research associate (chercheur qualifi\'{e})}}


\institute{Universit\'{e} Paris Est, LACL (EA 4219), UPEC, 94010 Cr\'{e}teil Cedex, France\and
Universit\'{e} Libre de Bruxelles, CP 212 - 1050 Bruxelles Belgium 
}

\maketitle

\vspace*{-10pt}
\paragraph{\textbf{Abstract. }} 
In this paper we address the synthesis problem for specifications given in
linear temporal single-agent epistemic logic, KLTL (or $KL_1$), over single-agent systems having imperfect information of the environment state. \cite{MeVa07} have shown that this problem is 2Exptime complete.
However, their procedure relies on complex automata constructions that are
notoriously resistant to efficient implementations as they use Safra-like
determinization.

We propose a "Safraless" synthesis procedure for a large fragment of KLTL.
The construction transforms first the synthesis problem into the problem
of checking emptiness for universal co-B\"{u}chi tree automata using an
information-set construction.
Then we build a safety game that can be solved using an antichain-based 
symbolic technique exploiting the structure of the underlying automata.
The technique is implemented and applied to a couple of case studies.

\vspace*{-2mm}
\section{Introduction}

 \vspace{-2mm}

The goal of system verification is to check that a system 
satisfies a given property. One of the major achievements in
system verification is the theory of \emph{model checking}, that uses
automata-based techniques to check properties expressed in temporal
logics, for systems modelled as transitions systems. 
The \emph{synthesis} problem is more ambitious: given a
specification of the system, the aim is to automatically synthesise a
system that fulfils the constraints defined by the
specification. Therefore, the constraints do not  need to be checked a
posteriori, and this allows the designer to focus on defining high-level
specifications, rather than designing complex computational models of
the systems.

\emph{Reactive systems} are non-terminating systems that 
interact with some environment, e.g., hardware or software that
control transportations systems, or medical devices. One
of the main challenge of synthesis of reactive systems is to cope with the uncontrollable
behaviour of the environment, which usually leads to computationally
harder decision problems, compared to system verification. For
instance, model-checking properties expressed in \emph{linear time
  temporal logic} (LTL) is \textsf{PSpace-c} while LTL synthesis is 
\textsf{2Exptime-c} \cite{PnRo89}. Synthesis of reactive systems from
temporal specifications has gain a lot of
interest recently as several works have shown its practical
feasibility \cite{KuVa05,BJPS11,Ratsy,JobstmannB06,Filiot11}. 
These progresses were supported by Kupferman and Vardi's
breakthrough in automata-based synthesis techniques \cite{KuVa05}. 
More precisely, they have shown that the complex Safra's
determinization operation, used in the classical LTL synthesis 
algorithm \cite{PnRo89}, could be avoided by working directly with
universal co-B\"uchi automata. Since then, several
other ``Safraless'' procedures have been defined
\cite{KuVa05,ScheweF07a,conf/formats/GiampaoloGRS10,Filiot11}.
In \cite{ScheweF07a,Filiot11}, it is shown that LTL synthesis reduces 
to testing the emptiness of a universal co-B\"uchi tree automaton, that
in turn can be reduced to solving a safety game. The structure of the
safety games can be exploited to define a symbolic game solving
algorithm based on compact antichain representations \cite{Filiot11}.

In these works, the system is assumed to have perfect
information about the state of the environment. However in many practical
scenarios, this assumption is not realistic since some 
environment information may be hidden to the system (e.g. private
variables). Towards the (more realistic) synthesis of partially informed
systems, imperfect information two-player games on graphs 
have been studied
\cite{journals/lmcs/RaskinCDH07,DBLP:conf/vmcai/Chatterjee0FR14,conf/fsttcs/BerwangerD08,DBLP:journals/fmsd/Chatterjee0H13}. However,
they consider explicit state transition systems rather than synthesis
from temporal specifications. Moreover, the winning objectives that
they consider cannot express fine properties about imperfect
information, i.e., cannot speak about knowledge.

 \vspace{-2mm}
\paragraph{Epistemic Temporal Logics} \cite{HaYo84} are logics formatted for reasoning about multi-agent situations.
They are extensions of temporal logics 
with knowledge operators $K_i$ for each agent.
They have been successfully used for verification of various distributed 
systems in which the knowledge of the agents is essential for the correctness of the system specification.

 \vspace{-2mm}
\paragraph{Synthesis problem with temporal epistemic objectives} 
 Vardi and van der Meyden \cite{MeVa07} have considered
epistemic temporal logics  to define specifications that
can, in addition to temporal properties, also express properties that
refer to the imperfect information, and they studied the synthesis
problem. They define the synthesis problem in a
multi-agent setting, for specifications written in LTL extended with knowledge
operators $K_i$ for each agent (KLTL).
In such models, transitions between states of the
environment model depend on actions of the environment and the
system. The system does not see which actions are played by the
environment but get some observation on the states in which 
the environment can be (observations are subsets of states). 
An execution of the environment model, from the
point of view of the system, is therefore an infinite sequence
alternating between its own actions and observations. 

The goal of the
KLTL synthesis problem is to automatically generate a strategy for the
system (if it exists) that tells it which action should be played,
depending on finite histories, so that whatever the environment does, 
all the (concrete) infinite executions resulting from this strategy
satisfy the KLTL formula. In \cite{MeVa07}, this problem was shown to
be undecidable even for two agents against
the environment.
On the other hand, for  single
agent against environment situations, they show that the problem is  \textsf{2Exptime-c},
by reduction to
the emptiness of alternating B\"uchi automata. This theoretically
elegant construction is however difficult to implement and optimize,
as it relies on complex Safra-like automata operations (Muller-Schupp
construction).

\vspace{-2mm}
\paragraph{Contributions} In this paper, we follow the formalisation
of \cite{MeVa07} and, as our main contribution, define and implement a Safraless
synthesis procedure for the positive fragment of KLTL ($\pKLTL$),  i.e.,  KLTL
formulas where the operator $K$ does not occur under an odd number of 
negations. Our procedure relies on universal co-B\"uchi tree
automata (UCT). More precisely, given a $\pKLTL$ formula $\varphi$ and
some environment model $\modelEnv$, we show how to construct a UCT
$\TTT_\varphi$ whose language is exactly the set of strategies that
realize $\varphi$ in $\modelEnv$.

Despite the fact that our procedure has \textsf{2-ExpTime}
worst-case complexity, we have implemented it and shown its practical
feasibility through a set of examples. In particular, based on ideas
of \cite{Filiot11}, we reduce the problem of checking the emptiness of
$\TTT_\varphi$ to solving a safety game whose state space can be
ordered and compactly represented by antichains. Moreover, rather that
using the reduction of \cite{Filiot11} as a blackbox, we further
optimize the antichain representations to improve their compactness. 
Our implementation is based on the tool Acacia \cite{CAV12} and, to
the best of our knowledge, it is the first implementation of a
synthesis procedure for epistemic temporal specifications. 
As an application, this implementation can be used to solve two-player
games of imperfect information whose objectives are given as LTL
formulas, or universal co-B\"uchi automata.

\vspace{-2mm}
\paragraph{Organization of the paper} In Section \ref{sec:KLTLReal},
we define the KLTL synthesis problem. In Section \ref{sec:automata},
we define universal co-B\"uchi automata for infinite words and trees. 
In Section \ref{sec:ltl}, we consider the particular case of LTL
synthesis in an environment model with imperfect information. The
construction explained in that section will be used in the
generalization to \pKLTL and moreover, it can be used to solve
two-player imperfect information games with LTL (and more generally
$\omega$-regular) objectives. In Section \ref{sec:kltl}, we define our
Safraless procedure for \pKLTL, and show in Section
\ref{sec:antichains} how to implement it with antichain-based symbolic
techniques. Finally, we describe our implementation in Section \ref{sec:implementation}.
\emph{Full proofs can be found in Appendix in which, for self-containdness, we
also explain the reduction to safety games.}

\vspace*{-2mm}
\section{\KLTL Realizability and Synthesis}
\label{sec:KLTLReal}

In this section, we define the realizability and synthesis problems
for \KLTL specifications, for one partially informed agent, called the
\emph{system}, against an environment.

 \vspace{-3mm}
\paragraph{\textbf{Environment Model}} We 
assume to have, as input of the problem, a model of the behaviour of
the environment as a transition system. This transition system  is defined
over two disjoint sets of actions $\Sigma_1$ and $\Sigma_2$, for the
system and the environment respectively.  The transition relation from
states to states is defined with respect to pairs
of actions in $\Sigma_1\times \Sigma_2$. Additionally, each state $s$ of
the environment model carries an interpretation $\tau_e(s)$ over a
(finite) set of propositions $\Prop$. However, the system is not perfectly informed
about the value of some propositions, i.e., 
some propositions are visible to the system, and some are
not. Therefore, we partition the set $\Prop$ into 
two sets $\Propv$ (the visible propositions) and $\Propi$ (the
invisible ones).

An \textit{environment model}  is a tuple $\modelEnv =
(\Prop,\Sigma_1,\Sigma_2,S_e, S_0, \Delta_e, \tau_e)$ where
\vspace{-1mm}
\begin{itemize}
\item $\Prop$ is a finite set of propositions, $\Sigma_1$ and $\Sigma_2$ are finite set of actions for the
  system and the environment resp., 
\item $S_e$ is a set of states, $S_0\subseteq S_e$ a set of
  initial states,
\item $\tau_e : S_e \rightarrow 2^{\Prop}$ is a labelling function,
\item $\Delta_e \subseteq S_e \times \Sigma_1 \times \Sigma_2 \times
 S_e$ is a transition relation.
\end{itemize}
\vspace{-1mm}

The model is assumed to be deadlock-free, i.e. from any state, there
exists at least one outgoing transition. Moreover, the model is
assumed to be complete for all actions of the system, i.e. for all
states and all actions of the system, there exists an outgoing
transition. %
The set of \emph{executions} of $\modelEnv$, denoted by
$\exec(\modelEnv)$, is the set of infinite sequences of states
$\rho = s_0s_1\dots \in S_e^\omega$ such that $s_0\in S_0$ and for all $i>0$,
$(s_i,a_1,a_2,s_{i+1})\in \Delta_e$ for some
$(a_1,a_2)\in\Sigma_1\times \Sigma_2$. Given a sequence of
states $\rho = s_0s_1\dots$ and a set $P\subseteq \Prop$, we
denote by $\trace_P(\rho)$ its projection over $P$, i.e.
$\trace_P(\rho) = (\tau_e(s_0)\cap P)(\tau_e(s_1)\cap P)\dots$. 
The \emph{visible trace} of $\rho$ is defined by $\tracev(\rho) =
\trace_{\Propv}(\rho)$. 
The \emph{language} of $\modelEnv$ with respect to $P$ is defined as 
$\lang_P(\modelEnv) = \{ \trace_P(\rho)\ |\ \rho\in \exec(\modelEnv)\}$. 
The language of $\modelEnv$  is defined as $\lang_\Prop(\modelEnv)$. The visible language of $\modelEnv$ is defined as
 $\lang_{\Propv}(\modelEnv)$.
Finally, given an infinite sequence of actions $a = a_1^0a_2^0\dots \in
(\Sigma_1.\Sigma_2)^\omega$ and an execution $\rho = s_0s_1\dots$ of
$\modelEnv$, we say that $a$ is compatible with $\rho$ if for all
$i\geq 0$, $(s_i, a_1^i, a_2^i, s_{i+1})\in \Delta_e$.

\begin{figure}[t]
\centering



\begin{center}
\begin{tikzpicture}[->,>=stealth',shorten >=1pt,auto,node distance=2.2cm,
                    thick,scale=0.7,every node/.style={scale=0.7}]
  \tikzstyle{every
    state}=[shape=circle,fill=blue!15,text=black,minimum size=1cm]
  \tikzstyle{every edge}=[draw=black]
  \tikzstyle{initial}=[initial by arrow, initial where=left, initial text=init]

  \node[initial,state] (s1) at (0,0) {$\{t,l\}$};
  \node[state] (s3)  at (4,-3.5) {$\emptyset$};
  \node[initial,state,initial where=right] (s2)  at (8,0) {$\{t\}$};

  \node[state,draw=none,fill=none] (s1l) at (-0.5,-0.5) {$s_1$};
  \node[state,draw=none,fill=none] (s3l)  at (3.5,-4) {$s_3$};
  \node[state,draw=none,fill=none] (s2l)  at (8.5,-0.5) {$s_2$};

\path (s1) [loop above] edge node {$(\neg t_{out}, l_{on}),S$} (s1) ;
\path (s2) [loop above] edge node {$(\neg t_{out}, \neg l_{on}),S$} (s2) ;
\path (s3) [loop below] edge node {$(*, *),S$} (s3) ;

\path (s2) [bend right=15] edge node {$(\neg t_{out}, l_{on}),S$} (s1) ;
\path (s1) [bend right=15] edge node[below]{$(\neg t_{out}, \neg l_{on}),S$} (s2) ;

\path (s1) [bend right=15] edge node[rotate=-48, below] {$\begin{array}{ll}(*,
      *),T\\ (t_{out},*),S\end{array}$} (s3) ;
\path (s3) [bend right=15] edge node[rotate=-48, above] {$(*, l_{on}),T$} (s1) ;

\path (s2) [bend left=15] edge node[rotate=48] {$\begin{array}{ll}(*,
      *),T\\ (t_{out},*),S\end{array}$} (s3) ;
\path (s3) [bend left=15] edge node[rotate=48] {$(*, \neg l_{on}),T$} (s2) ;

  

  

  



\vspace{-10mm}
\end{tikzpicture}
\end{center}
\vspace{-6mm}
\caption{Environment model $\modelEnv$ of Example \ref{ex:light}}
\label{fig:env}
\vspace*{-6mm}
\end{figure}


This formalization is very close to that of \cite{MeVa07}. However in
\cite{MeVa07}, partial observation is modeled as a partition of the
state space. The two models are equivalent. In particular, we will
see that partitioning the propositions into visible and invisible ones also
induces a partition of the state space into observations.

\begin{example}\label{ex:light}
We illustrate the notion of environment model on the example of
\cite{MeVa07}, that describes the behaviour of an environment against
a system acting on a timed toggle switch with two positions (on,off)
and a light. It is depicted in Fig. \ref{fig:env}. %
The set $\Prop =\{ t,l\}$ contains two propositions $t$ (true iff the
toggle is on) and $l$ (true iff the light is on). Actions of the
system are $\Sigma_1 = \{ T, S\}$ for ``toggle'' and ``skip''
respectively.  The system can change the position of the toggle only
if it plays $T$, and $S$ has no effect. Actions of the environment are 
$\Sigma_2 = \{  (t_{out}, l_{on})\ |\ t_{out}, l_{on}\in
\{0,1\} \}$. The boolean variables $(t_{out}$ and $l_{on})$ indicate that 
the environment times out the toggle and that it switches on the
light. The transition function is depicted on the figure as well as
the labelling function $\tau_e : S_e\rightarrow 2^\Prop$. The star $*$
means ``any action''. The light can be on only if the
toggle in on (state $s_1$), but it can be off even if the toggle is on
(state $s_2$), in case it is broken. This parameter is uncontrollable
by the system,  and therefore it is
controlled by the environment (action $l_{on}$). The timer is assumed to be unreliable
and therefore the environment can timeout at any time (action
$t_{out}$). The system sees only the light, i.e. $\Propv = \{ l\}$ and
$\Propi = \{ t\}$. The goal of the system is to have a strategy such
that he always knows the position of the toggle. 

\end{example}
 
\vspace*{-2mm}
\paragraph{\textbf{Observations.}} 
The partition of the set of propositions
$\Prop$ into a set of visible propositions $\Propv$ and a set of
invisible propositions $\Propi$ induces an indistinguishability relation
over the states $S_e$. Two states are \textit{indistinguishable}, denoted $s_1 \sim s_2$, if
they have the same visible propositions, i.e. $\tau_e(s_1)\cap \Propv
= \tau_e(s_2)\cap \Propv$.
It is easy to see that $\sim$ is an equivalence relation over
$S_e$. Each equivalence class of $S_e$ induced by $\sim$ is called an
\emph{observation}. The equivalence class of a state $s\in S_e$ is
denoted by $\obs(s)$ and the set of observations is denoted by $\OOO$.
The relation $\sim$ is naturally extended to (finite or infinite) executions: $\rho_1\sim
\rho_2$ if $\tracev(\rho_1)=\tracev(\rho_2)$. %
%
%
Similarly, two executions $\rho = s_0s_1\dots$ and $\rho'=s'_0s'_1\dots$ are said to be
indistinguishable \emph{up to some position $i$} if $\tracev(s_0\dots
s_i) = \tracev(s'_0\dots s_i')$. This indistinguishability notion
also is an equivalence relation over executions that we denote by
$\sim_i$. 

Coming back to Example \ref{ex:light}, since the set of visible propositions is $\Propv = \{l\}$ and  the set of invisible ones is $\Propi=\{t\}$, the states $s_2$ and $s_3$ are indistinguishable (in both $s_2$, $s_3$ the light is off) and therefore $\OOO = \{ o_0, o_1 \}$ with $o_0 = \{s_2, s_3 \}$ and $o_1 = \{s_1\}$.


Given an infinite sequence $u = a_1o_1a_2o_2\dots \in
(\Sigma_1.\OOO)^\omega$ of actions of Player $1$, and observations,
we associate with $u$ the set of possible executions of $\modelEnv$
that are compatible with
$u$. Formally, we define $\exec(\modelEnv, u)$ the set of executions $\rho =
s_0s_1\dots \in \exec(\modelEnv)$ such that for all $i\geq 1$, 
$\obs(s_i) = o_i$ and there exists an action $b_i$ of the environment
such that $(s_{i-1}, a_i, b_i, s_i)\in \Delta_e$. We also define the
traces of $u$ as the set of traces of all executions of $\modelEnv$
compatible with $u$, i.e. $\traces(u) = \{ \traces(\rho)\ |\ \rho\in
\exec(\modelEnv, u)\}$.

\vspace*{-2mm}
\paragraph{\textbf{Epistemic Linear Time Temporal Logic (KLTL)}} We
now define the logic KLTL for one-agent (the system).
 The logic \KLTL extends the logic LTL with an epistemic operator $K\phi$, modelling the
property that the system knows that the formula $\phi$ holds. \textit{KLTL formulae} are defined over the set of atomic propositions $\Prop$ by:
$$\varphi ::= p \hspace*{7pt}|\hspace*{7pt} \neg \varphi \hspace*{7pt}|\hspace*{7pt} \varphi \vee \varphi \hspace*{7pt}|\hspace*{7pt} \bigcirc \varphi \hspace*{7pt}|\hspace*{7pt} \varphi_1 \mathcal{U} \varphi_2 \hspace*{7pt}|\hspace*{7pt} K \varphi \hspace*{7pt} \vspace*{-2mm}$$
in which $p \in \Prop$ and $\bigcirc$ and $\mathcal{U}$ are the "next" and "until" operators from linear temporal logic.
Formulas of the type $K \varphi$ are read as "the system knows that $\varphi$ holds". 
 We define the macros $\Diamond$ (eventually)
 and $\Box$ (always) as usual. %
LTL is the fragment
of KLTL without the $K$ operator.


The semantics of a KLTL formula $\varphi$ is defined for an environment model
$\modelEnv = (\Prop, \Sigma_1,\Sigma_2,S_e,S_0,\Delta_e,\tau_e)$, a set of
executions $R\subseteq \exec(\modelEnv)$, an
execution $\rho=s_0s_1\dots\in R$ and a position $i\geq 0$ in $\rho$. 
It is defined inductively:
\vspace{-2mm}
\begin{itemize}
\item  $R,\rho,i \models p$ if $p \in \tau_e(s_i)$,
\item  $R,\rho,i \models \neg \varphi$ if $R,\rho,i \not \models \varphi$,
\item  $R,\rho,i \models \varphi_1 \vee \varphi_2$ if $R,\rho,i \models \varphi_1$ or $R,\rho,i \models \varphi_2$,
\item  $R,\rho,i \models \bigcirc \varphi$ if $R,\rho,i+1 \models \varphi$,
\item  $R,\rho,i \models \varphi_1 \mathcal{U} \varphi_2$ if $\exists j \geq i$ s.t. $R,\rho,j \models \varphi_2$ and $\forall i \leq k < j,$  $R,\rho,k \models \varphi_1$,
\item  $R,\rho,i \models K \varphi$ if for all
  $\rho'\in R$ s.t. $\rho\sim_{i}\rho'$, we have $R,\rho',i \models \varphi$.
\end{itemize}
\vspace{-2mm}

In particular, the system knows $\varphi$ at position $i$ in the execution
$\rho$, if all other executions in $R$ whose prefix up to position $i$ are
indistinguishable from that of $\rho$, also satisfy $\varphi$. 
We write $R, \rho\models \varphi$ if $R, \rho,0\models
\varphi$, and $R\models \varphi$ if $R, \rho\models \varphi$
for all executions $\rho\in R$. We also write $\modelEnv\models \varphi$ 
to mean $\exec(\modelEnv)\models \varphi$. 
Note that $\modelEnv\models \varphi$ iff $\modelEnv\models K\varphi$.

Consider Example \ref{ex:light} and the set $R$ of executions that
eventually loops is $s_1$. Pick any $\rho$ in $R$. 
Then $R,\rho,0\models \Box K \Diamond(l)$. Indeed, take any position $i$ in
$\rho$ and any other executions $\rho'\in R$ such that $\rho\sim_i
\rho'$. Then since $\rho'$ will eventually loop in $s_1$, it will
satisfy $\Diamond(l)$. Therefore $R,\rho,i\models K\Diamond(l)$, for all $i\geq 0$.

\vspace*{-2mm}
\paragraph{\textbf{KLTL Realizability and Synthesis}} As presented in
\cite{Filiot11} for the perfect information setting, 
the realizability problem, given the environment model $\modelEnv$ and the KLTL formula $\varphi$,
 is best seen as a turn-based game between the system
(Player $1$) and the environment (Player $2$). 
In the first round of the play, Player $1$ picks some action $a_1^0 \in
 \Sigma_1$ and then Player $2$ picks some action in $a_2^0\in
 \Sigma_2$ and solves the nondeterminism in $\Delta_e$, and a new round starts. The two players play for an
 infinite duration and the outcome is an infinite sequence
 $w = a_1^0a_2^0a_1^1a_2^1\dots$. The winning objective is given by
 some \KLTL formula $\varphi$. Player $1$ wins the play if 
 for all executions $\rho$ of $\modelEnv$ that are compatible with $w$, we have 
$\modelEnv,\rho\models \varphi$.

Player $1$ plays according to \emph{strategies} (called
\emph{protocols} in \cite{MeVa07}). Since Player $1$ has only partial
information about the state of the environment, his strategies are based on
the histories of his own actions and the observations he got from the
environment. Formally, a strategy for Player $1$ is a mapping
$\lambda:(\Sigma_1\OOO)^*\rightarrow \Sigma_1$, where as defined
before, $\OOO$ denotes the set of observations of Player $1$ over the
states of $\modelEnv$. Fixing a strategy $\lambda$ of Player $1$
restricts the set of executions of the environment model $\modelEnv$. 
An execution $\rho=s_0s_1\dots \in\exec(\modelEnv)$ is said to be \emph{compatible} with
$\lambda$ if there exists an infinite sequence of actions $a =
a_1^0a_2^0...\in(\Sigma_1.\Sigma_2)^\omega$, compatible with $\rho$
such that for all $i\geq 0$, $a_1^i =
\lambda(a_1^0\obs(s_0)a_1^1\obs(s_1)... a_1^{i-1} \obs(s_{i-1}))$. 
We denote by $\exec(\modelEnv, \lambda)$ the set of
executions of $\modelEnv$ compatible with $\lambda$.




\begin{definition}
A \KLTL formula $\varphi$ is \textit{realizable in $\modelEnv$} if
there exists a strategy $\lambda$ for the system such that
$\exec(\modelEnv, \lambda)\models \phi$.

\end{definition}

\begin{theorem}[\cite{MeVa07}]
    The \KLTL realizability problem (for one agent) is \textsf{2ExpTime-complete}.
\end{theorem}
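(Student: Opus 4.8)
The plan is to establish the lower and upper bounds separately, staying in the single-agent setting throughout (for two agents the problem is already undecidable, by \cite{MeVa07}). For \textsf{2ExpTime}-hardness I would observe that LTL is a syntactic fragment of \KLTL and that ordinary, perfect-information LTL realizability is the special case of our problem in which $\Propi=\emptyset$: then the indistinguishability relation $\sim$ is trivial, every state of $\modelEnv$ is its own observation, and a strategy $(\Sigma_1\OOO)^*\to\Sigma_1$ is just a strategy with full history. Up to a routine encoding of the environment's moves and of the move order into the state labels, this is exactly the Pnueli--Rosner setting, which is already \textsf{2ExpTime}-hard \cite{PnRo89}; hence so is \KLTL realizability.

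For membership in \textsf{2ExpTime}, I would follow \cite{MeVa07} and reduce realizability to the emptiness of an alternating (B\"uchi) tree automaton $\aut$ over the set of \emph{strategy trees}, i.e.\ $\Sigma_1$-labelled, $\OOO$-branching trees. The construction rests on three ingredients. First, an \emph{information-set construction} over $\modelEnv$: along a finite history of system actions and observations one tracks the set $I\subseteq S_e$ of environment states that are still possible. Since $\sim$ compares full observation prefixes (perfect recall), $I$ is determined by that history; the construction therefore costs a single exponential in $|\modelEnv|$, and afterwards the truth of a subformula $K\psi$ at a point becomes a property of the current information set --- namely that $\psi$ holds at every indistinguishable sibling point, whose state ranges over $I$. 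Second, a translation of the temporal skeleton of $\varphi$ (treating maximal subformulas $K\psi$ as fresh atoms) into an alternating B\"uchi word automaton of size linear, as an alternating automaton, in $|\varphi|$, in the style of Vardi--Wolper. Third, a product in which $\aut$ runs one copy along each branch of the strategy tree, carrying the current concrete state and the current information set, spawns temporal obligations using the automaton of the second step, and for each obligation $K\psi$ spawns universal copies that re-check $\psi$ from every state of the current information set. Then $\aut$ has size exponential in $|\modelEnv|+|\varphi|$, $\varphi$ is realizable in $\modelEnv$ iff $\lang(\aut)\neq\emptyset$, and deciding emptiness of an alternating tree automaton --- by simulating it with a nondeterministic tree automaton (the step that uses Safra-like determinisation, and the source of the second exponential) and then solving the associated parity game --- takes doubly exponential time, which yields the bound.

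I expect the main obstacle to be the third ingredient, and more precisely the knowledge operator and its nesting. In plain LTL synthesis one may simply take the product of a fixed specification automaton with the environment model, but here $K\psi$ is interpreted relative to the execution set $R=\R{\lambda}$, which is exactly the object being synthesised; one must therefore interleave the guess of the strategy, the propagation of information sets, and the temporal automaton, while guaranteeing that the obligations triggered by $K\psi$ at a point are discharged consistently at all points sharing that observation history --- across different branches of the strategy tree --- and that nested $K$'s cause no further blow-up (they do not, since all indistinguishable points carry the same information set). Getting this bookkeeping right, and proving that acceptance of $\aut$ coincides with realizability, is the technical heart of the argument; the complexity accounting, the simulation of the alternating automaton by a nondeterministic one, and the reduction of tree-automaton emptiness to a parity game are by now standard.
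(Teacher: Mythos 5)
This theorem is imported from \cite{MeVa07} and the paper gives no proof of it; your sketch reconstructs precisely the argument of that reference as the paper itself summarizes it (2ExpTime-hardness inherited from perfect-information LTL realizability \cite{PnRo89} by taking $\Propi=\emptyset$, and membership via an information-set construction feeding an alternating B\"uchi tree automaton over strategy trees whose emptiness is decided after a Muller--Schupp-style alternation removal). The approach is the intended one and the complexity accounting is right; the only part left genuinely open is the one you flag yourself, namely the treatment of negative occurrences $\neg K\psi$ (existential choice of an indistinguishable concrete execution) and the proof that acceptance coincides with realizability, which is exactly the technical content of \cite{MeVa07} and the reason the present paper restricts its own Safraless construction to the positive fragment.
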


If a formula is realizable, the \emph{synthesis} problem asks to
generate a finite-memory strategy that realizes the formula. Such a
strategy always exists if the specification is realizable \cite{MeVa07}.
Finite memory strategies can be represented by Moore machines that
read observations and output actions of Player $1$. We refer the
reader to \cite{Filiot11} for a formal definition of finite-memory
strategies.

Considering again Example \ref{ex:light}, the formula 
$\varphi = \Box(K(t) \vee K(\neg t))$ 
expresses the fact that the system knows at each step the position of the toggle.
As argued in \cite{MeVa07}, this formula is realizable if the initial
set of the environment is $\{ s_1, s_2 \}$ since both states are
labelled with $t$. Then, one strategy of the system is to play first time $T$, action that will lead to $s_3$,
and then always play $S$ in order to stay in that state. Following this strategy,
in the first step the formula $K(t)$ is satisfied and then $K(\neg t)$ becomes true.
However, the formula is not realizable if the set of initial
states of the environment is $\{ s_2, s_3 \}$ since from the beginning
the system doesn't know the value of the toggle. 

\vspace*{-2mm}
\section{Automata for Infinite Words and Trees}\label{sec:automata}

\vspace*{-2mm}
\paragraph{\textbf{Automata on infinite Words}} An \textit{infinite word automaton} over some (finite) alphabet $\Sigma$ is a tuple $A=(\Sigma, Q, Q_0, \Delta, \alpha )$ where $\Sigma$ is the finite input alphabet, $Q$ is the finite set of states, 
$Q_0 \subseteq Q$ is the set of initial states, 
$\alpha \subseteq Q$ is the set of final states (accepting states) 
and $\Delta \subseteq Q \times \Sigma \times Q$ is the transition relation.

For all $q \in Q$ and all $ \sigma \in \Sigma$, we let
$\Delta(q,\sigma) = \{ q' | (q,\sigma, q') \in \Delta \}$. We let $|A|
= |Q|+|\Delta|$. We say that $A$ is \textit{deterministic} if $|Q_0| =
1$ and  $\forall q \in Q, \forall \sigma \in \Sigma,
|\Delta(q,\sigma)| \leq 1$. It is \textit{complete} if $\forall q \in
Q, \forall \sigma \in \Sigma, \Delta(q,\sigma) \not = \emptyset$. In
this paper we assume, w.l.o.g., that the word automata are always complete.

A \textit{run} on the automaton $A$ over an infinite input word $w =
w_0 w_1 w_2 ...$, is a sequence $r = q_0 q_1 q_2 ... \in Q^\omega$
such that $(q_i, w_i, q_{i+1}) \in \Delta$ for all $i \geq 0$ and $q_0
\in Q_0$. We denote by $Runs_A(w)$ the set of runs of $A$ on $w$ and
by $Visit(r,q)$ the number of times the state $q$ is visited along the
run $r$(or $\infty$ if the path visit the state $q$ infinitely
often). Here, we consider two accepting conditions for infinite word
automata and name the infinite word automata according to the used
accepting condition. Let $B \in \mathbb{N}$. A word $w \in \Sigma^\omega$ is accepted by $A$ if (according to the accepting condition): 
\begin{align*}
&\text{ Universal Co-B\"{u}chi} &: \forall r \in Runs_A(w), \forall q \in \alpha,\ Visit(r, q) < \infty \\
&\text{ Universal B-Co-B\"{u}chi} &: \forall r \in Runs_A(w), \forall q \in \alpha,\ Visit(r, q) \leq B 
\end{align*}   

The set of words accepted by $A$ with the universal co-B\"{u}chi
(resp. $B$-co-B\"{u}chi) accepting condition is denoted by
$\mathcal{L}_{uc}(A)$ (resp. $\mathcal{L}_{uc,B}(A)$) . We say that
$A$ is a \textit{universal co-B\"{u}chi word automaton} (UCW) if the first acceptance condition is used and that
$(A,B)$ is an \textit{universal B-co-B\"{u}chi word automaton} (UBCW) if the second one is used.

 Given an LTL formula $\varphi$, we can translate it into an equivalent universal co-B\"{u}chi word automaton $A_\varphi$ \label{coBu}. This can be done with a single exponential blow-up by first negating $\varphi$, then translating $\neg \varphi$ into an equivalent nondeterministic B\"{u}chi word automaton, and then 
dualize it into a universal co-B\"{u}chi word automaton
\cite{Filiot11,KuVa05}. 


\vspace*{-2mm}
\paragraph{\textbf{Automata on Infinite Trees}}

Given a finite set $D$ of directions, a $D{-}tree$ is a prefix-closed set $T \subseteq D^*$, i.e., if $x \cdot d \in T$, where $d \in D$, then $x \in T$. 
The elements of $T$ are called \textit{nodes} and the empty word
$\epsilon$ is the \textit{root} of $T$. For every $x \in T$, the nodes
$x \cdot d$, for $d \in D$, are the \textit{successors} of $x$. A node $x$ is a
\textit{leaf} if has no successor in $T$, formally, $\forall d \in D,
x \cdot d \not \in T$.  The tree $T$ is \textit{complete} if for all
nodes, there are successors in all directions, formally, $\forall x
\in T, \forall d \in D, x \cdot d \in T$. 
Finite and infinite branches $\pi$ in a tree $T$ are naturally defined,
respectively, as finite and infinite paths in $T$ starting from the
root node. %
%
%
Given an alphabet $\Sigma$, a \emph{$\Sigma {-}$labelled $D{-}$tree} is a
pair $\langle T, \tau \rangle$ where $T$ is a tree and $\tau : T
\rightarrow \Sigma$ maps each node of $T$ to a letter in
$\Sigma$. We omit $\tau$ when it is clear from the context. 
Then, in a tree $T$, an infinite (resp. finite) branch $\pi$
induces an infinite (resp. finite) sequence of labels and directions
in $(\Sigma.D)^\omega$ (resp. $(\Sigma.D)^*\Sigma$). 
We denote this sequence by $\tau(\pi)$. For instance, for a set of
system's actions $\Sigma_1$ and a set of observations $\OOO$, 
 a strategy $\lambda:(\Sigma_1\OOO)^*\rightarrow \Sigma_1$ of the
 system can be seen has a $\Sigma_1$-labelled $\OOO$-tree whose nodes
 are finite outcomes\footnote{Technically, a strategy $\lambda$ is
   defined also for histories that are not accessible by $\lambda$
   itself from the initial (empty) history $\epsilon$. The tree represents only
   accessible histories but we can, in the rest of the paper, assume
   that strategies are only defined for their accessible
   histories. Formally, we assume that a strategy is a partial
   function whose domain $H$ satisfies $\epsilon\in H$ and for all
   $h\in H$ and all $o\in \OOO$, $h.\lambda(h).o\in H$,
   and $H$ is minimal (for inclusion) w.r.t. this property.}.

A \emph{universal co-B\"uchi tree automaton} (UCT) is a tuple $\TTT = (\Sigma, Q,Q_0, D, \Delta, \alpha )$ where 
$\Sigma$ is the finite alphabet,
$Q$ is a finite set of states, 
 $Q_0 \subseteq Q$ is the set of initial states,
 $D$ is the set of directions,
 $\Delta : Q \times \Sigma \times D \rightarrow 2^{Q}$ is the transition relation
(assumed to be total) and $\alpha$ is the set of final states. If the
tree automaton is in some state $q$ at some node $x$ labelled
by some $\sigma\in\Sigma$, it will evaluate, for all $d\in D$,  the subtree
rooted at $x.d$ in parallel from all the states of
$\Delta(q,\sigma,d)$. Let us define the notion of run formally. 
For all $q\in Q$ and $\sigma\in \Sigma$, we
denote by $\Delta(q,\sigma) = \{ (q_1,d_1),\dots,(q_n,d_n)\}$ the \emph{disjoint} union of all
sets $\Delta(q,\sigma,d)$ for all $d\in D$. A \textit{run} of $\TTT$ on an infinite $\Sigma{-}labelled$ $D{-}$tree $\langle T, \tau \rangle$ is a $(Q \times D^*){-}labelled$ $\mathbb{N}{-} tree$ $ \langle T_r,\tau_r \rangle$ such that $\tau_r(\epsilon) \in Q_0 \times \{\epsilon \}$
and, for all $x \in T_r$ such that $\tau_r(x)=(q,v)$, if
$\Delta(q,\tau(v)) = \{(q_1,d_1),\dots,(q_n,d_n)\}$,
we have $x \cdot i \in T_r$ and $\tau_r(x \cdot i) = (q_i, v
\cdot d_i)$ for all $0 < i \leq n$, .
Note that there is at most one run per input tree (up to tree isomorphism). A run $\langle
T_r,\tau_r\rangle$ is \emph{accepting} if for all infinite branches $\pi$ of
$T_r$, $\tau_r(\pi)$ visits a finite number of accepting states. The
language of $\TTT$, denoted by $\lang_{uc}(\TTT)$,  is the set of $\Sigma-$labelled $D$-trees such that
there exists an accepting run on them. Similarly, we define
universal $B$-co-B\"uchi tree automata by strengthening the acceptance
conditions on all branches to the $B$-co-B\"uchi condition.

As noted in \cite{Filiot11,ScheweF07a}, testing the emptiness of a UCT automaton reduces
to testing the emptiness of a universal $B$-co-B\"uchi accepting
condition for a sufficiently large bound $B$, which in turn reduces to
solving a safety game. Symbolic techniques that are also exploited in
this paper have been used to solve the safety games.

\vspace{-2mm}
\section{LTL synthesis under imperfect information}\label{sec:ltl}
\vspace{-2mm}

In this section, we first explain an automata-based procedure to
decide realizability under imperfect information of LTL formulas 
against an environment model. This procedure will be extended in the next
section to handle the $K$ operator.

Take an environment model $\modelEnv =
(\Prop, \Sigma_1,\Sigma_2,S_e, S_0,\Delta_e,\tau_e)$. Then, a complete $\Sigma_1-$labelled $\OOO$-tree $\langle T,\tau\rangle$ defines a
strategy of the system. Any infinite branch $\pi$ of $\langle
T,\tau\rangle$ defines an infinite sequence of actions and
observations of $\modelEnv$, which in turn corresponds to a set of possible traces in 
$\modelEnv$. We denote by $\traces(\pi)$ this set of traces, and it is
formally defined by $\traces(\pi) = \traces(\tau(\pi))$ (recall that
the set of traces of a sequence of actions and observations has been
defined in Section \ref{sec:KLTLReal}).

Given an LTL formula $\psi$, we 
construct a universal co-B\"uchi tree automaton $\TTT = (\Sigma_1, Q, Q_0, \OOO,
\Delta, \alpha )$ that accepts all the strategies of Player $1$ (the system) that
realize $\psi$ under the environment model $\modelEnv$. First, one
converts $\psi$ into an equivalent UCW $\aut = (2^\Prop, Q^\aut,
Q_0^\aut, \Delta^\aut, \alpha^\aut)$.
Then, as a direct consequence of the
definition of KLTL realizability:

\begin{proposition}
    Given a complete $\Sigma_1-$labelled $\OOO$-trees $\langle
    T,\tau\rangle$, $\langle
    T,\tau\rangle$ defines a strategy that realizes $\psi$ under
    $\modelEnv$ iff for all infinite branches $\pi$ of $\langle
    T,\tau\rangle$ and all traces $\rho\in\traces(\pi)$, $\rho\in L(\aut)$.
\end{proposition}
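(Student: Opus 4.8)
The plan is to simply unfold the two relevant definitions and observe that they match. On one side, "$\langle T,\tau\rangle$ defines a strategy that realizes $\psi$ under $\modelEnv$" means, by the definition of realizability, that every execution $\rho$ of $\modelEnv$ compatible with the strategy $\lambda$ encoded by $\langle T,\tau\rangle$ satisfies $\exec(\modelEnv,\lambda),\rho\models\psi$; since $\psi$ is an LTL formula (no $K$ operator), the satisfaction relation $R,\rho,i\models\psi$ does not depend on the reference set $R$, so this is equivalent to $\trace_\Prop(\rho)\in L(\aut)$ for every such $\rho$, using the fact that $\aut$ is an equivalent UCW for $\psi$ and that LTL/UCW membership depends only on the trace. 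On the other side, the right-hand condition quantifies over infinite branches $\pi$ of $\langle T,\tau\rangle$ and traces $\rho\in\traces(\pi)$.

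First I would make precise the correspondence between strategies and complete $\Sigma_1$-labelled $\OOO$-trees (already sketched in the text: a node at depth $k$ is a history $a_1^0 o_1 \dots a_1^{k-1} o_k$, and $\tau$ gives the next action). Then I would show the key bijection-of-sets fact: the set of executions of $\modelEnv$ compatible with $\lambda$ is exactly $\bigcup_{\pi}\{\rho\in\exec(\modelEnv)\mid \trace_\Prop(\rho)\in\traces(\pi)\}$ where $\pi$ ranges over infinite branches of $\langle T,\tau\rangle$. Concretely: given a compatible execution $\rho=s_0 s_1\dots$, the induced sequence of actions played by $\lambda$ together with the observations $\obs(s_0)\obs(s_1)\dots$ traces out a unique infinite branch $\pi$ of $T$, and $\rho\in\exec(\modelEnv,\tau(\pi))$, hence $\trace_\Prop(\rho)\in\traces(\pi)$; conversely, any $\rho'$ with $\trace_\Prop(\rho')\in\traces(\pi)$ for some branch $\pi$ is, by construction of $\exec(\modelEnv,\cdot)$, an execution whose action sequence agrees with $\lambda$ on every prefix, i.e. compatible with $\lambda$. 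This is a direct translation between "compatible with $\lambda$" (histories-of-actions-and-observations view) and "lies over some branch of the tree" (the tree view of the same object), so it is essentially bookkeeping.

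Combining the two unfoldings: $\langle T,\tau\rangle$ realizes $\psi$ iff every $\rho$ compatible with $\lambda$ has $\trace_\Prop(\rho)\in L(\aut)$ iff (by the set equality above) for every infinite branch $\pi$ and every $\rho'\in\traces(\pi)$ we have $\rho'\in L(\aut)$, which is precisely the right-hand side. Note there is a harmless notational overload in the statement — $\rho$ there ranges over $\traces(\pi)$, i.e. over traces rather than executions — but since $L(\aut)\subseteq(2^\Prop)^\omega$ and membership is a property of traces, this is exactly what is needed.

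The only genuinely non-routine point, which I would state carefully rather than gloss over, is the fact that the acceptance of $\aut$ on an execution depends only on its $\Prop$-trace, so that one may freely pass between executions of $\modelEnv$ and elements of $\traces(\pi)\subseteq(2^\Prop)^\omega$; this is immediate from $\aut$ reading letters in $2^\Prop$ and from the LTL semantics being trace-based, but it is where the hypothesis "$\psi$ is LTL, not full KLTL" is silently used — for a $K$-formula the reference set $R$ matters and this proposition would fail, which is exactly why Section~\ref{sec:kltl} needs a different construction. Everything else is unwinding definitions, and I do not anticipate a real obstacle.
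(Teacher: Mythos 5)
Your proposal is correct and matches the paper's treatment: the paper states this proposition without a separate proof, calling it ``a direct consequence of the definition of KLTL realizability,'' and your argument is exactly the definitional unfolding that claim presupposes --- identifying $\exec(\modelEnv,\lambda)$ with the union over branches $\pi$ of the executions compatible with $\tau(\pi)$, and using that LTL satisfaction is trace-based and independent of the reference set $R$. Your explicit remark on where the ``$\psi$ is LTL, not KLTL'' hypothesis is used is a useful clarification rather than a deviation.
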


We now show how to construct a universal tree automaton that checks
the property mentioned in the previous proposition, for all branches of
the trees. We use universal transitions to check, on every
branch of the tree, that all the possible traces (possibly uncountably
many) compatible with the sequence of actions in $\Sigma_1$ and
observations in $\OOO$ defined by the branch satisfy $\psi$. 
Based on finite sequences of 
observations that the system has received, it can define its
\emph{knowledge} $I$ of the possible states in which the environment can
be as a subset of states of $S_e$. 
Given an action $a_1\in\Sigma_1$ of the system and some observation
$o\in\OOO$, we denote by $\post_a(I,o)$ the new knowledge that the
system can infer from the observation $o$, the action $a$ and its
previous information $I$. Formally, $\post_{a}(I,o) = \{ s\in S_e \cap o \ |\
\exists a_2\in\Sigma_2, \exists s'\in I \text{ s.t. }(s',a,a_2,s)\in
\Delta_e\}$. 

The states of the universal tree automaton $\TTT$ are pairs of states
of $\aut$ and knowledges plus some extra state $(q_w, \varnothing)$, i.e. $Q = Q^\aut\times
2^{S_e} \cup \{ (q_w, \varnothing)\}$ where $(q_w, \varnothing)$ is added for completeness. The final
states are defined as $\alpha = \alpha^\aut\times 2^{S_e}$ and initial
states as $Q_0 = Q_0^\aut\times S_0$. 
To define the transition relation, let us consider a state 
$q\in Q^\aut$, a knowledge set $I\subseteq S_e$, an action
$a\in\Sigma_1$ and some observation $o\in\OOO$. 
We now define $\Delta((q,I), a, o)$. It could be the case that there is no transition in $\modelEnv$ 
from a state of $I$ to a state of $o$, i.e. $\post_{a}(I,o) =
\varnothing$. In that case, all the paths from the next $o$-node 
of the tree should be accepting. This situation is modelled by going to the extra
state $(q_w,\varnothing)$, i.e. $\Delta((q,I),a,o) =
(q_w,\varnothing)$.

Now suppose that $\post_{a}(I,o)$ is non-empty. Since the automaton must
check that all the traces of $\modelEnv$ that are compatible with
actions of $\Sigma_1$ and observations are accepted by $\aut$,
intuitively, one would define $\Delta((q,I),a,o)$ as the set of
states of the form $(q', \post_{a}(I,o))$ for all states $q'$ such that
there exists $s\in I$ such that $(q,\tau_e(s),q')\in
\Delta^\aut$. However, it is not correct for several reasons. First,
it could be that $s$ has no successor in $o$ for action $a$, and therefore one
should not consider it because the traces up to state $s$ die at the
next step after getting observation $o$. Therefore, one should only
consider states of $I$ that have a successor in $o$. Second, it is not
correct to associate the new knowledge $\post_{a}(I,o)$ with $q'$,
because it could be that there exists a state $s'\in \post_{a}(I,o)$
such that for all its predecessors $s''$ in $I$, there is not
transition $(q,\tau_e(s''), q')$ in $\Delta^\aut$, and
therefore, one would also take into account sequences of
interpretations  of propositions that do not correspond to
any trace of $\modelEnv$.

Taking into account these two remarks, we
define, for all states $q'$, the set $I_{q,q'} = \{ s\in I\ |\
(q,\tau(s),q')\in \Delta^\aut\}$. Then, $\Delta((q,I),a,o)$ is
defined as the set 

$$\Delta((q,I),a,o) = \{ (q', \post_{a}(I_{q,q'}, o))\ |\ \exists s\in
I,\ (q,\tau(s),q')\in \Delta^\aut\}$$

\noindent Note that, since
$\bigcup_{q'\in Q^\aut} \post_a(I_{q,q'}, o) = \post_a(I, o)$
and the
automaton is universal, the system does not have better
knowledge by restricting the knowledge sets. 

\begin{lemma}\label{lemma:LTLReal}
    The LTL formula $\psi$ is realizable in $\modelEnv$ iff
    $\lang(\TTT)\neq \varnothing$. 
\end{lemma}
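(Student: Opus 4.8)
The plan is to prove both directions by relating accepting runs of $\TTT$ on a strategy tree $\langle T,\tau\rangle$ to the property stated in the Proposition that precedes the lemma, namely that every trace compatible with every branch of the tree lies in $L(\aut)$. The key observation is that the information-set component of the states of $\TTT$ tracks exactly the set of reachable states of $\modelEnv$ consistent with the history of actions and observations read along a branch, while the $Q^\aut$-component simulates, universally, all runs of $\aut$ on all traces consistent with that history.

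First I would set up the correspondence precisely. Fix a complete $\Sigma_1$-labelled $\OOO$-tree $\langle T,\tau\rangle$ and consider a branch $\pi$ that spells out $a^0 o^1 a^1 o^2 \dots$ Along this branch, by an easy induction on the length of the prefix, the set of information components appearing at depth $n$ in the run $\langle T_r,\tau_r\rangle$ of $\TTT$, taken together, equals $\post_{a^{n-1}}(\cdots\post_{a^0}(S_0,o^1)\cdots,o^n)$, i.e.\ the set of states $s_n$ such that some execution $s_0\dots s_n\in\exec(\modelEnv)$ has $\obs(s_i)=o^i$ and is compatible with $a^0\dots a^{n-1}$; this is where the identity $\bigcup_{q'}\post_a(I_{q,q'},o)=\post_a(I,o)$ noted in the text is used. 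Moreover, for each such execution and each run $q_0q_1\dots$ of $\aut$ on its trace $\tau_e(s_0)\tau_e(s_1)\dots$, there is a corresponding branch of $\langle T_r,\tau_r\rangle$ passing through states $(q_i,I_i)$ with $s_i\in I_i$ — this is exactly what the sets $I_{q,q'}$ are designed to guarantee: restricting to predecessors $s$ with $(q,\tau_e(s),q')\in\Delta^\aut$ ensures that the sequence of propositional labels read by the branch of $\TTT$ is genuinely the trace of a real execution of $\modelEnv$. Conversely, every infinite branch of $\langle T_r,\tau_r\rangle$ (other than one that gets absorbed into $(q_w,\varnothing)$, which contributes no accepting-state visits) arises this way from some execution of $\modelEnv$ and some run of $\aut$ on its trace.

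With this correspondence in hand both directions follow. If $\psi$ is realizable, take a realizing strategy tree; by the Proposition every trace compatible with every branch is in $L(\aut)=\lang_{uc}(\aut)$, so every run of $\aut$ on every such trace visits $\alpha^\aut$ finitely often; pulling this back through the correspondence, every infinite branch of the run of $\TTT$ visits $\alpha=\alpha^\aut\times 2^{S_e}$ finitely often (branches through $(q_w,\varnothing)$ trivially so), hence the run is accepting and $\langle T,\tau\rangle\in\lang(\TTT)$, so $\lang(\TTT)\neq\varnothing$. For the converse, if $\langle T,\tau\rangle\in\lang(\TTT)$, its (unique) run is accepting, so again by the correspondence, for every branch $\pi$ and every trace $\rho\in\traces(\pi)$, every run of $\aut$ on $\rho$ visits $\alpha^\aut$ finitely often, i.e.\ $\rho\in L(\aut)$; by the Proposition $\langle T,\tau\rangle$ is a realizing strategy, so $\psi$ is realizable in $\modelEnv$.

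I expect the main obstacle to be the bookkeeping in the correspondence lemma sketched in the second paragraph: one must carefully check that the splitting of the information set $I$ into the pieces $I_{q,q'}$ does not lose any execution of $\modelEnv$ (covered by the stated union identity) and does not create spurious propositional sequences (covered by the defining condition $(q,\tau_e(s),q')\in\Delta^\aut$ in $I_{q,q'}$), and that states $s$ with no $a$-successor in $o$ are correctly dropped — the text already flags exactly these two subtleties. The rest is a routine induction on branch prefixes plus König's lemma / a compactness argument to lift the finite-prefix statement to infinite branches. One should also remember to treat the boundary case $\traces(\pi)=\varnothing$ (a branch along which $\modelEnv$ has no compatible execution), where the branch of $\TTT$ enters $(q_w,\varnothing)$ and the Proposition's universally quantified condition is vacuously true; this is the role of the extra completing state and it must be accounted for in both directions.
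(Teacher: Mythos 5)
Your proposal is correct and follows essentially the same route as the paper's proof: both directions are reduced to a correspondence between infinite branches of the run of $\TTT$ on the strategy tree and pairs (execution of $\modelEnv$, run of $\aut$ on its trace), using the union identity $\bigcup_{q'}\post_a(I_{q,q'},o)=\post_a(I,o)$, the filtering role of $I_{q,q'}$, and the sink $(q_w,\varnothing)$ for dead branches. Your explicit mention of the K\"onig's-lemma step needed to pass from finite prefixes to infinite executions is a point the paper's own proof treats only implicitly.
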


Moreover, it is known that if a UCT has a non-empty language, then
it accepts a tree that is the unfolding of a finite graph, or
equivalently, that can be represented by a Moore machine. Therefore 
if $\psi$ is realizable, it is realizable by a finite-memory
strategy.

\vspace{-4mm}
\section{Safraless procedure for positive KLTL synthesis}\label{sec:kltl}
\vspace{-2mm}

In this section, we extend the construction of Section \ref{sec:ltl}
to the positive fragment of KLTL. Positive formulas are 
defined by the following grammar:

\vspace*{-2mm}
\begin{center} 
$\varphi ::= p \mid \neg p \mid \varphi \wedge \varphi \mid \varphi \vee \varphi \mid \bigcirc \varphi \mid \Box \varphi \mid  K \varphi \mid \varphi \UUU \varphi$\vspace*{-0.5mm}
\end{center}
Note that this fragment is equivalent to the fragment of \KLTL in which 
formulas with the knowledge operator $K$ occurring under an even number of negations. 
This is obtained by straightforwardly pushing the negations down towards the atoms.
 We denoted this fragment of KLTL by \pKLTL.

%

\vspace*{-2mm}
\paragraph{\textbf{Sketch of the construction}} Given a \pKLTL\ formula
$\varphi$ and an environment model $\modelEnv =
(\Prop,\Sigma_1,\Sigma_2,S_e, S_0,\Delta_e,\tau_e)$, we show how to construct a UCT $\TTT$ such that
$\lang(\TTT)\neq \varnothing$ iff $\varphi$ is realizable in
$\modelEnv$. The construction is compositional and follows, for the
basic blocks, the construction of Section \ref{sec:ltl} for LTL
formulas. The main idea is to replace subformulas of the form 
$K\gamma$ by fresh atomic propositions $k_\gamma$ so that we get an LTL
formula for which the realizability problem can be transformed into
the emptiness of a UCT. The
realizability of the subformulas $K\gamma$ that have been replaced by
$k_\gamma$ is checked by branching universally to a UCT for $\gamma$,
 constructed as in Section \ref{sec:ltl}. Since transitions are
universal, this will ensure that all the infinite branches of the
tree from the current node where a new UCT has been triggered also
satisfy $\gamma$. The UCTs we construct are
defined over an extended alphabet that contains the new atomic
propositions, but we show that we can safely project the final UCT on the
alphabet $\Sigma_1$. The assumption on positivity
of KLTL formulas implies that there is no subformulas of the form
$\neg K\gamma$. The rewriting of subformulas by fresh atomic
propositions cannot be done in any order. We now describe it formally.

 %
%
%


We inductively define a sequence of formulas associated with
$\varphi$ as: $\varphi^0 = \varphi$ and, for all $i>0$, $\varphi^i$ is
the formula $\varphi^{i-1}$ in which the \textit{innermost}
subformulas $K \gamma$ are replaced by fresh atomic propositions
$k_\gamma$. Let $d$ be the smallest index such that $\varphi^d$ is an
LTL formula (in other words, $d$ is the maximal nesting level of $K$
operators). Let $\KKK$ denote the set of new atomic propositions, i.e., 
$\KKK = \bigcup_{i = 0}^d \{ k_\gamma \mid K \gamma \in \varphi^i \}$,
and let $\Prop' = \Prop \cup \KKK$. Note that by definition of the
formulas $\varphi^i$, for all atomic proposition $k_\gamma$ occurring
in $\varphi^i$, $\gamma$ is an LTL formula over $\Prop'$. 
E.g. if $\varphi = p \rightarrow K(q \rightarrow K r \vee Kz)$ and $\Prop = \{ p,q,r,z \}$, then the sequence of formulas is: 
$\varphi^0 = \varphi$, 
$\varphi^1 = p \rightarrow K(q \rightarrow k_{r} \vee k_{z})$ 
$\varphi^2 = p \rightarrow k_{\gamma} \text{ where } \gamma = q \rightarrow k_{r} \vee k_{z}$. 

Then, we construct incrementally a chain of universal co-B\"uchi tree automata
$\TTT^d,\dots,\TTT^0$ such that $\lang(\TTT^d)\supseteq
\lang(\TTT^{d{-}1})\supseteq\dots\supseteq \lang(\TTT^0)$ and, 
the following invariant is satisfied: for all $i\in\{0,\dots,d\}$, $\TTT^i$ accepts exactly the set of
strategies that realize $\varphi^i$ in $\modelEnv$. Intuitively, the
automaton $\TTT^i$ is defined by adding new transitions in
$\TTT^{i+1}$, such that for all atomic propositions $k_\gamma$
occurring in $\varphi^{i+1}$, $\TTT^i$ will ensure that $K\gamma$ is
indeed satisfied, by branching to a UCT checking $\gamma$ whenever the
atomic proposition $k_\gamma$ is met. Since formulas 
$\varphi^i$ are defined over the extended alphabet $\Prop' = \Prop\cup
\KKK$ and $\modelEnv$ is defined over $\Prop$, we now make clear what
we mean by realizability of a formula $\varphi^i$ in $\modelEnv$. It
uses the notion of \emph{extended model executions} and \emph{extended
  strategies}. 

\vspace*{-2mm}
\paragraph{\textbf{Extended actions, model executions and strategies}} We extend the actions of the
system to $\Sigma'_1 = \Sigma_1 \times 2^\KKK$ (call $e$-actions). Informally, the system
plays an $e$-action $(a,K)$ if it considers formulas $K\gamma$ for all
$k_\gamma\in K$ to be true. An \emph{extended execution}
($e$-execution) of $\modelEnv$ is a infinite sequence $\rho = (s_0,K_0)... \in
(S_e\times 2^\KKK)^\omega$ such that $s_0s_1... \in
\exec(\modelEnv)$. We denote $s_0s_1\dots $ by $\proj_1(\rho)$ and
$K_0K_1\dots$ by $\proj_2(\rho)$. 
The extended labelling function $\tau_e' $
is a function from $S_e\times 2^\KKK$ to $\Prop'$ defined by
$\tau_e'(s,K) = \tau_e(s)\cup K$. The indistinguishability relation
between extended executions is defined, for any two extended
executions $\rho_1,\rho_2$, by $\rho_1\sim \rho_2$ iff
$\proj_1(\rho_1)\sim \proj_1(\rho_2)$ and $\proj_2(\rho_1) = \proj_2(\rho_2)$, i.e., the propositions in
$\KKK$ are visible to the system. We define $\sim_i$ over extended
executions similarly. Given the extended labelling functions and 
indistinguishability relation, the KLTL satisfiability notion
$R,\rho,i\models \psi$ can be naturally defined for a set of
$e$-execution $R$, $\rho\in R$ and $\psi$ a KLTL formula over 
$\Prop' = \Prop\cup \KKK$.

An extended strategy is a strategy defined over $e$-actions, i.e. a function from $(\Sigma'_1\OOO)^*$ to
$\Sigma'_1$. For an infinite sequence $u =
(a_0,K_0)o_0(a_1,K_1)o_1\dots\in (\Sigma'_1\OOO)^\omega$, we define 
$\proj_1(u)$ as $a_0o_0\dots$.  The sequence $u$ defines a set
of compatible $e$-executions $\exec(\modelEnv, u)$ as follows: it is
the set of $e$-executions $\rho = (s_0,K_0)(s_1,K_1)... \in (S_e\times
2^\KKK)^\omega$ such that $\proj_1(\rho)\in \exec(\modelEnv, \proj_1(u))$. 
Similarly, we define for $e$-strategies $\lambda'$ the set
$\exec(\modelEnv, \lambda')$ of $e$-executions compatible with
$\lambda'$. A KLTL formula $\psi$ over $\Prop'$ is realizable in
$\modelEnv$ if there exists an $e$-strategy $\lambda'$ such that for
all runs $\rho\in \exec(\modelEnv, \lambda')$, we have
$\exec(\modelEnv, \lambda'), \rho, 0\models \psi$.

\begin{proposition}\label{prop:basecase}
    There exists an $e$-strategy $\lambda':(\Sigma_1'\OOO)^*\rightarrow
    \Sigma_1'$ realizing $\varphi^0$ in $\modelEnv$ iff there exists a strategy
    $\lambda:(\Sigma_1\OOO)^*\rightarrow \Sigma_1$ realizing 
    $\varphi^0$ in $\modelEnv$.
\end{proposition}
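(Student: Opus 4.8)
The plan is to reduce both implications to a single structural fact about $\pKLTL$ semantics. Call a bijection $f$ from a set $R'$ of $e$-executions of $\modelEnv$ onto a set $R$ of executions of $\modelEnv$ \emph{faithful} if it preserves $\Prop$-labels position by position (which, for $f=\proj_1$, holds because $\tau_e'(s,K)\cap\Prop=\tau_e(s)$) and preserves every indistinguishability relation, i.e. $\rho_1\sim_i\rho_2$ in $R'$ iff $f(\rho_1)\sim_i f(\rho_2)$ in $R$. The claim is: for any faithful $f$, any $\psi\in\pKLTL$ \emph{over $\Prop$}, any $\rho\in R'$ and any $i$, we have $R',\rho,i\models\psi$ iff $R,f(\rho),i\models\psi$. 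This is proved by induction on $\psi$: the cases $p$ and $\neg p$ use that $f$ preserves $\Prop$-labels; the Boolean, $\bigcirc$, $\Box$ and until cases are immediate from the induction hypothesis; and for $K\gamma$ one uses that a $\sim_i$-preserving bijection onto $R$ restricts to a bijection from $\{\rho'\in R'\mid\rho'\sim_i\rho\}$ onto $\{\sigma\in R\mid\sigma\sim_i f(\rho)\}$, so the quantification over $\sim_i$-equivalent (e-)executions transfers, applying the induction hypothesis to $\gamma$. (The argument does not actually use positivity, but $\varphi^0\in\pKLTL$ is all we need.) It then suffices, for each direction, to build a strategy of the other type together with a faithful $\proj_1$.

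For ($\Leftarrow$), given $\lambda$ realizing $\varphi^0$, I would set $\lambda'(h')=(\lambda(\proj_1(h')),\varnothing)$ on every $e$-history $h'$. Every $e$-execution compatible with $\lambda'$ then carries the constant $\KKK$-label $\varnothing$, so $\proj_1$ is a bijection from $\exec(\modelEnv,\lambda')$ onto $\exec(\modelEnv,\lambda)$; since all $\KKK$-labels coincide, $\rho_1\sim_i\rho_2$ inside $\exec(\modelEnv,\lambda')$ holds exactly when $\proj_1(\rho_1)\sim_i\proj_1(\rho_2)$, and $\tau_e'(s,\varnothing)\cap\Prop=\tau_e(s)$; hence $\proj_1$ is faithful and the structural claim gives that $\lambda'$ realizes $\varphi^0$.

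The direction ($\Rightarrow$) is where the real work lies, and I expect it to be the main obstacle: an $e$-strategy $\lambda'$ genuinely consults the $\KKK$-components in its input history, so one cannot merely drop them. The observation to exploit is that $\lambda'$ is deterministic, hence along any $\lambda'$-consistent play the $\KKK$-components are a function of the sequence of observations alone — and that sequence is itself determined by the visible trace of $\proj_1$. Concretely, given $h=a_0o_0\dots a_{n-1}o_{n-1}\in(\Sigma_1\OOO)^*$ I would reconstruct, by induction on its length, the $\lambda'$-consistent $e$-history $h'=(a_0,K_0)o_0\dots(a_{n-1},K_{n-1})o_{n-1}$ with $(a_j,K_j)=\lambda'((a_0,K_0)o_0\dots(a_{j-1},K_{j-1})o_{j-1})$, and set $\lambda(h)=\proj_1(\lambda'(h'))$ (the value of $\lambda$ on histories inconsistent with $\lambda'$ being irrelevant, as only accessible histories matter). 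Then I would verify that $\proj_1$ maps $\exec(\modelEnv,\lambda')$ onto $\exec(\modelEnv,\lambda)$ (surjectivity: tag the states of an execution compatible with $\lambda$ by the $\KKK$-components $\lambda'$ prescribes along it; injectivity: the $\KKK$-component of the $i$-th state of a $\lambda'$-compatible $e$-execution is forced, through $\lambda'$, by the prefix $\obs(s_0)\dots\obs(s_{i-1})$), preserves $\Prop$-labels trivially, and preserves $\sim_i$: if $\proj_1(\rho_1)\sim_i\proj_1(\rho_2)$ they share the same observations up to position $i$, hence the same $\KKK$-prefix, hence $\rho_1\sim_i\rho_2$, the converse being immediate. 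So $\proj_1$ is faithful, and the structural claim yields that $\lambda$ realizes $\varphi^0$, completing the proof.
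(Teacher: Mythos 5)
Your proof is correct and follows essentially the same route as the paper's: both directions are obtained by projecting the $e$-strategy tree onto $\Sigma_1$, respectively padding the strategy tree's labels with $\varnothing$, and exploiting that $\varphi^0$ mentions no proposition of $\KKK$. Your faithful-bijection lemma — in particular the verification that $\proj_1$ preserves $\sim_i$ because the $\KKK$-components along a $\lambda'$-consistent play are determined by the observation prefix — makes rigorous a step the paper's two-line argument leaves implicit, but it is a refinement of the same idea rather than a different approach.
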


\begin{proof}
    Let see $e$-strategies and strategies as $\Sigma'_1-$labelled
    (resp. $\Sigma_1$-labelled) $\OOO$-trees. Given a tree
    representing
    $\lambda'$, we project its labels on $\Sigma_1$ to get
    a tree representing $\lambda$. The strategy $\lambda$ defined in
    this way realises $\varphi^0$, as $\varphi^0$ does not contain
    any occurrence of propositions in $\KKK$. Conversely, given a tree
    representing $\lambda$, we extend its labels with $\emptyset$ to
    get a tree representing $\lambda'$. It can be shown for the
    same reasons that $\lambda'$ realizes $\varphi^0$. \qed


\end{proof}

\vspace*{-2mm}
\paragraph{\textbf{Incremental tree automata construction}}
The invariant mentioned before can now be stated more precisely: for all $i$, $\TTT^i$ accepts the
$e$-strategies $\lambda':(\Sigma_1'\OOO)\rightarrow \Sigma_1'$ that
realise $\varphi^i$ in $\modelEnv$. Therefore, the UCT $\TTT^i$ are
labelled with $e$-actions $\Sigma'_1$. We now explain how they are
constructed.

Since $\varphi^d$ is an LTL formula, we follow the construction of
Section \ref{sec:ltl} to build the UCT $\TTT^d$. Then, we construct $\TTT^i$ from
$\TTT^{i+1}$, for $0\leq i<d$. The invariant tells us that $\TTT^{i+1}$ defines all the
$e$-strategies that realize $\varphi^{i+1}$ in $\modelEnv$. It is only
an over-approximation of the set of $e$-strategies that realize $\varphi^{i}$ in
$\modelEnv$ (and \emph{a fortiori} $\varphi^0$), since the subformulas of $\varphi^{i}$ of the form
$K\gamma$ correspond to atomic propositions $k_\gamma$ in
$\varphi^{i+1}$, and therefore $\TTT^{i+1}$ does not check that they
are satisfied. Therefore to maintain the invariant, $\TTT^i$ is
obtained from $\TTT^{i+1}$ such that whenever an action that contains
some formula $k_\gamma\in sub(\varphi^{i+1})$ occurs
on a transition of $\TTT^{i+1}$, we trigger (universally) a new
transition to a UCT $\TTT_{\gamma,I}$, for the current information set $I$ in $\TTT^{i+1}$,
 that will check that $K\gamma$ indeed holds. 
The assumption on positivity of KLTL formulas is necessary here as we
do not have to check for formulas of the form $\neg K\gamma$, which
could not be done without an involved ``non Safraless'' complementation step.  
Since $\gamma$ is necessarily an LTL formula over $\Prop'$ by definition of
the formula $\varphi^{i+1}$, we can apply the construction of
Section \ref{sec:ltl} to build $\TTT_{\gamma,I}$.

Formally, from the incremental way of constructing the automata
$\TTT^j$ for $j\geq i$,  we know that $\TTT^{i+1}$ has a set of states $Q_{i+1}$ where all
states are of the form $(q,I)$ where $I\subseteq S_e$ is some
knowledge. In particular, it can be verified to be true for the state
space of $\TTT^d$ by definition of  the construction of Section
\ref{sec:ltl}. Let also $\Delta_{i+1}$ be the transition relation of
$\TTT^{i+1}$. For all formulas $\gamma$ such that $k_\gamma$ occurs in
$\varphi^{i+1}$, we let $Q_\gamma$ be the set of states of
$\TTT_{\gamma,I}$ and $\Delta_\gamma$ its set of transitions. 
Again from the construction of Section \ref{sec:ltl}, we know that 
$Q_\gamma = Q\times 2^{S_e}$ where $Q$ is the set of states of a UCW associated with
$\gamma$ (assumed to be disjoint from that of $\TTT^{i+1}$) and $Q_\gamma^0 = Q \times I$.

We define the set of states $Q_i$ of $\TTT^i$ by $Q_{i+1}\cup
\bigcup_{k_\gamma\in sub(\varphi^{i+1})} Q_\gamma$. Its set of
transitions $\Delta_i$ is defined as follows. Assume
w.l.o.g. that there is a unique initial state $q_0\in
Q$ in the UCW $\aut_\gamma$. If $(q',I')\in \Delta_{i+1}((q,I),(a,K),o)$ where $I,I'\subseteq S_e$,
$a\in\Sigma_1$, $K\subseteq \KKK$, $o\in \OOO$ and $k_\gamma\in K$ is such that $k\gamma$ occurs in
$\varphi^{i+1}$, then we let $(q',I')\in \Delta_i((q,I),(a,K),o)$ and
$\Delta_{\gamma}((q_0,I),(a,K),o)\subseteq
\Delta_i((q,I),(a,K),o)$. The whole construction is given in Appendix,
as well as the proof of its correctness. The invariant is satisfied:

\begin{lemma}\label{lem:intermediatereal}
    For all $i\geq 0$, $\lang(\TTT^i)$ accepts the set of $e$-strategies
    that realize $\varphi^i$ in $\modelEnv$. 
\end{lemma}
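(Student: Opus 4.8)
The plan is to proceed by downward induction on $i$ from $d$ to $0$, mirroring the inductive definition of the automata $\TTT^d,\dots,\TTT^0$. The base case $i=d$ is handled by Lemma \ref{lemma:LTLReal}: since $\varphi^d$ is an LTL formula over $\Prop'$, the construction of Section \ref{sec:ltl} applied to $\varphi^d$ (reading e-actions $\Sigma_1'$ as the ``system alphabet'' and treating the propositions in $\KKK$ as visible, so that they are exposed on the e-execution labels) yields a UCT $\TTT^d$ whose language is exactly the set of e-strategies realizing $\varphi^d$ in $\modelEnv$. One should check here that the reformulation of Section \ref{sec:ltl} in the extended setting (e-executions, extended labelling $\tau_e'$, extended indistinguishability) goes through verbatim; this is routine since $\KKK$-propositions behave like extra visible propositions attached to the action component rather than to $S_e$, and the post operator and information sets $I_{q,q'}$ are unchanged.

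For the inductive step, assume the invariant holds for $\TTT^{i+1}$, i.e. $\lang(\TTT^{i+1})$ is exactly the set of e-strategies realizing $\varphi^{i+1}$ in $\modelEnv$. We must show $\lang(\TTT^i)$ is exactly the set of e-strategies realizing $\varphi^i$. The key observation linking $\varphi^i$ and $\varphi^{i+1}$ is that $\varphi^i$ is obtained from $\varphi^{i+1}$ by substituting, for each innermost $k_\gamma$ occurring in $\varphi^{i+1}$, the formula $K\gamma$; equivalently, an e-strategy $\lambda'$ realizes $\varphi^i$ iff it realizes $\varphi^{i+1}$ \emph{and} additionally, along every e-execution $\rho\in\exec(\modelEnv,\lambda')$ and at every position $j$ where the e-action played has $k_\gamma$ in its $\KKK$-component, the epistemic obligation $K\gamma$ genuinely holds at position $j$ (with respect to $R = \exec(\modelEnv,\lambda')$). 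The forward direction ($\subseteq$) uses that any accepting run of $\TTT^i$ restricts to an accepting run of $\TTT^{i+1}$ (since $\Delta_i \supseteq \Delta_{i+1}$ on the old state space and $\TTT^i$ only adds transitions into the fresh $Q_\gamma$ components), so $\lambda'$ realizes $\varphi^{i+1}$; and the extra branches triggered into $\TTT_{\gamma,I}$ whenever $k_\gamma$ is read, by correctness of the LTL-under-imperfect-information construction (Proposition before Lemma \ref{lemma:LTLReal}, instantiated with the current information set $I$ as the ``initial'' knowledge), witness exactly that all traces compatible with the relevant suffixes satisfy $\gamma$ — which is the semantic content of $K\gamma$ holding at that position, because the information set $I$ reached in $\TTT^{i+1}$ is precisely the set of environment states consistent with the observed visible history, hence the set of $\sim_j$-indistinguishable continuations is captured by the subtree rooted there with knowledge $I$. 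The converse ($\supseteq$) is symmetric: given $\lambda'$ realizing $\varphi^i$, combine the accepting run of $\TTT^{i+1}$ (which exists since $\lambda'$ realizes $\varphi^{i+1}$) with, for each triggered $k_\gamma$ at a node with information set $I$, the accepting run of $\TTT_{\gamma,I}$ guaranteed by the LTL lemma because $K\gamma$ holds there; these paste together into an accepting run of $\TTT^i$ by construction of $\Delta_i$.

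The main obstacle is the precise bookkeeping in the inductive step identifying ``the information set $I$ reached by $\TTT^{i+1}$ at a given node'' with ``the set of environment states reachable by some e-execution indistinguishable from the current one up to that position'', and correspondingly matching the language $\lang(\TTT_{\gamma,I})$ — strategies realizing $\gamma$ from knowledge $I$ — with the semantic clause $R,\rho,j\models K\gamma$, which quantifies over $\rho'\in R$ with $\rho\sim_j\rho'$. This requires a soundness/completeness claim about the post-based information-set construction: that after reading an e-action/observation history $h$, the state component set $\{\,I : (q,I)\text{ appears in the run at the node }h\,\}$ has union equal to $\{s_j : \rho=s_0\cdots\in\exec(\modelEnv), \tracev(s_0\cdots s_j)\text{ matches }h\,\}$, together with the remark already made in the text that splitting knowledge across $q'$-components via $I_{q,q'}$ loses nothing for a universal automaton since $\bigcup_{q'}\post_a(I_{q,q'},o)=\post_a(I,o)$. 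Once this correspondence is set up cleanly — essentially a lemma about $\TTT^{i+1}$'s runs that is implicit in Section \ref{sec:ltl} and should be stated explicitly in the appendix — the rest of the induction is a straightforward gluing argument. I would also take care that triggering $\TTT_{\gamma,I}$ at \emph{every} occurrence of $k_\gamma$ (not just the first) is harmless: a universal automaton can re-verify the same epistemic fact redundantly, and the languages are only intersected, so $\lang(\TTT^i)\subseteq\lang(\TTT^{i+1})$ as required by the stated chain of inclusions.
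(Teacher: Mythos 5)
Your overall architecture for the soundness inclusion (downward induction on $i$, base case via Lemma~\ref{lemma:LTLReal} over the extended alphabet, inductive step via the language inclusion $\lang(\TTT^i)\subseteq\lang(\TTT^{i+1})$ plus the observation that the triggered copies of $\TTT_{\gamma,I}$ certify $K\gamma$ wherever $k_\gamma$ is asserted) matches the paper's proof. However, you state this certification step as if it were immediate; the paper isolates it as a separate substitution lemma (Lemma~\ref{th:tild_form}): if a tree is ``fair'' (every asserted $k_\gamma$ is backed by $K\gamma$ actually holding there) and its executions satisfy $\psi$, then they satisfy $\psi^{-1}$, the formula with $k_\gamma$ replaced back by $K\gamma$. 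This needs a structural induction over $\psi$ that uses positivity in the $\neg\psi_1$ case and a monotonicity argument for the temporal and $K$ cases (satisfaction is preserved when the reference set of executions shrinks from $\R{T,\pi,j}$ to $\R{T,\pi,k}$ for $k\geq j$). That induction is the real content of the soundness direction and should not be compressed into ``which is the semantic content of $K\gamma$ holding.''

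The genuine gap is in your converse inclusion. Your pivotal equivalence --- ``an e-strategy $\lambda'$ realizes $\varphi^i$ iff it realizes $\varphi^{i+1}$ and every asserted $k_\gamma$ is backed by $K\gamma$'' --- is false in the forward direction. The $\KKK$-component of an e-strategy is entirely under the system's control and $\varphi^i$ may impose no lower bound on it: for instance if $\varphi^0=K\gamma$ then $\varphi^1=k_\gamma$, and an e-strategy that realizes $K\gamma$ but never asserts $k_\gamma$ realizes $\varphi^0$ without realizing $\varphi^1$; it is therefore rejected by $\TTT^1$ and hence by $\TTT^0$. So you cannot ``combine the accepting run of $\TTT^{i+1}$, which exists since $\lambda'$ realizes $\varphi^{i+1}$'' --- that run need not exist. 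The correct completeness argument (which the paper places in the proof of Corollary~\ref{coro:realiz} rather than in the lemma itself) starts from a plain strategy realizing $\varphi$ and \emph{re-annotates} it, inserting $k_\gamma$ exactly at the nodes where $K\gamma$ holds; the resulting tree realizes every $\varphi^i$ and is fair, hence is accepted by $\TTT^0$. This also shows that the lemma read as a literal set equality over all e-strategies is an overstatement; what is actually proved and used is soundness of $\TTT^i$ together with the existence, for every realizing strategy, of an accepted annotated variant.
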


    From Lemma \ref{lem:intermediatereal}, we know that
    $\lang(\TTT^0)$ accepts the set of $e$-strategies that realize
 $\varphi^0 = \varphi$ in $\modelEnv$. Then by Proposition
 \ref{prop:basecase} we get:

\begin{corollary}\label{coro:realiz}
    The $\pKLTL\ $ formula $\varphi$ is realizable in $\modelEnv$
    iff $\lang(\TTT^0)\neq \varnothing$. 
\end{corollary}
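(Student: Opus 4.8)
The plan is to obtain the corollary as an immediate composition of Lemma~\ref{lem:intermediatereal} and Proposition~\ref{prop:basecase}, read through the definition of realizability; there is essentially no new work to do, all the content having been pushed into those two statements and the incremental construction preceding them. First I would instantiate Lemma~\ref{lem:intermediatereal} at $i = 0$. Since $\varphi^0 = \varphi$ by definition of the sequence $(\varphi^i)_i$, the lemma says that $\lang(\TTT^0)$ is \emph{exactly} the set of $e$-strategies $\lambda' : (\Sigma_1'\OOO)^* \to \Sigma_1'$ that realize $\varphi$ in $\modelEnv$. Consequently, $\lang(\TTT^0) \neq \varnothing$ holds if and only if there exists an $e$-strategy realizing $\varphi$ in $\modelEnv$.

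Next I would invoke Proposition~\ref{prop:basecase}, which asserts that an $e$-strategy realizing $\varphi^0$ in $\modelEnv$ exists if and only if a plain strategy $\lambda : (\Sigma_1\OOO)^* \to \Sigma_1$ realizing $\varphi^0 = \varphi$ in $\modelEnv$ exists. Chaining the two equivalences gives: $\lang(\TTT^0) \neq \varnothing$ iff there is a strategy $\lambda$ for the system with $\exec(\modelEnv, \lambda) \models \varphi$. By the definition of realizability, the right-hand side is precisely the statement that $\varphi$ is realizable in $\modelEnv$, which closes the argument. I would expect to present this in two or three lines of text once the two cited results are in place.

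The one point worth flagging — and it is already discharged, in the proof of Proposition~\ref{prop:basecase} and in the semantics set up for formulas over $\Prop' = \Prop \cup \KKK$ — is the compatibility of the two readings of $\varphi$: as a formula over $\Prop$ evaluated against ordinary executions, and as a formula over $\Prop'$ evaluated against $e$-executions with the extended labelling $\tau_e'$ and extended indistinguishability relation. Because $\varphi = \varphi^0$ contains no proposition of $\KKK$, the values of the $\KKK$-components along an $e$-execution are irrelevant to its truth, so the two readings coincide, and the quantifier ``realizable'' transfers faithfully between the extended and the original setting. Thus no genuine obstacle remains at the level of the corollary; it is pure bookkeeping on top of Lemma~\ref{lem:intermediatereal} and Proposition~\ref{prop:basecase}. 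The real difficulty, namely establishing the invariant of Lemma~\ref{lem:intermediatereal} through the incremental $\TTT^{i+1} \to \TTT^i$ construction and the universal branching to the sub-automata $\TTT_{\gamma,I}$, lies in those earlier results rather than here.
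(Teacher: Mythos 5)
Your derivation is correct given the results you cite, and it coincides with how the main text presents the corollary: instantiate Lemma~\ref{lem:intermediatereal} at $i=0$, note $\varphi^0=\varphi$, and transfer from $e$-strategies to plain strategies via Proposition~\ref{prop:basecase}, with the observation that $\varphi$ mentions no proposition of $\KKK$. However, the paper's actual written proof of the corollary (in the appendix) is structured differently, and the difference is worth flagging. The appendix never establishes the full two-sided invariant of Lemma~\ref{lem:intermediatereal}: what it proves (Proposition~\ref{th:Ti_satisf}, via the ``fair tree'' lemmas) is only the soundness inclusion, namely that every tree accepted by $\TTT^i$ realizes $\varphi^i$. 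The converse inclusion --- every realizing $e$-strategy is accepted by $\TTT^i$ --- is not proved for general $i$; instead the appendix proves the ``realizable $\Rightarrow \lang(\TTT^0)\neq\varnothing$'' direction of the corollary directly: it takes a realizing strategy tree, annotates each node with the propositions $k_\gamma\in\KKK$ at exactly the positions where $K\gamma$ holds, argues that the annotated tree is accepted by $\TTT^d$ (since $\varphi^d$ is an LTL formula over $\Prop'$), and then descends the chain $\TTT^d\supseteq\dots\supseteq\TTT^0$ using the fact that $k_\gamma$ was placed only where $K\gamma$ is true. So your two-line composition is legitimate if Lemma~\ref{lem:intermediatereal} is taken at face value with its full set equality, but the nontrivial content of the corollary's right-to-left direction lives precisely in that completeness half; the paper discharges it inside the corollary's proof (by the annotation construction) rather than inside the lemma, so relying on the lemma as a black box implicitly assumes that this argument has been carried out somewhere.
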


We now let $\TTT_\varphi$ be the UCT obtained by projecting
$\TTT^0$ on $\Sigma_1$. We have:

\begin{theorem}\label{thm:main}
    For any $\pKLTL$ formula $\varphi$, one can construct a UCT 
    $\TTT_\varphi$ such that $\lang(\TTT_\varphi)$ is the set of strategies that realize $\varphi$
    in $\modelEnv$.
\end{theorem}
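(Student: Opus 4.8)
The plan is to read off $\TTT_\varphi$ from the last automaton $\TTT^0$ of the incremental chain by projecting its label alphabet from the $e$-actions $\Sigma'_1=\Sigma_1\times 2^\KKK$ down to the plain actions $\Sigma_1$. By Lemma~\ref{lem:intermediatereal} instantiated at $i=0$, $\lang(\TTT^0)$ is exactly the set of $e$-strategies $\lambda'\colon(\Sigma'_1\OOO)^*\to\Sigma'_1$ that realize $\varphi^0=\varphi$ in $\modelEnv$. So two things remain to be done: (a) pass from $e$-strategies over $\Sigma'_1$ to strategies over $\Sigma_1$ at the level of languages, and (b) certify that the language obtained by this projection is still recognized by a universal co-B\"uchi tree automaton.

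For (a) I would refine the argument of Proposition~\ref{prop:basecase}, following a fixed strategy instead of mere existence. Viewing $e$-strategies and strategies as $\Sigma'_1$- resp.\ $\Sigma_1$-labelled $\OOO$-trees, relabelling a tree by $\proj_1$ sends an $e$-strategy $\lambda'$ realizing $\varphi$ to a strategy $\proj_1(\lambda')$ that still realizes $\varphi$: since $\varphi=\varphi^0$ contains no proposition of $\KKK$, and the indistinguishability relation between executions compatible with $\lambda'$ is already determined by the visible observations alone, neither the temporal nor the epistemic subformulas of $\varphi$ change truth value along the corresponding executions. Conversely, padding a strategy $\lambda$ realizing $\varphi$ with the constant second component $\varnothing$ produces an $e$-strategy that realizes $\varphi$ and whose $\proj_1$ is $\lambda$ (the converse direction of Proposition~\ref{prop:basecase}). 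Together with Lemma~\ref{lem:intermediatereal} this yields $\proj_1(\lang(\TTT^0))=\{\lambda\mid\lambda\text{ realizes }\varphi\text{ in }\modelEnv\}$, so it suffices to produce a UCT recognizing $\proj_1(\lang(\TTT^0))$.

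Step (b) is where the real work lies, and the place I expect a subtlety: universal co-B\"uchi tree automata are \emph{not} closed under existential alphabet projection in general, since a universal automaton cannot make its several copies visiting a given node agree on the guess of the erased label component. The way out is to exploit the structure of $\TTT^0$ inherited from Section~\ref{sec:ltl}: every state of $\TTT^0$ carries a knowledge set over (extended) states of $\modelEnv$, and the $\KKK$-valuation that the label would otherwise supply is already recorded there, being the one declared by the previous $e$-action and hence common to all copies present at the node. Consequently $\TTT_\varphi$ can be taken with the same state space, directions $\OOO$, initial states and co-B\"uchi acceptance condition as $\TTT^0$, its transitions on a letter $a\in\Sigma_1$ being extracted from the transitions of $\TTT^0$ on the $e$-actions consistent with the current state; no genuine nondeterministic choice is introduced, so $\TTT_\varphi$ is a UCT with $\lang(\TTT_\varphi)=\proj_1(\lang(\TTT^0))$. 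Combined with (a), this gives $\lang(\TTT_\varphi)=\{\lambda\mid\lambda\text{ realizes }\varphi\text{ in }\modelEnv\}$, which is Theorem~\ref{thm:main}. The main obstacle in turning this plan into a full proof is precisely making (b) rigorous: checking that erasing the $2^\KKK$ track from the labels of $\TTT^0$ loses no information and leaves the recognized set of (relabelled) trees unchanged.
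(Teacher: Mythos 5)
Your overall route is the same as the paper's: Lemma \ref{lem:intermediatereal} at $i=0$ identifies $\lang(\TTT^0)$ with the set of $e$-strategies realizing $\varphi$; your step (a) is exactly the argument of Proposition \ref{prop:basecase} applied to a fixed strategy (and it is correct: for a fixed $e$-strategy the $2^\KKK$-components along an execution are determined by the observation history, so the indistinguishability classes of compatible executions are unchanged under $\proj_1$, and $\varphi=\varphi^0$ mentions no proposition of $\KKK$); the paper then simply declares $\TTT_\varphi$ to be ``the UCT obtained by projecting $\TTT^0$ on $\Sigma_1$'' with no further justification. You have correctly isolated the one step that is not argued anywhere in the paper.

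However, your proposed resolution of (b) rests on a false premise. The states of $\TTT^0$ are pairs $(q,I)$ with $I\subseteq S_e$ a knowledge set of \emph{environment states}; they do not record the $2^\KKK$-component of the label. That component is a free choice of the labelling, and it matters: declaring $k_\gamma$ at a node helps satisfy $\varphi^d$ (where $k_\gamma$ occurs positively) but also triggers the additional universal branch into $\TTT_{\gamma,I}$, so different choices of $K$ yield genuinely different transition targets from the same state $(q,I)$ on the same pair $(a,o)$. Erasing the $2^\KKK$ track is therefore a true existential projection, under which universal automata are not closed (the several copies of a UCT visiting a node cannot be forced to agree on the erased component), and a UCT cannot instead compute the ``canonical'' annotation (label a node with $k_\gamma$ iff $K\gamma$ actually holds there) on the fly, since that depends on the future of the tree. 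So the claim that ``no genuine nondeterministic choice is introduced'' is unjustified, and the plan as written does not close. Note that the paper's appendix sidesteps this entirely: what it proves, and what the algorithm uses, is Corollary \ref{coro:realiz} ($\varphi$ is realizable iff $\lang(\TTT^0)\neq\varnothing$ over the extended alphabet $\Sigma_1\times 2^\KKK$), after which a single witness $e$-strategy is projected. To obtain the literal statement of Theorem \ref{thm:main} --- a UCT over $\Sigma_1$ recognizing exactly the realizing strategies --- an additional argument is needed; alternatively the statement should be weakened to the emptiness equivalence, which is all that is used downstream.
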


The number of states of $\TTT_\varphi$ is 
(in the worst-case) $2^{|S_e|}.(2^{|\varphi^d|} + \sum_{k_\gamma \in
  \KKK} 2^{|\gamma|})$, and since $|\varphi^d| + \sum_{\gamma \in
  \KKK} |\gamma|$ is bounded by $|\varphi|$, the number of states of 
$\TTT_\varphi$ is $O(2^{|S_e| + |\varphi|})$. 


\vspace{-2mm}
\section{Antichain Algorithm}\label{sec:antichains}
\vspace{-2mm}

In the previous sections, we have shown how to reduce the problem of 
checking the realizability of some \pKLTL\ formula $\varphi$ to the
emptiness of a UCT $\TTT_\varphi$ (Theorem \ref{thm:main}). In this
section, we describe an antichain symbolic algorithm to test the
emptiness of $\TTT_\varphi$.

It is already known from \cite{Filiot11}
that checking emptiness of the language defined by a UCT $\TTT$ can be
reduced to checking the emptiness of $\lang_{uc,B}(\TTT)$ for a sufficiently
large bound $B$, which in turn can be reduced to solving a safety
game. Clearly, for all $B\geq 0$, if $\lang_{uc,B}(\TTT)\neq
\emptyset$, then $\lang_{uc}(\TTT)\neq \emptyset$. This has led to an
incremental algorithm by starting with some small bound $b$ and the
experiments have shown that in general, a small bound $b$ is necessary
to conclude for realizability of an LTL formula (transformed into the
emptiness of a UCT). We also exploit this idea in our implementation
and show that for \pKLTL\ specifications that we considered, this
observation still holds: small bounds are enough.

In \cite{Filiot11}, it is shown that the safety games can be solve
on-the-fly without constructing them explicitly, and that the fixpoint
algorithm used to solve these safety games could be optimized by using
some antichain representation of the sets constructed during the
fixpoint computation. Rather than using the algorithm of
\cite{Filiot11} as a black box, we study the state space of the safety
games constructed from the UCT $\TTT_\varphi$ and show that they are
also equipped with a partial order that allows one to get more compact
antichain representations. We briefly recall the reduction of
\cite{Filiot11}, the full construction of the safety games is given in
Appendix.

Given a bound $b\geq 0$ and a UCT $\TTT$, the idea is to construct a
safety game $G(\TTT,b)$ such that Player $1$ has a winning strategy in
$G(\TTT,b)$ iff $L_{uc,b}(\TTT)$ is non-empty. The game $G(\TTT,b)$ is
obtained by extending the classical automata subset construction with
counters which count, up to $b$, the maximal number of times all the
runs, up to the current point, have visited accepting states. If $Q$
is the set of states of $\TTT$, the set of states of the safety game 
$G(\TTT,b)$ is all the functions $F : Q\rightarrow
\{-1,0,\dots,b+1\}$. 
The value $F(q) = -1$ means that no run have
reached $q$ and $F(q)\in\{0,\dots,b\}$ means that the maximal number
of accepting states that has been visited by the runs reaching $q$ is $F(q)$.
The safe states are all the functions $F$ such that $F(q)\leq b$
for all $q\in Q$. The set of states can be partially ordered by the
pairwise comparison between functions and it is shown that the 
sets of states manipulated by the fixpoint algorithm are downward
closed for this order.

Consider now the UCT $\TTT_\varphi$ constructed from the \pKLTL\ 
formula $\varphi$. Its state space is of the form $Q\times 2^{S_e}$
where $S_e$ is the set of states of the environment, because the
construction also take into account the knowledge the system has from
the environment. Given a bound $b$, the state space of the safety
game $G(\TTT_\varphi, b)$ is therefore functions from $Q\times
2^{S_e}$ to $\{-1,\dots,b+1\}$. However, we can reduce
this state space thanks to the following result:

\begin{proposition}\label{prop:knowledgebis}
For all runs $\langle T_r, \tau \rangle$ of $\TTT_\varphi$ on some tree
$T$, for all branches $\pi,\pi'$ in $T_r$ of the same length such that
they follow the same sequence of observations,  if
$\tau(\pi) = (q,I)$ and $\tau(\pi') = (q,I')$, then $I = I'$.
\end{proposition}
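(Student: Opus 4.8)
The plan is to prove the statement by induction on the length $n$ of the branches $\pi$ and $\pi'$, exploiting the fact that the second component of each state of $\TTT_\varphi$ is a knowledge set that is updated deterministically along a branch, given the sequence of actions and observations read on that branch. The key observation is that although $\TTT_\varphi$ is obtained by projecting $\TTT^0$ on $\Sigma_1$, and hence several transitions of $\TTT^0$ labelled by $e$-actions $(a,K)$ with the same $a$ get merged, the knowledge component is updated by $\post_a(\cdot,o)$, which does not depend on $K$ at all. So I first need to make precise that a branch $\pi$ of a run $\langle T_r,\tau\rangle$ determines a sequence $(a_1,o_1)(a_2,o_2)\dots(a_n,o_n)$ of actions and observations (read off from the labels of the input tree and the directions recorded in the run tree), and that the knowledge component of $\tau(\pi)$ is entirely a function of this sequence together with the initial knowledge $S_0$.

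The induction goes as follows. For $n=0$, both $\pi$ and $\pi'$ are the root, so $\tau(\pi)=\tau(\pi')$ is an initial state, and by the construction of Section \ref{sec:ltl} the initial knowledge is $S_0$ in all components, hence $I=I'=S_0$. For the inductive step, suppose $\pi = \pi_0 \cdot x$ and $\pi' = \pi'_0\cdot x'$ with $\pi_0,\pi'_0$ of length $n-1$ following the same sequence of observations $o_1\dots o_{n-1}$, and the last directions $x,x'$ corresponding to the same observation $o_n$. By construction of a run, $\tau(\pi_0)=(q_0,J)$ and $\tau(\pi) = (q,\post_{a_n}((J)_{q_0,q},o_n))$ for some $a_n$ and $q$ with an appropriate $\Delta^\aut$-transition, and similarly $\tau(\pi'_0)=(q'_0,J')$, $\tau(\pi')=(q,\post_{a'_n}((J')_{q'_0,q},o_n))$. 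Here I must argue that $a_n = a'_n$: this follows because $\pi$ and $\pi'$ are branches of a run on the \emph{same} tree $T$ and — crucially — because the paper has assumed (see the footnote on accessible histories and the determinism of strategies) that a $\Sigma_1$-labelled $\OOO$-tree fixes a unique action per node; so after the same observation sequence the action played is the same. This is exactly where the subtlety lies and is the step I expect to be the main obstacle: one has to be careful that different branches of the run tree $T_r$ that project to the same observation sequence in $T$ actually sit above the \emph{same} node of $T$, which holds because $T$ is a tree (prefix-closed, one node per word of directions).

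Once $a_n=a'_n$ is established, it remains to show $\post_{a_n}((J)_{q_0,q},o_n) = \post_{a_n}((J')_{q'_0,q},o_n)$. By the induction hypothesis, $J = J'$; and although $q_0$ and $q'_0$ may differ, the remark following the transition definition in Section \ref{sec:ltl} — namely $\bigcup_{q'} \post_a(I_{q,q'},o) = \post_a(I,o)$ — together with the fact that the run is a single coherent run means the relevant predecessor state in the $\aut$-component reaching $q$ is determined by the branch; more robustly, one can simply strengthen the induction hypothesis to also assert $q_0=q'_0$, i.e. that $\tau(\pi)$ and $\tau(\pi')$ agree on \emph{both} components whenever the branches follow the same observation sequence (not just the same length), which is in fact what the deterministic-per-direction nature of the run of a UCT on a fixed input tree gives. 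With that strengthened hypothesis the equality of the knowledge components is immediate from $J=J'$, $q_0=q'_0$ and $a_n=a'_n$, closing the induction. I would present the strengthened statement (agreement on both components) as the main lemma and derive Proposition \ref{prop:knowledgebis} as the special case projecting onto the second component.
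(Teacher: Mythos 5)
Your overall strategy --- induction on the length of the branches, with base case $I=I'=S_0$ and the observation that two branches of $T_r$ following the same observation sequence sit above the same node of the input tree and therefore read the same action --- matches the paper's, which proves the equivalent statement as a lemma about the reachable states $F$ of the safety game, by induction on the length of the play reaching $F$. You have also correctly isolated the real difficulty of the inductive step: the successor knowledge $\post_{a_n}(J_{q_0,q},o_n)$ depends on the predecessor state $q_0$ through the restriction $J_{q_0,q}=\{s\in J\mid (q_0,\tau_e(s),q)\in\Delta^{\aut}\}$, so equality of the knowledge components of the parents does not by itself give equality of the knowledge components of the children.

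However, the repair you propose does not work. You strengthen the induction hypothesis to assert that two branches of $T_r$ over the same observation sequence agree on \emph{both} components of their labels, attributing this to a ``deterministic-per-direction'' behaviour of the run. A run of a \emph{universal} co-B\"uchi tree automaton is not deterministic per direction: from a node of $T_r$ labelled $(q,v)$ the run spawns one child for \emph{every} pair in $\Delta(q,\tau(v))$, and several of these pairs can share the same direction $o$ while carrying different states $q'$. Hence two branches of $T_r$ over the same observation sequence routinely end in states with different $Q^{\aut}$-components --- indeed, if your strengthened hypothesis held, every reachable state of the safety game would have a single non-$(-1)$ entry and the counting subset construction of Section \ref{sec:antichains} would be pointless. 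With the strengthening removed, your own analysis shows the induction does not close: one must still argue that $\post_{a_n}(J_{q_1,q},o_n)=\post_{a_n}(J_{q_2,q},o_n)$ for two distinct predecessors $q_1\neq q_2$ of $q$, and neither your coherence remark nor the identity $\bigcup_{q'}\post_a(I_{q,q'},o)=\post_a(I,o)$ (which quantifies over targets of a fixed source, not over sources of a fixed target) delivers this. For what it is worth, this is precisely the step that the paper's own argument passes over with the assertion that ``the states $q$ and sets $I$ are synchronized along a path'', so you have located the sensitive point exactly; but the resolution you offer rests on a false property of universal tree automata and therefore leaves a genuine gap.
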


In other words, given the same sequence of observations, the tree
automaton $\TTT_\varphi$ computes, for a given state $q$, the same
knowledge.

Based on this proposition, it is clear that reachable states $F$ of
$G(\TTT_\varphi, b)$ satisfy, for all states $q\in Q$ and knowledges
$I,I'$, if $F(q,I)\neq -1$ and $F(q,I')\neq -1$ then $I = I'$. We can
therefore define the state space of $G(\TTT_\varphi, b)$ as the set of
pairs $(F, \overline{K})$ such that $F : Q\rightarrow
\{-1,\dots,b+1\}$ and $\overline{K} : Q\rightarrow 2^{S_e}$ associates
with each state $q$ a knowledge (we let $G(q) = \emptyset$ if $F(q) =
-1$).  This state space is naturally ordered by $(F_1,\overline{K}_1) \preceq
(F_2,\overline{K}_2)$ if for all $q\in Q$, $F_1(q)\leq F_2(q)$ and 
$\overline{K}_1(q) \subseteq \overline{K}_2(q)$. We show that all the
sets manipulated during the fixpoint computation used to solve the
safety games are downward closed for this order and therefore can be
represented by the antichain of their maximal elements. A detailed
analysis of the size of the safety game shows that $G(\TTT_\phi, B)$
is doubly exponential in the size of $\varphi$, and therefore, since
safety games can be solved in linear time, one gets a 2Exptime upper
bound for $\pKLTL\ $ realizability. The technical 
details are given in Appendix.



\vspace{-2mm}
\section{Implementation and Case Studies}\label{sec:implementation}
\vspace{-2mm}

In this section we briefly present our prototype implementation
\Acacia-K for \pKLTL synthesis \cite{AcaciaK}, and provide some interesting examples
on which we tested the tool, on a laptop equipped with an Intel Core
i7 2.10Ghz CPU. \Acacia-K extends the LTL synthesis tool
Acacia+\cite{CAV12}. As Acacia+, the implementation is made
in Python together with C for the low level operations that need efficiency.

As Acacia+, the tool is available in one version working on both Linux
and MacOsX and can be executed using the command-line interface.
As parameters, in addition to the files containing the $\pKLTL$
formula and the partition of the signals and actions, \Acacia-K
requires a file with the environment model. The output of the tool is
a winning strategy, if the formula is realizable, given as a Moore
machine described in Verilog
and if this strategy is small,  \Acacia-K also outputs it as a
picture.

In order to have a more efficient implementation, the construction of the automata for the LTL formulas $\gamma$ 
is made on demand. That is, we construct the UCT $\TTT_\gamma$ incrementally
by updating it as soon as it needs to be triggered from some state
$(q,I)$ which has not been constructed yet.

As said before, the synthesis problem is reduced to the
problem of solving a safety game for some bound $b$ on the number of
visits to accepting states. The tool is incremental: it tests realizability for small
values of $b$ first and increments it as long as it cannot conclude for
realizability. In practice, we have observed, as for classical LTL
synthesis, that small bounds $b$ are sufficient to conclude for
realizability. However if the formula is not realizable, we have to
iterate up to a large upper bound, which in practice is too large to
give an efficient procedure for testing unrealizability. We leave as
future work the implementation of an efficient procedure for testing
unrealizability.


Taking now Example \ref{ex:light}, the strategy provided by the tool
is depicted in Figure \ref{fig:str-light}. 
It asks to play first "toggle" and then keep play "skip" and,
depending on the observation he gets, the system goes in a different
state. The state $0$ is for the start, the state $1$ is the "error"
state in which the system goes if he receives a wrong observation. 
That is, the environment gives an observation even if he cannot go in
a state having that observation. 
Then, if the observation is correct, after playing the action "toggle"
from the initial states $\{ s_1, s_2 \}$, the environment is forced to
go in $s_3$ and by playing the action "skip", the system
forces the environment to stay in $s_3$ and he will know that $t$ is
false. In the strategy, this situation corresponds to the state $2$.
For this example, \Acacia-K constructed a UCT with 31 states 
and the total running time is 0.2s.

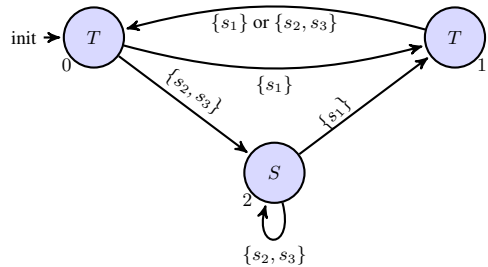
\begin{figure}
\centering
\vspace*{-10pt}
\begin{center}
\begin{tikzpicture}[->,>=stealth',shorten >=1pt,auto,node distance=2.5cm,
                    thick,scale=0.6,every node/.style={scale=0.8}]
  \tikzstyle{every
    state}=[shape=circle,fill=blue!15,text=black,minimum size=1cm]
  \tikzstyle{every edge}=[draw=black]
  \tikzstyle{initial}=[initial by arrow, initial where=left, initial text=init]

  \node[initial,state] (s1) at (0,0) {$T$};
  \node[state] (s3)  at (4,-3) {$S$};
  \node[state] (s2)  at (8,0) {$T$};

  \node[state,draw=none,fill=none] (s1l) at (-0.6,-0.6) {$0$};
  \node[state,draw=none,fill=none] (s3l)  at (3.4,-3.6) {$2$};
  \node[state,draw=none,fill=none] (s2l)  at (8.6,-0.6) {$1$};

\path (s3) [loop below] edge node {$\{s_2, s_3 \}$} (s3) ;

\path (s1) [bend right=15] edge node[below]{$\{s_1 \}$} (s2) ;
\path (s2) [bend right=15] edge node {$\{ s_1 \}$ or  $\{ s_2, s_3 \}$} (s1) ;

\path (s1)  edge node[rotate=-35, above] {$\{ s_2, s_3 \}$} (s3) ;

\path (s3)  edge node[rotate=48] {$\{s_1\}$} (s2) ;

\vspace{-10mm}
\end{tikzpicture}
\end{center}
\vspace*{-10pt}
\caption{Winning strategy synthesized by \Acacia-K for Example \ref{ex:light}}
\label{fig:str-light}
\vspace*{-10pt}
\end{figure}



\begin{example}[The 3-Coin Game]
Another example that we tried is a game played using three coins which
are arranged on a table with either \textit{head} or \textit{tail}
up. The system doesn't see the coins, but knows at each time the
number of tails and heads. Then, the game is infinitely played as follows. 
At the beginning the environment chooses
an initial configuration and then at each round, the system chooses a
coin and the environment has to flip that coin
and inform the system about the new number of heads and tails. 
The objective of the system is to reach, at least once, the state in
which all the coins have the \textit{heads} up and to avoid all the
time the state in which all the coins are \textit{tails}. Depending on
the initial number of tails up, the system may or may not have a
winning strategy.
\end{example}

In order to model this, we considered an environment model whose
states are labelled with atomic propositions $c_1, c_2, c_3$ for the three coins, which are not visible for the system, and two other variables $b_1, b_0$ which are visible and represent the bits encoding the number of \textit{heads} in the configuration. 
The actions of the system are $C_1, C_2, C_3$ with which he chooses a coin and  
the environment has to flip the coin chosen by the system by playing only the action $done$.
A picture of the environment is in Figure \ref{fig:env-coins} from Appendix .

Then, the specification is translated into the \pKLTL formula
$K \Diamond (c_1 \wedge c_2 \wedge c_3 ) \wedge \Box K( c_1 \vee c_2 \vee c_3)$.
Then, assuming that the initial state of the environment has two \textit{heads}, 
the synthesized strategy proposes to "check" the position of every coin by double flipping. 
If after one flip, the winning state is not reached, the system flips back the coin 
and at the third round he chooses another coin to check.
A picture of the strategy can be found in Figure \ref{fig:str-coins} of Appendix.
For this example, \Acacia-K constructs a UCT with 79 states,
synthesises a strategy with 10 states, and the total running time is
3.9s.

Finally, we have designed an example (the \emph{prisoners enigma}). It
is not presented in the paper but can be found in Appendix. 
We have tried 3/4/5/6 prisoners versions (including the protagonist) of this problem, obtaining a one hour timeout for 6 agents. 
The statistics we obtained are the following:
\begin{center}
\begin{tabular}{ | l | l | l | l | l | l | l |}
	
  Pris \# & $|\modelEnv|$ & $|UCT|$ & $|tb-UCT|$ & Aut constr (s) & $|\mathcal{M}_\lambda|$ & Total time(s)\\
  \hline                       
   3 & 21 & 144 & 692 & 1.79s & 12 & 1.87s\\
   4 & 53 & 447 &  2203 & 1.98s & 16 & 13.20s  \\
   5 & 129 & 1310 & 6514 & 199.06s & 20 & 553.45s ($\simeq$ 9 min)\\
   6 & 305 & 3633 & 18125 & 6081.69s & N/A & N/A \\
  \hline  
\end{tabular}
\end{center}
Again, Acacia-K generates strategies that are natural, the same that one would synthesize intuitively. This fact is remarkable itself since, in synthesis, it is often a difficult task to generate small and natural strategies.





\vspace{-4mm}
\section{Conclusion}
\vspace{-2mm}

In this paper, we have defined a Safraless procedure for the synthesis of \pKLTL
specifications in environment with imperfect information. This problem
is \textsf{2ExpTime-c} but we have shown that our procedure, based on
universal co-B\"uchi tree automata, can be implemented efficiently
thanks to an antichain symbolic approach. We have implemented a
prototype and run some preliminary experiments that prove 
the feasibility of our method. While the
UCT constructed by the tool are not small (around 1300 states), our
tool can handle them, although in theory, the safety games could be
exponentially larger than the UCT. Moreover, our tool synthesises small
strategies that correspond to the intuitive strategies we would
expect, although it goes through a non-trivial automata construction.
As a future work, we want to see if \Acacia-K scales well
on larger examples. We also want to extend the tool to handle the full
KLTL logic in an efficient way. This paper is an encouraging (and
necessary) step towards this objective. In a first attempt to
generalize the specifications, we plan to consider assume-guarantees specifications
$K\phi\rightarrow \psi$, where $\phi$ is an LTL formula and $\psi$ a
$\pKLTL$ formula.

\vspace*{-5mm}
\bibliographystyle{abbrv}
\bibliography{lics-bibtex-full}
\nocite{HalpernVardi86,lomuscio-mcmas,kacprzak-penczek-aamas-2005,bulling-jamroga-mu,goranko-drimmelen06,agotnes-synthese06,arnold-rudiments,walukiewicz-tcs-mu-calculus}

\newpage
\appendix
\section{Correctness of the UCT construction for
 LTL formulas(proof of Lemma \ref{lemma:LTLReal})}
\vspace*{-2mm}

\begin{proof}
 If $\psi$ is realizable in $\modelEnv$, there exists a strategy $\lambda: (\Sigma_1 \times \OOO)^* \rightarrow \Sigma_1$ 
for the system such that $\exec(\modelEnv, \lambda) \models \psi$.
 Let's see this strategy as a $\Sigma_1$-labelled $\OOO$-tree $\langle T_\lambda, \tau \rangle$ and prove that $T_\lambda \in \lang(\TTT)$.

The run on $T_\lambda$ of the automaton $\TTT$ is a $Q \times \OOO^*$-labelled $\mathbb{N}$-tree 
$\langle T_r, \tau_r \rangle$.
Therefore, each branch $\pi$ of $T_r$ induces an infinite sequence 
$\tau_r(\pi) = ((q_0, I^{(0)}),\epsilon) ((q_1, I^{(1)}),o_1) ((q_2, I^{(2)}),o_1o_2)...$
where $(q_0,I^{(0)}) \in Q_0$ and $I^{(i+1)} = \post_{\tau(o_1\dots o_i)}
(I^{(i)}_{q_i, q_{i+1}}, o_{i+1})$, by definition of the transition
relation of $\TTT$.

Since $I^{(i)}_{q_i, q_{i+1}}$ is the subset of environment state in
$I^{(i)}$ whose labels can fire transitions from $q_i$ to $q_{i+1}$ in $\aut$,
$r = q_0 q_1 q_2 ...$ is a run in $\aut$ on the traces of the executions $\rho = s_0 s_1 s_2 ... $  
where $s_i \in I_i$.
Hence, from the fact that $\exec(\modelEnv, \lambda) \models \psi$ and $\rho \in \exec(\modelEnv, \lambda) $, $r$ is an accepting run 
and then $\tau_r(\pi)$ visits a finite number of accepting states in $\TTT$. 
Therefore, $T_\lambda \in \lang(\TTT)$ and $\lang(\TTT) \not = \emptyset$.

 In the other direction, if $\lang(\TTT) \not = \emptyset$, there
 exists a $\Sigma_1$-labelled $\OOO$-tree $\langle T_\lambda, \tau
 \rangle$ such that $T_\lambda \in \lang(\TTT)$. We have to prove that
 the strategy $\lambda$ defined by $T_\lambda$ realizes $\psi$, i.e.,
 for all executions $\rho\in \exec(\modelEnv, \lambda)$, we have
$\exec(\modelEnv, \lambda), \rho, 0\models \psi$. Let
$\rho\in\exec(\modelEnv, \lambda)$. Since $\psi$ is an
LTL formula, the sets of executions in which $\psi$ is evaluated does
not matter, and therefore we have to prove that $\rho, 0\models \psi$
(where $\models$ here denotes the classical LTL semantics). In other
words, we have to show that $\trace(\rho)\in \lang(\aut)$, i.e. all
the runs of $\aut$ on $\trace(\rho)$ visit finitely many accepting
states. Let $r = q_0q_1\dots$ be a run of $\aut$ on $\trace(\rho)$.

Let $\langle T_r, \tau_r \rangle$ be the (accepting) run of $\TTT$ on
$\langle T_\lambda, \tau\rangle$. It is a  $Q \times \OOO^*$-labelled
$\mathbb{N}$-tree. By definition of $\exec(\modelEnv, \lambda)$, there
exists a sequence $u = a_0o_0\dots \in \Sigma_1\times \OOO$ such that 
$\rho = s_0 s_1 s_2 ...$ is compatible with $u$. Let define $t$ as the infinite
sequence $t = (q_0, I^{(0)}),\epsilon) ((q_1, I^{(1)}),o_1) ((q_2,
I^{(2)}),o_1o_2)...$ such that $s_i \in I^{(i)}\subseteq S$ for all $i\geq
0$, $I^{(0)}=(q_0,S_0)$ and $I^{(i+1)} = \post_{a_i} (I^{(i)}_{q_i, q_{i+1}}, o_{i+1})$ for
all $i\geq 0$. 
It is easily shown that this sequence is such that
there exists a branch $\pi$ in  $\langle T_r, \tau_r \rangle$ with
$\tau_r(\pi) = t$ since the automata $\aut$ and $\TTT$ are complete 
and $\forall i, \bigcup_{q'\in Q} \post_a(I^{(i)}_{q_i,q'}, o) = \post_a(I^{(i)}, o)$.
Therefore there are finitely many
accepting states in $q_0q_1\dots$, since $T_r$ is accepting.
Therefore $\trace(\rho) \in \lang(\aut)$.



\qed
\end{proof}

\vspace*{-2mm}
\section{Correctness of the UCT construction for $\pKLTL$}
\vspace*{-2mm}

In the following, we prove that the automaton $\TTT^i$ accepts exactly the 
strategies that realize $\varphi^i$ in the extended runs of the environment $\modelEnv$.
A strategy of the system in this case can be seen as 
a $\Sigma_1 \times 2^{\KKK}$-labelled $\OOO$-tree $T$. 
A branch of $T$ is a sequence $\pi = \tilde{a}_1 o_1 \tilde{a}_2 o_2 ...$ where 
$\forall i$, $\tilde{a}_i \in \Sigma_1 \times 2^\KKK$ and $o_i \in \OOO$.

As we mentioned before, we start our construction with the LTL formula $\varphi^d$ 
and construct the universal co-B\"{u}chi tree automaton $\TTT^d = \TTT_{\varphi^d,S_0}$ 
for the LTL formula $\varphi^d$. 
Then, we construct the sequence of automata $\TTT^d$, $\TTT^{d-1}$,... $\TTT^0$ such that $\TTT^i$ checks for the satisfaction of $\varphi^i$ on the branches of accepted trees.
In other words, $\forall T \in \lang(\TTT^i)$,  $\forall \rho \in \R{T}$, we have that 
$$\R{T}, \rho,0 \models \varphi^i$$

In order to prove that the invariant holds, 
let define 
\[
\R{T, \pi, j} = \bigcup_{\pi' \in T : \pi'[0...j]=\pi[0...j]} \R{ \pi' }
\]
be the set of all extended executions that are compatible with the branch $\pi$ of the tree $T$ up to position $j$. 
This runs are indistinguishable from the runs in $\R{\pi}$ up to position $j$. 
Observe that by definition we have $\R{T,\pi,0} = \R{T} $.

\begin{definition}
A $\Sigma_1 \times 2^{\mathbb{K}}$-labelled $\OOO$-tree $T$ is called \textit{fair} with respect to a $K$-positive one-agent KLTL formula $\phi$ if:
\begin{align*}
&\forall \pi = (a_0,K_0) o_0 (a_1,K_1) o_1 ..., 
\forall j, \forall k_{\gamma} \in K_j \cap sub(\phi), \\
&\forall \pi' \text{ s.t. } \pi'[0...j] = \pi[0...j], \\
&\forall \rho \in \R{ \pi' }, \text{ we have } \R{ \pi' }, \rho, j \models \gamma
\end{align*} 
\end{definition}
\vspace*{-5pt}
Intuitively, whenever we have $k_{\gamma}$ in a node of a branch $\pi$ of the tree, the formula $\gamma$ holds at that position on all the other branches that pass by that node. And, because $T$ is a tree, the branches $\pi'$ that pass by this node, have the same prefix up to that position as $\pi$ and then the extended runs compatible with $\pi'$ have the same observations up to there.

\begin{lemma}\label{th:tild_form}
If a $\Sigma_1 \times 2^{\KKK}$-labelled $\OOO$-tree $T$ is \textit{fair} with respect to $\phi$, then
\begin{align*}
\forall \psi \in sub(\phi)&, \forall \pi \text{ branch of } T, \forall j, \forall \rho \in \R{T,\pi,j},  \\
&\text{ if } \R{T,\pi,j}, \rho,j \models \psi, \\
&\text{ then }  \R{T,\pi,j}, \rho,j \models \psi^{-1}
\end{align*}
where $\psi^{-1}$ is obtained from $\psi$ by replacing the atomic propositions $k_{\gamma}$ with the formulas $K \gamma$.
\end{lemma}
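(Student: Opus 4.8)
The statement is an induction on the structure of the subformula $\psi \in sub(\phi)$, carrying the claim along all branches $\pi$ and all positions $j$ simultaneously. The base cases $\psi = p$ and $\psi = \neg p$ are immediate since $p^{-1} = p$: no atomic proposition $k_\gamma$ is rewritten, so $\psi^{-1} = \psi$ and there is nothing to prove. The interesting base case is $\psi = k_\gamma$ itself, for which $\psi^{-1} = K\gamma$. Here I would invoke exactly the \emph{fairness} hypothesis on $T$: if $\R{T,\pi,j}, \rho, j \models k_\gamma$ then $k_\gamma \in K_j$ (the $\KKK$-component of the $e$-action read at position $j$ along $\pi$), and since $k_\gamma \in sub(\phi)$, fairness gives that for every branch $\pi'$ with $\pi'[0\ldots j] = \pi[0\ldots j]$ and every $\rho' \in \R{\pi'}$ we have $\R{\pi'},\rho',j \models \gamma$. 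I then need to reconcile $\R{\pi'}$ with $\R{T,\pi,j}$: by definition $\R{T,\pi,j} = \bigcup_{\pi'[0\ldots j]=\pi[0\ldots j]} \R{\pi'}$, so these are precisely the executions indistinguishable from $\rho$ up to $j$; applying the inductive hypothesis to $\gamma$ (a strict subformula) on each such $\rho'$ within $\R{T,\pi,j}$ lifts $\R{\pi'},\rho',j\models\gamma$ to $\R{T,\pi,j},\rho',j\models\gamma^{-1}$, i.e.\ exactly $\R{T,\pi,j},\rho,j\models K\gamma$ by the semantics of $K$.

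For the inductive cases I would proceed by the connective. Boolean cases $\psi_1 \vee \psi_2$ and $\psi_1 \wedge \psi_2$ are routine: $(\psi_1\star\psi_2)^{-1} = \psi_1^{-1}\star\psi_2^{-1}$, and the evaluation set $\R{T,\pi,j}$ is the same for the whole formula and its immediate subformulas, so the induction goes through directly. For $\bigcirc\psi_1$: if $\R{T,\pi,j},\rho,j\models\bigcirc\psi_1$ then $\R{T,\pi,j},\rho,j{+}1\models\psi_1$; here I must shift the evaluation set, noting that $\R{T,\pi,j+1} \subseteq \R{T,\pi,j}$ (more observations fixed $\Rightarrow$ fewer compatible executions), and argue that monotonicity of the relevant formulas under shrinking the execution set lets me apply the inductive hypothesis at position $j{+}1$ with set $\R{T,\pi,j+1}$, then re-enlarge. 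The cases $\Box\psi_1$ and $\psi_1\UUU\psi_2$ are handled analogously by unfolding into a (possibly infinite) conjunction/disjunction of shifted instances and applying the shifted inductive hypothesis at each position.

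The main obstacle — and the step I would spend the most care on — is the interaction between the $K$ operator and the \emph{changing evaluation set} $\R{T,\pi,j}$. The KLTL semantics of $K$ is relative to a fixed set $R$ of executions, but here the set is indexed by $(\pi,j)$ and shrinks as $j$ grows; I must verify that the positivity of $\phi$ (no $K$ under an odd number of negations, $\Box$ but no unrestricted negation except on atoms) makes every subformula \emph{monotone} in the execution set, so that "$\R{T,\pi,j'}, \rho, j' \models \psi_1$'' for $j' > j$ transfers up and down between the relevant sets without distortion. I would state and prove this monotonicity as an auxiliary claim (by a parallel structural induction) before attacking the cases above, since it is needed for every temporal case and is precisely where the $K$-positivity assumption earns its keep. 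A secondary bookkeeping point is to be careful that $\R{T,\pi,j}$ really is the equivalence class of $\rho$ under $\sim_j$ restricted to $\R{T}$, so that the semantics of $K$ at position $j$ "sees'' exactly the branches $\pi'$ sharing the prefix $\pi[0\ldots j]$ — this is what makes the fairness hypothesis line up with the definition of $K$.
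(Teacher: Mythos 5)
Your overall architecture matches the paper's: structural induction on $\psi$, the fairness hypothesis discharging the $k_\gamma$ base case, and a set-shifting argument for the temporal operators. However, the justification you propose for the critical step --- moving between $\R{T,\pi,j}$ and $\R{T,\pi,j'}$ for $j'>j$ --- does not work as stated. Monotonicity of $K$-positive formulas under \emph{shrinking} the execution set is sound (fewer executions for $K$ to quantify over), and it lets you pass from $\R{T,\pi,j},\rho,j'\models\psi_1$ to $\R{T,\pi,j'},\rho,j'\models\psi_1$ so that the inductive hypothesis applies at position $j'$. The problem is your ``re-enlarge'' step: after obtaining $\R{T,\pi,j'},\rho,j'\models\psi_1^{-1}$ you must recover $\R{T,\pi,j},\rho,j'\models\psi_1^{-1}$, and $\psi_1^{-1}$ is again $K$-positive, so enlarging the set adds executions that every $K$ subformula must now quantify over --- monotonicity points the wrong way, and $K$-positivity does not rescue this direction.

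The fact that does rescue it --- and which the paper's proof uses --- is the one you relegate to ``secondary bookkeeping'': every execution in $\R{T,\pi,j}\setminus\R{T,\pi,j'}$ is distinguishable up to position $j'$ (hence up to every later position) from every execution in $\R{T,\pi,j'}$, since they arise from branches of $T$ whose observation prefixes already differ before $j'$. Consequently, for $\rho\in\R{T,\pi,j'}$ and any $k\geq j'$, the $\sim_k$-class of $\rho$ is the same whether computed inside $\R{T,\pi,j'}$ or inside $\R{T,\pi,j}$, so satisfaction of \emph{any} KLTL formula at positions $\geq j'$ coincides for the two sets. It is this exact invariance, not monotonicity, that you need in both directions; with it in place of your auxiliary claim, your case analysis goes through. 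A further small point: in the case $\psi=k_\gamma$ the target $\psi^{-1}$ is $K\gamma$ with $\gamma$ kept verbatim; applying the inductive hypothesis to $\gamma$ would yield $\gamma^{-1}$, which is not what is asked. What you actually need is that $\gamma$ is an LTL formula over $\Prop'$, so its satisfaction is independent of the ambient execution set, and the fairness hypothesis together with the definition of $\R{T,\pi,j}$ as the union of the $\R{\pi'}$ gives $K\gamma$ directly.
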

\vspace*{-7pt}
\begin{proof}
Let fix a branch $\pi = (a_0,K_0) o_0 (a_1, K_1) o_1 (a_2, K_2) o_2 ...$ of $T$. Then, the proof is by induction on the structure of $\psi$.

\begin{itemize}
\item if $\psi = p \in \Prop$, $\psi^{-1} = \psi$ and then, $\R{T,\pi,j}, \rho, j \models p$. 

\item if $\psi = k_{\gamma}$, since $\R{T,\pi,j}, \rho, j \models k_{\gamma}$, $k_{\gamma} \in \proj_2(\rho[j]) =  K_j$. 
 Then, because $k_{\gamma} \in sub(\psi)$, we have $k_{\gamma} \in K_j \cap sub(\psi)$.

Because $T$ is a fair tree with respect to $\phi$ and $\psi \in sub(\phi)$, $\forall \pi' \in T$ s.t.$\pi[0...j] = \pi'[0...j]$, for all runs $ \rho' \in \R{\pi'}$, we have that $\R{T},\rho',j \models \gamma$. Then, since $\R{T,\pi,j} =  \bigcup_{\pi' \text{ s.t. } \pi[0...j] = \pi'[0...j]} \R{\pi'}$,  we have $\R{T,\pi,j}, r, j \models K\gamma$.

\item if $\psi = \psi_1 \wedge \psi_2$, $\R{T,\pi,j}, \rho, j \models \psi_1 \wedge \psi_2$. That is,  $\R{T,\pi,j}, \rho, j \models \psi_1$ and $\R{T,\pi,j}, \rho, j \models \psi_2$. From the induction, we have $\R{T,\pi,j}, \rho, j \models \psi^{-1}_1$ and $\R{T,\pi,j}, \rho, j \models \psi^{-1}_2$ which means that $\R{T,\pi,j}, \rho, j \models \psi^{-1}_1 \wedge \psi^{-1}_2$. The proof is the same for $\psi = \psi_1 \vee \psi_2$.

\item if $\psi = \psi_1 \mathcal{U} \psi_2$,
$\R{T,\pi,j}, \rho, j \models \psi_1 \mathcal{U} \psi_2$. This means that $\exists j' \geq j$ s.t. $\forall j \leq k < j'$,  $\R{T,\pi,j}, \rho, k \models \psi_1$ and $\R{T,\pi,j}, \rho, j' \models \psi_2$.

From the fact that $\R{T,\pi,k} \subseteq R{T,\pi,j}$ and $\R{T,\pi,j'} \subseteq \R{T,\pi,j}$, we have that $\forall j\leq k < j'$,  $\R{T,\pi,k}, \rho, k \models \psi_1$ and $\R{T,\pi,j'}, \rho, j' \models \psi_2$.

 Then, from the inductive hypothesis, 
 $\forall j\leq k < j'$,  $\R{T,\pi,k}, \rho, k \models \psi^{-1}_1$ and $\R{T,\pi,j'}, \rho, j' \models \psi^{-1}_2$ 
 which means that
 $\forall j\leq k < j'$,  $\R{T,\pi,j}, \rho, k \models \psi^{-1}_1$ and $\R{T,\pi,j}, \rho, j' \models \psi^{-1}_2$. 
 This is because $\R{T,\pi,j}$ is the set of runs that are consistent with $\pi$ from position $0$ to position $j$ and $\R{T,\pi,j'}$ and $\R{T,\pi,k}$ are subsets of $\R{T,\pi,j}$(since $j \leq k < j'$). 
 Furthermore, all the runs in $\R{T,\pi,j'}$ (and $\R{T,\pi,k}$ respectively) are distinguishable from the rest of words in $\R{T,\pi,j}$. 
 Then, $\R{T,\pi,j}, \rho, j \models \psi^{-1}_1 \mathcal{U} \psi^{-1}_2$.

\item if $\psi = \bigcirc \psi_1$, $\R{T,\pi,j}, \rho, j \models \bigcirc \psi_1$ $\R{T,\pi,j}, \rho, j+1 \models  \psi_1$. 
Because $\R{T,\pi,j+1} \subseteq \R{T,\pi,j}$ and from the induction hypothesis, we have $\R{T,\pi,j+1}, \rho, j+1 \models \psi^{-1}_1$. Using the same argument as before, $\R{T,\pi,j}, \rho, j+1 \models \psi^{-1}_1$ and then $\R{T,\pi,j}, \rho, j \models \bigcirc \psi^{-1}_1$.

\item if $\psi = \Box \psi_1$, $\R{T,\pi,j}, \rho, j \models \Box \psi_1$. Then,$\forall k \geq j$, $\R{T,\pi,j}, \rho, k \models \psi_1$. 
Because $\forall k \geq j$ $\R{T,\pi,k} \subseteq \R{T,\pi,j}$ and from the inductive hypothesis, $\forall k \geq j$, $\R{T,\pi,k}, \rho, k \models \psi^{-1}_1$.
 Using the same argument as before, we get $\R{T,\pi,j}, \rho, k \models \psi^{-1}_1$, $ \forall k\geq j$.
 That is, $\R{T,\pi,j}, \rho, j \models \Box \psi^{-1}_1$

\item if $\psi = K \psi_1$, the fact that $\R{T,\pi,j}, \rho, j \models K \psi_1$ means that $\forall \rho' \in \R{T,\pi,j}$, $\R{T,\pi,j}, \rho', j \models \psi_1$. From the induction hypothesis applied for $\rho'$, we have that $\forall \rho' \in \R{T,\pi,j}$, $\R{T,\pi,j}, \rho', j \models \psi^{-1}_1$. That is, $\R{T,\pi,j}, \rho, j \models K \psi^{-1}_1$.

\item if $\psi = \neg \psi_1$, then $K\gamma$ is not a subformula of $\psi_1$ because $K\gamma$ does not occur under negations in $\psi$. This means that $\psi_1 = \psi^{-1}_1$ and then $\R{T,\pi,j}, \rho, j \models \neg \psi_1$ implies that $\R{T,\pi,j}, \rho, j \models \neg \psi^{-1}_1$.
\end{itemize}

\end{proof}

Then, denoting by $\RX X \pi$ the set of $e$-executions in $\modelEnv$ that are compatible with $\pi$ and start when the current possible states of $\modelEnv$ are $X$, we can prove that:

\begin{lemma}\label{th:fair}
For all  $i \in \{ 0,...,d-1 \}$, $\forall T \in \mathcal{L}(\TTT^i)$, $T$ is fair with respect to $\varphi^{i+1}$. 
\end{lemma}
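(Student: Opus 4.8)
\noindent\emph{Proof plan.}
Fix $i \in \{0,\dots,d-1\}$ and a tree $T \in \mathcal{L}(\TTT^i)$, and let $\langle T_r,\tau_r\rangle$ be the unique (and, by hypothesis, accepting) run of $\TTT^i$ on $T$. To show that $T$ is fair with respect to $\varphi^{i+1}$, I would fix a branch $\pi = (a_0,K_0)o_0(a_1,K_1)o_1\dots$ of $T$, a position $j$, and an atom $k_\gamma \in K_j \cap sub(\varphi^{i+1})$. By the very definition of the formulas $\varphi^\ell$, such a $\gamma$ is an LTL formula over $\Prop' = \Prop\cup\KKK$, so $R,\rho,j\models\gamma$ does not depend on the execution set $R$ and agrees with the classical LTL semantics; this is exactly what lets me reuse the constructions of Section~\ref{sec:ltl} for $\gamma$. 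I would also recall that $\TTT^i$ is obtained from $\TTT^{i+1}$ precisely by adding, whenever an $e$-action whose $\KKK$-component contains such a $k_\gamma$ is read from a state $(q,I)$ of $\TTT^{i+1}$, a universal branch into the automaton $\TTT_{\gamma,I}$ of Section~\ref{sec:ltl} built for $\gamma$ from initial knowledge $I$.

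The heart of the argument is local to the node $x = o_0\dots o_{j-1}$ of $T$ reached along the shared prefix $\pi[0\dots j]$; this node carries the label $(a_j,K_j)$, and its subtree $T_x$ is again a strategy of the system against $\modelEnv$ (started from a finite history). First I would argue that $T_x \in \mathcal{L}(\TTT_{\gamma,I})$ for every knowledge set $I$ labelling node $x$ in the run $\langle T_r,\tau_r\rangle$: since $k_\gamma \in K_j$, the construction of $\TTT^i$ spawns at $x$ a fresh copy of $\TTT_{\gamma,I}$ which then evaluates exactly the subtree $T_x$, and since the global run is accepting so is the portion of it contained in that copy (the accepting states of $\TTT_{\gamma,I}$ being among those of $\TTT^i$); hence that portion is an accepting run of $\TTT_{\gamma,I}$ on $T_x$. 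Next, because the construction of Section~\ref{sec:ltl} is parametric in the initial knowledge, the proof of Lemma~\ref{lemma:LTLReal} applies verbatim to $\TTT_{\gamma,I}$ and gives: every $e$-execution in $\RX{I}{T_x}$ satisfies $\gamma$, i.e. every $e$-execution of $\modelEnv$ starting in a state of $I$ and compatible with a branch of $T_x$ satisfies $\gamma$ (classically). Finally I would use the identity $\bigcup_{q'}\post_a(I_{q,q'},o) = \post_a(I,o)$ that already underlies Proposition~\ref{prop:knowledgebis}, together with a short induction on $j$, to conclude that the union of all knowledge sets $I$ labelling node $x$ in the $\TTT^{i+1}$-part of $\langle T_r,\tau_r\rangle$ is exactly the set $X_j$ of states of $\modelEnv$ reachable along the history $\pi[0\dots j]$.

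Combining these three points finishes the proof. Take any branch $\pi'$ of $T$ with $\pi'[0\dots j] = \pi[0\dots j]$ and any $\rho = (s_0,K_0)(s_1,K_1)\dots \in \R{\pi'}$. Then $\pi'$ passes through $x$, and $s_0\dots s_j$ is a history of $\modelEnv$ consistent with $a_0\dots a_{j-1}$ and $o_0\dots o_{j-1}$, so $s_j \in X_j$, hence $s_j \in I$ for one of the triggered copies $\TTT_{\gamma,I}$; moreover the suffix $(s_j,K_j)(s_{j+1},K_{j+1})\dots$ is an $e$-execution of $\modelEnv$ starting in $s_j$ and compatible with the branch of $T_x$ induced by $\pi'$, so it lies in $\RX{I}{T_x}$. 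By the second point it satisfies $\gamma$, i.e. $\rho,j\models\gamma$ classically, hence $\R{\pi'},\rho,j\models\gamma$. As $\pi$, $j$, $k_\gamma$, $\pi'$ and $\rho$ were arbitrary, $T$ is fair with respect to $\varphi^{i+1}$. The positivity assumption enters only in that every knowledge atom to be discharged along an accepted tree occurs as $k_\gamma \in K_j$ rather than as $\neg k_\gamma$, which is exactly what the universal triggers added in the construction certify.

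I expect the only real difficulty to be bookkeeping rather than a new idea: (i) identifying precisely which knowledge set $\TTT^i$ carries at the trigger node and matching its union with the true reachable set $X_j$, which needs the $\post$-distributivity identity and the fact that the $\aut_{\varphi^d}$-component of $\TTT^{i+1}$ ranges over all relevant automaton states; (ii) moving back and forth between a branch of $T_x$ and the corresponding branch of $T$, and between the suffix of an $e$-execution compatible with $\pi'$ and an $e$-execution compatible with that subtree branch; and (iii) restating Lemma~\ref{lemma:LTLReal} with an arbitrary initial knowledge set, which is immediate from its proof. Once these routine identifications are in place, acceptance of the global run transfers copy by copy to the triggered automata $\TTT_{\gamma,I}$, and this is exactly what delivers fairness.
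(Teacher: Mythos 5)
Your proposal is correct and follows essentially the same route as the paper's proof: acceptance of the global run transfers to the copy of $\TTT_{\gamma,I}$ triggered at the node where $k_\gamma$ is read, that copy accepts exactly the maximal subtree rooted there, and the LTL correctness result of Section~\ref{sec:ltl} (Lemma~\ref{lemma:LTLReal}, with arbitrary initial knowledge) then yields that all compatible executions satisfy $\gamma$ from position $j$ on. Your extra bookkeeping — tracking all knowledge sets $I$ present at the trigger node and using the $\post$-distributivity identity to show their union covers every reachable state $s_j$ — is a detail the paper glosses over (it just writes ``for some $X$''), and spelling it out only strengthens the argument.
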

\begin{proof}
 The proof of this theorem comes directly from the construction of $\TTT^{i}$ from $\TTT^{i+1}$.
 
 Let $ T \in \mathcal{L}(\TTT^{i})$. By the construction, $\TTT^{i}$ is $\TTT^{i+1}$ to which is "plugged" the automata $\TTT_{\gamma,X}$ for the LTL formulas $\gamma$ where $k_{\gamma} \in sub(\varphi^{i+1})$ appears on a transition of $\TTT^{i+1}$.
 
 Therefore, if at position $j$ on a branch $\pi = (a_0,K_0) o_0 (a_1,K_1) o_1 ...$ of $T$ we have $k_{\gamma} \in K_j \cap sub(\varphi^{i+1})$, there also starts the execution of $\TTT_{\gamma,X}$ that accepts the "maximal" subtree of $T$ with the root at position $j$ on branch $\pi$. 
 Let denote this subtree $T_{\pi,j}$ and observe that it is formed from the suffixes $\pi'[j...]$ of all the branches $\pi'$ of T such that $\pi'[0...j] = \pi[0...j]$. 
 Therefore, since $T_{\pi,j} \in \mathcal{L}(\TTT_{\gamma,X})$ for some $X$ and $\gamma$ is an LTL formula, 
using the results in Section 4 we have that $\RX X {\pi'[j...]} \models \gamma $.
 That is, $\forall \pi'$ branch of $T$ such that $\pi'[0...j] = \pi[0...j]$, $\forall \rho \in \R{ \pi'}$, we have $\R{\pi'},\rho,j \models \gamma$. 
 This means that $T$ is fair with respect to $\varphi^{i+1}$.
 
\end{proof}

\begin{proposition}\label{th:Ti_satisf}
For all $i \in \{0,...,d\}$, $\forall T \in \mathcal{L}(\TTT^i)$, $\forall \pi$ branch of $T$, $\forall \rho \in \R{\pi}$,
$$\R{\pi},\rho,0 \models \varphi^i$$ 
\end{proposition}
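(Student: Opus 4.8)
\emph{Overall approach.} The plan is to prove the slightly stronger ``invariant'' form, namely that $\forall T\in\lang(\TTT^i)$, $\forall\rho\in\R{T}$ we have $\R{T},\rho,0\models\varphi^i$, and then recover the stated proposition: since $\varphi^i$ is $K$-positive and $\R{\pi}\subseteq\R{T}$ with $\rho\in\R{\pi}$, satisfaction of a positive formula is preserved when the reference set of executions is shrunk (a routine induction on the formula, the only non-trivial case being $K$, which is monotone because shrinking the set can only add knowledge). I would establish the invariant itself by downward induction on $i$, from $i=d$ down to $i=0$, gluing together the chain inclusion $\lang(\TTT^{i})\subseteq\lang(\TTT^{i+1})$, the induction hypothesis, Lemma~\ref{th:fair} and Lemma~\ref{th:tild_form}.

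\emph{Base case $i=d$.} Here $\varphi^d$ is an LTL formula over $\Prop'$ and $\TTT^d=\TTT_{\varphi^d,S_0}$ is exactly the automaton of Section~\ref{sec:ltl} applied to the extended environment model; by the ``$\lang(\TTT)\neq\varnothing\Rightarrow$ realizability'' direction in the proof of Lemma~\ref{lemma:LTLReal}, every $T\in\lang(\TTT^d)$ is an $e$-strategy all of whose compatible executions satisfy $\varphi^d$, and since $\varphi^d$ is LTL the reference set is irrelevant, so $\R{T},\rho,0\models\varphi^d$.

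\emph{Inductive step $0\le i<d$.} Let $T\in\lang(\TTT^i)$. Since the automata form a decreasing chain, $T\in\lang(\TTT^{i+1})$, so the induction hypothesis gives $\R{T},\rho,0\models\varphi^{i+1}$ for every $\rho\in\R{T}$ (recall $\R{T,\pi,0}=\R{T}$). By Lemma~\ref{th:fair}, $T$ is fair with respect to $\varphi^{i+1}$, hence Lemma~\ref{th:tild_form} applies with $\phi=\psi=\varphi^{i+1}$ and $j=0$ and yields $\R{T},\rho,0\models(\varphi^{i+1})^{-1}$ for every $\rho\in\R{T}$. It then remains to see that this, together with $\R{T},\rho,0\models\varphi^{i+1}$, entails $\R{T},\rho,0\models\varphi^i$: by construction $\varphi^i$ is obtained from $\varphi^{i+1}$ by turning back into $K\gamma$ precisely the atoms $k_\gamma$ introduced at step $i+1$ (those naming the innermost $K$-subformulas of $\varphi^i$), which is a subset of the replacements performed by $(\cdot)^{-1}$; so $\varphi^i$ sits ``between'' $\varphi^{i+1}$ and $(\varphi^{i+1})^{-1}$, and using the truth of the relevant atoms provided by $\R{T},\rho,0\models\varphi^{i+1}$, the truth of the corresponding $K\gamma$'s provided by $\R{T},\rho,0\models(\varphi^{i+1})^{-1}$, and monotonicity of positive formulas, we obtain $\R{T},\rho,0\models\varphi^i$. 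This closes the induction.

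\emph{Main obstacle.} The automata-theoretic content is entirely encapsulated in Lemmas~\ref{th:fair} and~\ref{th:tild_form}, so the delicate part is the bookkeeping: keeping straight the several notions of reference set ($\R{T}$, $\R{\pi}$, $\R{T,\pi,j}$) and the precise syntactic comparison between $\varphi^{i+1}$, $(\varphi^{i+1})^{-1}$ and $\varphi^i$. Both are reconciled only because every formula involved is $K$-positive, so that satisfaction is monotone both under shrinking the reference set and under strengthening an atom $k_\gamma$ into $K\gamma$; getting these monotonicity statements exactly right, and matching the index conventions used in $\sim_i$ and in $\R{T,\pi,j}$, is where the care is needed.
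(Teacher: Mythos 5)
Your proof is correct and follows essentially the same route as the paper's: downward induction on $i$, with the base case handled by the LTL construction (Lemma~\ref{lemma:LTLReal}) and the inductive step combining the chain inclusion $\lang(\TTT^i)\subseteq\lang(\TTT^{i+1})$ with Lemmas~\ref{th:fair} and~\ref{th:tild_form}. The only differences are bookkeeping: you pass from $\R{T}$ down to $\R{\pi}$ via monotonicity of positive formulas under shrinking the reference set, where the paper shuttles between $\R{\pi}$ and $\R{T,\pi,0}$ using a distinguishability argument, and you are in fact more explicit than the paper about why $\models(\varphi^{i+1})^{-1}$ together with $\models\varphi^{i+1}$ yields $\models\varphi^i$ when the two differ on atoms $k_\gamma$ introduced before step $i+1$.
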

\begin{proof}
We do the proof of the theorem by induction on $i$. 
For the base case, if $i=d$, $\TTT^d = \TTT_{\varphi^d,S_0}$ and 
a sample adaptation of the proof of Lemma \ref{lemma:LTLReal}  
yields that $\forall \pi$ a branch of $T \in \lang(\TTT^d)$, we have directly that $\R{\pi} \models \varphi^d$.

For the inductive step, we suppose that the the theorem holds for $\TTT^{i+1}$ and prove it for $\TTT^i$.
From the inductive hypothesis, we have that $\forall T \in \mathcal{L}(\TTT^{i+1})$, $\forall \pi$ branch of $T$, $\forall \rho \in \R{\pi}$, $\R{\pi},\rho,0 \models \varphi^{i+1}$.

But, since $\mathcal{L}(\TTT^{i+1}) \supseteq \mathcal{L}(\TTT^i)$, it is true that $\forall T \in \mathcal{L}(\TTT^{i})$, $\forall \pi$ branch of $T$, $\forall \rho \in \R{\pi}$, we have $\R{\pi},\rho,0 \models \varphi^{i+1}$.

Let fix $T \in \mathcal{L}(\TTT^i)$ and a branch $\pi$ of $T$.
By Theorem \ref{th:fair}, $T$ is fair with respect to $\varphi^{i+1}$. Then, since $\R{\pi} \subseteq \R{T,\pi,0}$ and since all the runs $\rho \in \R{\pi}$ are distinguishable from the other runs from $\R{T,\pi,0} \setminus \R{\pi}$ (because they are compatible with different sequences $\pi'$ of actions $\tilde{a}_i$ and observations $o_i$), we have that $\forall \rho \in \R{\pi}$, $\R{T,\pi,0},\rho,0 \models \varphi^{i+1}$.

Then, by applying Lemma \ref{th:tild_form}, results that $\R{T,\pi,0},\rho,0 \models \varphi^{i}$ and because of the inclusion 
$\R{\pi} \subseteq \R{T,\pi,0}$ and because the runs $\rho \in \R{\pi}$ are distinguishable from the other runs from $\R{T,\pi,0} \setminus \R{\pi}$ ,
 we have that 
$\R{\pi},\rho,0 \models \varphi^i$.
\end{proof}

\vspace*{-5mm}
\section{Proof of corollary \ref{coro:realiz}}
\vspace*{-3mm}

\begin{proof}
From left to right, if $\lang(\TTT^0) \not = \emptyset$, 
there exists a tree $T \in \lang(\TTT^0)$ that, by Theorem \ref{th:Ti_satisf}, 
satisfies the property that $\forall \pi = (a_0,K_0) o_0 (a_1,K_1) o_1 ...$ branch of $T$, $\forall r \in \R{\pi}$,
$\R{\pi},r,0 \models \varphi^0$.  
This means that all paths in the model $\modelEnv$ compatible with the branches of the tree $T$ 
satisfy the formula $\varphi^0=\varphi$. 
That is, since $\varphi$ is a $K$-positive KLTL formula over $2^{\Prop}$, 
$\forall r \in \R{T}$, we have that $\proj_1(\R{T}),\proj_1(r),0 \models \varphi$.

Then, there exists a strategy $\lambda_1 : (\Sigma_1 \times \OOO)^* \rightarrow \Sigma_1$ of the system, 
represented by the tree $T$, that, 
since $\exec (\modelEnv, \lambda_1) = \bigcup_{\pi \in T} \proj_1(\R{\pi})$, 
satisfies the property that 
$\forall \rho \in \exec (\modelEnv, \lambda_1)$, 
we have $\exec (\modelEnv, \lambda_1), \rho, 0 \models \varphi$. 
That is, $\varphi$ is realizable in $\modelEnv$.

From right to left, if $\varphi$ is realizable in $\modelEnv$, 
there exists a strategy $\lambda_1 : (\Sigma_1 \times \OOO)^* \rightarrow \Sigma_1$ of the system
such that $\forall \rho \in \exec (\modelEnv, \lambda_1)$, we have 
$\exec (\modelEnv, \lambda_1), \rho, 0 \models \varphi$. 
This strategy can be seen as a $\Sigma_1$-labelled $\OOO$-tree T. 
Note that any branch $a_0 \xrightarrow{o_0} a_1 \xrightarrow{o_1} a_2 ...$ of $T$ 
can be seen as a sequence $\hat{\pi} = a_0 o_0 a_1 o_1 a_2 o_2 ... $ 
where $a_i \in \Sigma_1 \times \emptyset$ and $o_i \in \OOO$.
Therefore, $\forall r \in \R{\hat{\pi}}$, we have $\R{\hat{\pi}},r,0 \models \varphi$.

Now, we annotate all the nodes $a_j$ of the tree $T$ with fresh atomic propositions $k_{\gamma} \in \KKK$
whenever on the branch $\hat{\pi}$, the formula $K \gamma$ has to be true at position $j$, 
i.e., whenever for all the branches $\hat{\pi'}$ of $T$ such that 
$\hat{\pi}[0...j] = \hat{\pi'}[0...j]$, $\forall r \in \R{\hat{\pi'}}$, 
it is true that $\R{\hat{\pi'}},r,j \models \gamma$.
Then, all branches of the obtained tree $\tilde{T}$ will be of the form 
$\pi = (a_0,K_0) o_0 (a_1,K_1) o_1 (a_2,K_2) o_2 ...$ 
where $K_j = \{ k_{\gamma} \mid K \gamma \text{ is true at position j} \}$. 
Also, $\forall r \in \R{\pi}$, $\R{\pi},r, 0 \models \varphi^d$ thanks to the annotations. 
Therefore, the annotated tree $\tilde{T}$ is accepted by $\TTT^d$ and then, 
by the construction of the chain of over-approximations 
and by the fact that $k_{\gamma}$ appears only where $K \gamma$ is true, 
we have that $\tilde{T}$ is also accepted by $\TTT^0$.
This means that $\lang(\TTT^0) \not = \emptyset$.

\end{proof}

\vspace*{-5mm}
\section{Reduction to Safety Games and Complexity}
\vspace*{-2mm}

In the following, the aim is to reduce the emptiness problem to a safety game between the environment and the system.
Following the approach in \cite{CAV09}, we turn the universal Co-B\"{u}chi tree automaton $\TTT^0$ into an universal B-Co-B\"{u}chi tree automaton $(\TTT^0,B)$(in which at most $B$ accepting states are visited) and check for the emptiness. For doing this, we construct a two-player game $\mathcal{G_\varphi}$ with a safety winning condition. This is done via a determinization of the universal $B$-Co-B\"{u}chi tree automaton.

\vspace*{10pt}
In order to simplify the notations, in the next sections we will use $\TTT = \langle Q,Q_0,\Delta, \alpha, \Sigma_1 \times 2^{\mathbb{K}} \rangle$ instead of $\TTT^0$ since the others automata in the chain of over-approximations are not needed in the following.

\vspace*{-2mm}
\subsection{Reduction to Universal $B$-Co-B\"{u}chi tree automaton }

Before reducing the universal co-B\"{u}chi tree automaton $\TTT$ to a universal $B$-co-B\"{u}chi tree automaton, we mention the fact that a finite-state strategy can be represented by a Moore machine as in \cite{CAV09} where the transition relation is extended to simulate a strategy.

\begin{lemma}\label{lemma:preserv}
Let $\TTT$ be a UCT over $\Sigma$ with n states constructed for the \pKLTL formula $\varphi$ and a strategy $\lambda$ represented by a Moore machine $M_\lambda$ with m states. Then, $T_\lambda \in \mathcal{L}_{uc}(\TTT)$ iff $T_\lambda \in \mathcal{L}_{uc,2nm}(\TTT)$, where the strategy $\lambda$ is viewed as the tree $T_\lambda$.  
\end{lemma}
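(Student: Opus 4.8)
The statement is the standard "bounded visits" lemma adapted from \cite{CAV09,Filiot11} to the present tree-automaton setting, so the plan is to follow that well-trodden argument, keeping track of the fact that here the input tree $T_\lambda$ is not an arbitrary tree but the unfolding of a Moore machine $M_\lambda$ with $m$ states. First I would fix the (unique up to isomorphism) run $\langle T_r, \tau_r\rangle$ of $\TTT$ on $T_\lambda$; since $T_\lambda$ is accepted, this run is accepting, i.e. every infinite branch of $T_r$ visits $\alpha$ finitely often. The goal is to show that on \emph{every} infinite branch the number of visits is in fact bounded by $2nm$, where $n = |Q|$ and $m = |M_\lambda|$. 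The key observation is that a node of $T_r$ on a branch corresponding to a finite history $h\in(\Sigma_1\OOO)^*$ carries a state $q\in Q$ of $\TTT$, and the history $h$ itself is summarized by the state reached in $M_\lambda$ after reading the observation sequence of $h$. Hence a node of the run tree can be abstracted by a pair in $Q\times M_\lambda$, of which there are only $nm$.

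The main step is then a pigeonhole/cycle argument: suppose for contradiction that some infinite branch $\pi$ of $T_r$ visits $\alpha$ strictly more than $2nm$ times. Consider the positions along $\pi$ where an accepting state is visited; there are more than $2nm$ of them, so among the first $2nm+1$ such positions two of them, say at depths giving abstract pairs $(q,p)$ and $(q,p)$ with the same $\TTT$-state $q$ and the same $M_\lambda$-state $p$, must coincide, and moreover we can choose them so that at least one accepting visit lies strictly between them (this is why $2nm$ rather than $nm$ — one factor to return to the same abstract configuration, one factor to also enclose an accepting visit). Because the successor structure of the run tree from a node depends only on the $\TTT$-state and the label of the input node, and the label (action/observation) sequence available from a history is determined by the $M_\lambda$-state, the finite branch segment between these two coincident configurations can be pumped: we splice that loop infinitely often to obtain an infinite branch of $T_r$ that visits $\alpha$ infinitely often, contradicting acceptance. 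This gives the forward direction; the converse direction $\mathcal{L}_{uc,2nm}(\TTT)\subseteq\mathcal{L}_{uc}(\TTT)$ is immediate since a branch visiting $\alpha$ at most $2nm$ times visits it finitely often, and this holds for \emph{every} tree, not just $T_\lambda$.

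The one point requiring care — and the place I would expect the only real friction — is making precise the claim that a node of the run tree $T_r$ is faithfully abstracted by the pair (current $\TTT$-state, current $M_\lambda$-state), and that this abstraction is a genuine bisimulation-like invariant for the pumping: one must check that replacing the history prefix by a pumped prefix that lands in the same $M_\lambda$-state yields, through the definition of $\Delta$ of $\TTT$ (which reads the input-tree label, i.e. the action output by $M_\lambda$, and branches over directions in $\OOO$), exactly the same continuation of the run, and in particular preserves the accepting-state visits inside the pumped segment. Since the run of a UCT on a tree is deterministic given the input tree, and the input tree is regenerated deterministically by $M_\lambda$ from the observation sequence, this goes through, but it is the step where the fact that $\lambda$ is finite-state (and not an arbitrary strategy) is actually used, so I would state it as an explicit sublemma before running the pigeonhole argument.
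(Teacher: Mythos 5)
Your proof is correct and is essentially the paper's argument in unfolded form: the paper considers the product of $M_\lambda$ with $\TTT$ and observes that an accepting run admits no cycle through an accepting state in this product, which bounds the visits by the product's size, while you phrase the same fact as a pigeonhole on abstract configurations $(q,p)\in Q\times M_\lambda$ along a branch followed by a pumping contradiction. The sublemma you flag (that a run-tree node is faithfully summarized by such a pair) is exactly what justifies forming the product in the paper's version, so the two proofs coincide in substance.
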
 
\begin{proof}
It is obvious that if $T_\lambda \in \mathcal{L}_{uc,2nm}(\TTT)$, then  $T_\lambda \in \mathcal{L}_{uc}(\TTT)$.
Now, if $T_\lambda \in \mathcal{L}_{uc}(\TTT)$, intuitively, the infinite runs in $\TTT$ on $T_\lambda$ are accepting. Thar is, each path of the runs on $T_\lambda$ visits finitely many accepting states.
Therefore, in the product of $M_\lambda$ with $\TTT$, there are no cycle visiting accepting states of $\TTT$ which bounds the number of visited accepting states on a path by the number of states in the product. 
\end{proof}

Further, in \cite{KuVa05} is shown that if a UCT automaton with n states is not empty, then it accepts a finite state machine of width bounded by $(2n!)n^{2n}3^n (n+1)/n!$. An improved upper bound result mentioned in \cite{KuPi09} shows that the width reduces to $2n(n!)^2$. 
Then, using this results we can turn the emptiness problem for the UCT automaton $\TTT$ into the emptiness problem of the UBCB automaton $(\TTT, B)$ as follows:

\begin{theorem} \label{th:bound}
Given the UCT $\TTT$ over $\Sigma_1 \times 2^{\mathbb{K}}$ with n states and $B=4n^2(n!)^2$, $\mathcal{L}_{uc}(\TTT) \not = \emptyset$ iff $\mathcal{L}_{uc,B}(\TTT) \not = \emptyset$.
\end{theorem}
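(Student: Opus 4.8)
The plan is to derive Theorem~\ref{th:bound} from the two ingredients that precede it: Lemma~\ref{lemma:preserv}, which bounds the number of visited accepting states along any branch of the run on a strategy tree generated by a Moore machine of $m$ states, and the small-model property for nonempty UCTs quoted from \cite{KuVa05,KuPi09}, namely that a UCT with $n$ states whose language is nonempty accepts the unfolding of some finite-state machine of width at most $2n(n!)^2$. The easy direction is immediate: if $\mathcal{L}_{uc,B}(\TTT)\neq\varnothing$ then, since every $B$-co-B\"uchi accepting run is in particular a co-B\"uchi accepting run, $\mathcal{L}_{uc}(\TTT)\neq\varnothing$. So the substance is the left-to-right implication.

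For that direction I would argue as follows. Assume $\mathcal{L}_{uc}(\TTT)\neq\varnothing$. By the small-model property, there is a strategy $\lambda$ realized by a Moore machine $M_\lambda$ with $m\le 2n(n!)^2$ states whose associated tree $T_\lambda$ satisfies $T_\lambda\in\mathcal{L}_{uc}(\TTT)$. Now apply Lemma~\ref{lemma:preserv} to this particular $T_\lambda$ and $M_\lambda$: it yields $T_\lambda\in\mathcal{L}_{uc,2nm}(\TTT)$. Substituting the width bound $m\le 2n(n!)^2$ gives $2nm\le 2n\cdot 2n(n!)^2 = 4n^2(n!)^2 = B$. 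Since the $B'$-co-B\"uchi condition is monotone in $B'$ (a run visiting at most $2nm$ accepting states on every branch visits at most $B$ when $2nm\le B$), we conclude $T_\lambda\in\mathcal{L}_{uc,B}(\TTT)$, hence $\mathcal{L}_{uc,B}(\TTT)\neq\varnothing$. This closes the equivalence.

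A couple of points deserve care in writing this up. First, Lemma~\ref{lemma:preserv} is stated for a strategy $\lambda$ represented by a Moore machine with $m$ states and gives the bound $2nm$; one must make sure the small-model machine $M_\lambda$ is indeed a Moore machine reading observations and outputting system actions in the sense used by the lemma, which it is by the reduction of synthesis to UCT emptiness in Theorem~\ref{thm:main} (the accepted trees are exactly strategy trees). Second, one should phrase the monotonicity of the $B$-co-B\"uchi condition explicitly, since $2nm$ need not equal $B$ but only be $\le B$: fixing the bound to the larger value $B$ only weakens the acceptance requirement. Third, it is worth noting that the statement uses $B=4n^2(n!)^2$ which corresponds to the improved width bound of \cite{KuPi09} rather than the cruder one of \cite{KuVa05}; I would cite the improved bound at the point where $m$ is introduced so the arithmetic lines up exactly.

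The main obstacle is not conceptual but bookkeeping: making sure that the object to which Lemma~\ref{lemma:preserv} is applied is genuinely a finite-memory strategy tree of the claimed size, i.e., that the small-model theorem for UCTs is being invoked for a tree automaton over the strategy alphabet $\Sigma_1$ (or $\Sigma_1\times 2^{\KKK}$ before projection) and that ``width'' of the accepted regular tree translates into ``number of states of a Moore machine $M_\lambda$'' in the sense Lemma~\ref{lemma:preserv} requires. Once that identification is made, the complexity estimate is a one-line substitution. I would therefore devote most of the written proof to stating the small-model fact precisely, invoking it to produce $M_\lambda$, and only then doing the short arithmetic $2nm\le 4n^2(n!)^2=B$.
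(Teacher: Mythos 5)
Your proposal is correct and follows essentially the same route as the paper: the trivial inclusion $\mathcal{L}_{uc,B}(\TTT)\subseteq\mathcal{L}_{uc}(\TTT)$ for one direction, and for the other the small-model property ($m\le 2n(n!)^2$) combined with Lemma~\ref{lemma:preserv} and the substitution $2nm\le 4n^2(n!)^2=B$. Your added remarks on the monotonicity of the $B$-co-B\"uchi condition and on identifying the width bound with the Moore-machine size are sensible points of care that the paper glosses over, but they do not change the argument.
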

\begin{proof}
If $\mathcal{L}_{uc}(\TTT) \not = \emptyset$, then there exists a regular tree $T_\lambda \in \mathcal{L}_{uc}(\TTT)$ generated by a finite state machine with $m$ states($m<2n(n!)^2$). Then, by the Lemma \ref{lemma:preserv}, $T_\lambda \in \mathcal{L}_{uc,2nm}(\TTT)$ and $\mathcal{L}_{uc,B}(\TTT) \not = \emptyset$. In the other sense, the proof is obvious since $\mathcal{L}_{uc,B}(\TTT) \subseteq \mathcal{L}_{uc}(\TTT)$.

\end{proof}

\vspace*{-5mm}
\subsection{Reduction to Safety Game} 

In the previous subsection we reduced the emptiness problem of the universal Co-B\"{u}chi tree automaton $\TTT$ to the emptiness problem of the universal $B$-Co-Buchi tree automaton $(\TTT,B)$. Further, we show the reduction of the new problem to a safety game. This is done via an determinization of the universal $B$-Co-B\"{u}chi tree automaton $(\TTT,B)$. 
Note that we explain the reduction for self-containess reasons, the construction being very similar to \cite{Filiot11,ScheweF07a}.

In the determinization step, we construct a complete deterministic 0-Co-B\"{u}chi tree automaton by extending the subset construction with counters. The states of the automaton are functions $F$ that count(up to $B+1$) for each state $q \in Q$ the maximum number of accepting states visited by the paths that lead to $q$. Formally, $F:Q \rightarrow \{ -1,0,...B+1 \}$ where $F(q) = -1$ means that no runs on the prefix read so far end in $q$. Also, $0,...,B$ are safe states for which $F(q)=i,0 \leq i \leq B$ means that the maximal number of visits of final states by runs that end in $q$ is $i$, and $B+1$ is the unsafe state where the number of visits to final states of runs that end in $q$ is greater or equal to $B+1$.

Then, the determinization of $(\TTT,B)$ is the universal 0-co-B\"{u}chi tree automaton $Det(\TTT,B) = \{ \mathcal{F}, F_0, \alpha', \delta \}$ where $\mathcal{F}$ is the set of states, $F_0$ is the initial state, $\alpha'$ is the set of final states and $\delta$ is the transition relation with:
\begin{itemize}
\item  $\FFF = \{ F | F:Q \rightarrow \{ -1,0,...,B+1 \} \} $
\item  $\forall (q,I) \in Q, F_0(q,I) = -1$ if $(q,I) \not \in Q_0$ and $F_0(q,I) = ((q,I) \in \alpha)$ otherwise
\item  $\alpha' = \{ F \in \FFF | \exists (q,I)\in Q \text{ s.t. } F(q,I)>B \}$
\item  $\forall o \in \OOO$, $\delta(F,(a,K),o) = (F',o)$ if\\ 
	   \hspace*{10pt}  $F'(q',I') = max \{ min \{ B+1, F(q,I)+ ((q',I')\in \alpha) \mid$\\
       \hspace*{80pt} $ ((q',I'),o) \in \Delta((q,I),(a,K),o) \text{ and } F(q,I) \not = -1 \} \}$
\end{itemize}
where $max(\emptyset)=-1$ and, for all states $(q,I) \in Q$, $((q,I) \in \alpha)=1$ if $(q,I)$ is in $\alpha$ and $0$ otherwise. We say that a state $F$ is unsafe if there exists $(q,I) \in Q$ such that $F(q,I)=B+1$.

\begin{theorem}\label{th:det}
Let $\TTT$ be the UCT for for the $\pKLTL$ formula $\varphi$ and $B \in \mathbb{N}$. Then, $Det(\TTT,B)$ is complete, deterministic and $\mathcal{L}_{uc,0}(Det(\TTT,B)) = \mathcal{L}_{uc,B}(\TTT)$.
\end{theorem}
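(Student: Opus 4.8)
The plan is to separate the two structural claims, which are immediate, from the language identity $\mathcal{L}_{uc,0}(Det(\TTT,B)) = \mathcal{L}_{uc,B}(\TTT)$, which is the real content, and to prove the latter through a counting invariant: each function $F$ reached by $Det(\TTT,B)$ records, coordinatewise and truncated at $B+1$, the maximal number of accepting states visited so far by the runs of $\TTT$.

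\emph{Determinism and completeness} are read off the definition. $Det(\TTT,B)$ has the single initial state $F_0$, and for every $F\in\FFF$, every letter $(a,K)\in\Sigma_1\times 2^{\KKK}$ and every direction $o\in\OOO$, the transition $\delta(F,(a,K),o)=(F',o)$ is a single pair, where $F'$ is the unique function produced by the displayed $\max/\min$ expression. Since $\Delta$ is total and we use $\max(\varnothing)=-1$, this $F'$ is always a well-defined element of $\FFF$; hence $\delta$ is functional and total, so $Det(\TTT,B)$ is deterministic and complete and has exactly one run $\langle T_d,\tau_d\rangle$ on each input tree, whose nodes are indexed by direction words in $\OOO^*$.

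For the \emph{language identity}, fix a $\Sigma_1\times 2^{\KKK}$-labelled $\OOO$-tree $T$ and let $\langle T_r,\tau_r\rangle$ be the run of $\TTT$ on $T$; note that $T_r$ has no leaves, since in our construction $\Delta$ always returns a nonempty set (possibly the sink $(q_w,\varnothing)$). For a direction word $w=o_1\cdots o_n$ and $(q,I)\in Q$, let $\mathrm{val}(w)(q,I)$ be the maximum, over all finite paths $\pi$ from the root of $T_r$ to a node labelled $((q,I),w)$, of the number of visits to $\alpha$ along $\pi$, and $-1$ if there is no such path. The key step is the invariant, proved by induction on $n$, that $\tau_d(w)(q,I)=\min\{B+1,\mathrm{val}(w)(q,I)\}$ for all $(q,I)\in Q$, with $\min\{B+1,-1\}=-1$. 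The base case $n=0$ holds because $F_0$ records exactly which states lie in $Q_0$ and whether they are in $\alpha$, which is precisely $\mathrm{val}(\epsilon)$ (understood as the maximum over the runs started from the several states of $Q_0$, which is what $F_0$ already encodes). For the inductive step one unfolds the definitions of $\delta$ and of the run of $\TTT$: every path reaching $((q',I'),w\,o)$ is a path reaching some $((q,I),w)$ followed by one $\Delta$-transition, labelled with $T$'s label at the node of direction word $w$, in direction $o$ to $(q',I')$, adding $[(q',I')\in\alpha]$ to the count; taking the maximum over the possible predecessors $(q,I)$ is exactly the outer $\max$ of the formula, and replacing each summand by its $\min$ with $B+1$ is harmless because $\mathrm{val}$ is monotone along paths, so truncation commutes with this maximum.

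The conclusion is then immediate: by the definitions of $\alpha'$ and of $\mathrm{val}$, the run $\langle T_d,\tau_d\rangle$ avoids $\alpha'$ — i.e.\ $T\in\mathcal{L}_{uc,0}(Det(\TTT,B))$ — iff $\mathrm{val}(w)(q,I)\le B$ for every direction word $w$ and every $(q,I)\in Q$; and since $T_r$ is leafless and the count of accepting states is monotone along paths, this holds iff no infinite branch of $T_r$ visits more than $B$ accepting states, i.e.\ iff $T\in\mathcal{L}_{uc,B}(\TTT)$. The main obstacle is the counting invariant itself, and within it the bookkeeping that one node of $T$ is in general reached by many nodes of $T_r$, in different states with different accepting-counts, all collapsed into the single function $\tau_d(w)$ — together with checking that truncating intermediate counts at $B+1$ loses no information for deciding ``at most $B$ on every branch''. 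Everything else is routine unfolding of the subset-plus-counters construction, essentially as in \cite{Filiot11,ScheweF07a}.
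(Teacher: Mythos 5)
Your proof is correct and takes essentially the same route as the paper's: both rest on the fact that the subset-plus-counters construction records, truncated at $B+1$, the maximal number of accepting visits over all paths of the run of $\TTT$ reaching a given state. You state this as an explicit inductive invariant and derive both inclusions from it at once (also making explicit the leaflessness of $T_r$ and the commutation of truncation with the maximum), whereas the paper argues the two inclusions separately and leaves the invariant implicit; your version is the more careful of the two.
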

\begin{proof} It is obvious that the constructed automaton is complete by construction 
and from the completeness of $(\TTT,B)$. Also, it is deterministic because of the construction. 
From a state $F$, for one action of the agent and one observation we get only one successor state $F'$.

Then, we prove the language equality by double inclusion.
From right to left, if $\langle T,\tau \rangle$ is a 
$\Sigma_1 \times 2^{\mathbb{K}}{-}labelled$ $\OOO{-}tree$ accepted by $(\TTT,B)$, 
there exists an accepting run on $(T,\tau)$ in $(\TTT,B)$. 
Let it be $\langle T_r,\tau_r \rangle$ such that 
 $\tau_r(\epsilon) \in Q_0 \times \{ \epsilon \}$ and
 $\forall x \in T_r$ s.t. $\tau_r(x) = ((q,I),v)$ and $\tau(v) = (a,K)$,
 if $\Delta((q,I),(a,K))= \{ ((q_1,I_1),o_1),..., ((q_n,I_n),o_n) \}$, 
 then $\forall 0 < j \leq n$, $xj \in T_r$ and $\tau_r(xj) = ((q_j,I_j),v \cdot o_j)$.

Since the run on $T$ is accepting, each branch $\pi$ of $t_\lambda$ induces a infinite sequence 
$\tau_r(\pi) = ((q_0, I_0),\epsilon) ((q_1, I_1),v_1) ((q_2, I_2),v_2)...$
which visits at most $B$ times the final states.
Then, by the construction of $Det(\TTT,B)$ using subset construction on the states with the same observation,
there exists a run on $\langle T,\tau \rangle$ 
which is accepting since the set of final state 
$\alpha' = \{ F \in \mathcal{F} | \exists (q,I)\in Q' \text{ s.t. } F(q,I)>B \}$ 
it is not reached.

From left to right, let $\langle T,\tau \rangle$ be a 
$\Sigma_1 \times 2^{\mathbb{K}}{-}labelled$ $\OOO{-}tree$ accepted by $Det(\TTT,B)$. 
Then, there exists a run on  $\langle T,\tau \rangle$ in $Det(\TTT,B)$. 
Let it be $\langle T_r,\tau_r \rangle$ such that
 $\tau_r(\epsilon) \in F_0 \times \{ \epsilon \}$ and
 $\forall x \in T_r$ s.t. $\tau_r(x) = (F,v)$ and $\tau(v) = (a,K)$,
 if $\delta(F,(a,K))= \{ (F_1,o_1),..., (F_n,o_n) \}$, 
 then $\forall 0 < j \leq n$, $xj \in T_r$ and $\tau_r(xj) = (F_j,v \cdot o_j)$.

Since the run is accepting in the 0-Co-B\"{u}chi tree automaton, it means that 
it doesn't visit any $F$ such that $\exists (q,I)$ for which $F(q,I) \geq B+1$. 
This means that, using the construction of $Det(\FFF,B)$, 
the branches of the run in $(\TTT,B)$ on $\langle T,\tau \rangle$ visit at most $B$ times the final states in $\alpha$.
Therefore, $T \in \lang_{uc,B}(\TTT)$.

\end{proof}

A two-player safety game is played on a game arena. 
To reflect the game point of view, the automaton $Det(\TTT,B)$ 
can be seen as a \textit{game arena} $G(\TTT,B)=(\FFF_E,\FFF,q_0,T=T_1 \cup T_E,Safe)$ 
with $|\FFF| + |\delta|$ states where 
$\FFF_E \cup \mathcal{F}$ is the set of states of the game,
$\FFF$ being the set of states controlled by the system and 
$\mathcal{F}_E$ the set of states controlled by the environment, 
$q_0 = F_0 \in \mathcal{F}$ is the initial state,
$T \subseteq \mathcal{F} \times \mathcal{F}_E \cup \mathcal{F}_E \times \mathcal{F}$ is the transition relation and 
$Safe \subseteq \mathcal{F}$ is the safety winning condition defined by 
\begin{itemize}
\item  $\mathcal{F}_E = \{ F_{a,K} | F \in \FFF \text{ and } (a,K) \in \Sigma_1 \times 2^\KKK \}$
\item  $T_1 = \{ (F, F_{a,K}) | F \in \FFF \text{ and } (a,K) \in \Sigma_1 \times 2^\KKK \}$
\item  $T_E = \{  (F_{a,K},F') | \forall F' \in \delta(F,(a,K)) \}$ 
\item  $Safe = \mathcal{F} \setminus \alpha'$
\end{itemize}
Then, $\FFF_E \cup \mathcal{F}$ is the total set of states of the game and $T = T_1 \cup T_E$ is the transition relation.

 A infinite \textit{play} on $G(\TTT,B)$ is a path $\rho = \rho_0 \rho_1 \rho_2 ... \in (\mathcal{F}\mathcal{F}_E)^\omega$ such that $\forall i \geq 0$, $(\rho_i,\rho_i+1) \in T$. Finite plays are similarly defined and they belong to $(\mathcal{F}\mathcal{F}_E)^*$. 
 A strategy for the system is a mapping $\lambda_1:(\mathcal{F}\mathcal{F}_E)^*\mathcal{F} \rightarrow \mathcal{F}_E$ that maps every finite play $\rho$ whose last state is $\rho_n \in \mathcal{F}$ to a state $\rho_{n+1}$ such that $(\rho_n,\rho_{n+1}) \in T_1$. A strategy for the environment is a mapping $\lambda_2:(\mathcal{F}\mathcal{F}_E)^* \rightarrow \mathcal{F}$ that maps an finite play ending in $\rho_n \in \mathcal{F}_E$ to a state $\rho_{n+1} \in \mathcal{F}$ such that $(\rho_n,\rho_{n+1}) \in T_E$.
 The \textit{outcome} of a strategy $\lambda_1$ of the system is a set $Outcome_{G(\TTT,B)}(\lambda_1)$ of infinite plays $\rho = \rho_0 \rho_1 \rho_2 ... \in (\mathcal{F} \mathcal{F}_E)^\omega$ such that if $\rho_i \in \mathcal{F}$, then $\rho_{i+1} = \lambda_1(\rho_0 ... \rho_i)$.
 A strategy for the system is winning if $Outcome_{G(\TTT,B)}(\lambda_1) \subseteq (Safe \; \mathcal{F}_E)^\omega$.
 
 \begin{theorem}\label{th:win_G}
 $\mathcal{L}_{uc,0}(Det(\TTT,B)) \not = \emptyset$ iff the system has a winning strategy $\lambda_1$ in the game $G(\TTT,B)$.
 \end{theorem}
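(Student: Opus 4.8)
The plan is to exploit the tight correspondence between the trees accepted by $Det(\TTT,B)$ and the strategies of the system in the game arena $G(\TTT,B)$, using crucially the fact that $Det(\TTT,B)$ is complete and deterministic (Theorem~\ref{th:det}). The key observation is that the only choices made by the environment in a play of $G(\TTT,B)$ are the observations $o \in \OOO$, and the successor state $\delta(F,(a,K),o)$ is uniquely determined by the current state $F$, the chosen $e$-action $(a,K)$ and the observation $o$; hence every finite play consistent with a fixed system strategy $\lambda_1$ is entirely determined by the sequence of observations it has produced. A system strategy $\lambda_1$ and a $\Sigma_1 \times 2^{\KKK}$-labelled $\OOO$-tree are therefore two presentations of the same object, and the theorem amounts to translating one into the other while tracking acceptance.

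For the left-to-right direction, I would take a tree $\langle T, \tau\rangle \in \mathcal{L}_{uc,0}(Det(\TTT,B))$ together with its unique (accepting) run $\langle T_r, \tau_r\rangle$, and define $\lambda_1$ so that on a finite play whose observation sequence is $o_1 \dots o_n$ it plays the move from the current state $F$ to $F_{a,K}$ with $(a,K) = \tau(o_1 \dots o_n)$; this is a well-defined system strategy since every pair $(F,F_{a,K})$ is an edge of $T_1$. An induction on the length of plays, using the determinism of $\delta$ and the definition of the run, shows that the $\mathcal{F}$-states visited along any play in $Outcome_{G(\TTT,B)}(\lambda_1)$ are exactly the labels $\tau_r(x)$ along the corresponding branch of $T_r$. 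Since the run is accepting, i.e.\ never visits a state of $\alpha'$, all these states lie in $Safe = \mathcal{F}\setminus\alpha'$, so $\lambda_1$ is winning.

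For the converse, I would take a winning strategy $\lambda_1$ and build a tree $\langle T, \tau\rangle$ by induction on the length of observation histories: $\tau(\epsilon)$ is the $e$-action chosen by $\lambda_1$ at the initial state $F_0$, and $\tau(o_1\dots o_n)$ is the $e$-action chosen by $\lambda_1$ after the unique play consistent with $\lambda_1$ whose observations are $o_1 \dots o_n$ (such a play exists and is unique by the determinism remark above). Running $Det(\TTT,B)$ on $\langle T,\tau\rangle$ yields its unique run, whose node labels are again precisely the $\mathcal{F}$-states occurring in plays consistent with $\lambda_1$; since $\lambda_1$ is winning these are all in $Safe$, so no branch of the run meets $\alpha'$, the run is $0$-co-B\"uchi accepting, and hence $\langle T,\tau\rangle \in \mathcal{L}_{uc,0}(Det(\TTT,B))$.

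The main obstacle is essentially bookkeeping: making precise the back-and-forth between a strategy of the system in $G(\TTT,B)$, which formally reads full game histories in $(\mathcal{F}\mathcal{F}_E)^*\mathcal{F}$, and a $\Sigma_1\times 2^{\KKK}$-labelled $\OOO$-tree whose nodes are observation sequences in $\OOO^*$. The determinism and completeness of $Det(\TTT,B)$ are exactly what collapse the former into the latter, and the one point that requires care is verifying that the induced run tree of the automaton coincides branch by branch with the family of plays generated by the strategy, which is where the definition of $\delta$ in terms of $\Delta$ is used; note that we do not need positional determinacy of safety games here, as both directions are explicit translations.
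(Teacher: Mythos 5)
Your proposal is correct and follows essentially the same route as the paper: both directions rest on the observation that, by determinism and completeness of $Det(\TTT,B)$, a system strategy in $G(\TTT,B)$ and a $\Sigma_1\times 2^{\KKK}$-labelled $\OOO$-tree are the same object, and that the $\mathcal{F}$-states along the plays consistent with the strategy coincide with the labels of the unique run of $Det(\TTT,B)$ on that tree, so ``accepting'' ($\alpha'$ never visited) and ``winning'' (stay in $Safe=\mathcal{F}\setminus\alpha'$) coincide. Your version merely spells out the bookkeeping (indexing plays by observation sequences) that the paper's proof leaves implicit.
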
  
 \begin{proof}
 If $\mathcal{L}_{uc,0}(Det(\TTT,B)) \not = \emptyset$, then there exists a $\Sigma_1 \times 2^{\mathbb{K}}{-}labelled$ $\OOO{-}tree$ $T$ which is accepted by $Det(\TTT,B)$. Then, there exists an accepting run $\langle T_r,\tau_r \rangle$ on $T$ whose paths don't visit the final states in $\alpha'$.
 Then, by the construction of the game $G(\TTT,B)$, there exists a strategy(represented by the tree $T$) which is winning in $G(\TTT,B)$ because all the paths in the game arena that follow it visit exactly the same states in $\mathcal{F}$ as $\langle T_r,\tau_r \rangle$ does and the safe set is defined as $Safe = \mathcal{F} \setminus \alpha'$.
 
 Now, if there exists a winning strategy $\lambda_1$ in $G(\TTT,B)$(which can be seen as a $\Sigma_1 \times 2^{\mathbb{K}}{-}labelled$ $\OOO{-}tree$ $T$), then all the outcomes of $\lambda_1$ will stay in the set $Safe$. Then, by construction of the game arena, there exists a run on $T$ in the automaton $Det(\TTT,B)$ which is also accepting because $Safe = \mathcal{F} \setminus \alpha'$ which means that the run doesn't visit the set $\alpha'$.
 
 \end{proof}

Therefore, by Theorems \ref{thm:main}, \ref{th:bound}, \ref{th:det} and \ref{th:win_G}, we have that the \pKLTL formula $\varphi$ is realizable in the environment model $\modelEnv$ if and only if the system has a winning strategy in the game  $G(\TTT,B)$.

\vspace*{-2mm}
\subsection{Complexity} 

In the following, we study the complexity of the realizability algorithm for the \pKLTL formulas. First, using the way the game arena is constructed, we have the following lemma:

\begin{lemma}\label{lemma:knowledge}
For all states $F \in \mathcal{F}$, if there exist $(q,I), (q,I') \in Q$ such that $F(q,I) \not = -1$ and $F(q,I') \not = -1$, if $I \not = I'$, then $F$ is not reachable from the initial state.
\end{lemma}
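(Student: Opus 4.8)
The claim is essentially Proposition~\ref{prop:knowledgebis} transported to the level of the determinized automaton, so the plan is to unwind the counter construction of $Det(\TTT,B)$ (and hence of the arena $G(\TTT,B)$) and then invoke that proposition. First I would note that, since $Det(\TTT,B)$ is complete and deterministic (Theorem~\ref{th:det}) and by the definitions of $T_1$ and $T_E$, a state $F\in\FFF$ is reachable from $F_0$ in $G(\TTT,B)$ if and only if there is a finite word $w=(a_0,K_0)\,o_0\,(a_1,K_1)\,o_1\cdots(a_{n-1},K_{n-1})\,o_{n-1}$ over $(\Sigma_1\times 2^{\KKK})\cdot\OOO$ along which $\delta$ drives $F_0$ to $F$.

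Next I would characterize the non-$(-1)$ entries of such an $F$ by a straightforward induction on $n$, just unfolding the definition of $\delta$: one gets that $F(q,I)\neq -1$ holds exactly when there is a sequence $(q_0,I_0),\dots,(q_n,I_n)=(q,I)$ with $(q_0,I_0)\in Q_0$ and $((q_{j+1},I_{j+1}),o_j)\in\Delta((q_j,I_j),(a_j,K_j),o_j)$ for every $j<n$. Fixing any complete input tree $T$ that carries the labels $(a_0,K_0),\dots,(a_{n-1},K_{n-1})$ along the branch read by $o_0\cdots o_{n-1}$, such a sequence is precisely a branch of length $n$ of the (unique) run $\langle T_r,\tau_r\rangle$ of $\TTT$ on $T$ that follows the observation sequence $o_0,\dots,o_{n-1}$ and ends in a node labelled $(q,I)$; recall that the second component of a run-node tracks the corresponding input node, so the branch does have length $n$ and follows exactly those observations. (Proposition~\ref{prop:knowledgebis} is stated for $\TTT_\varphi$, but the knowledge component of a state of $\TTT=\TTT^0$ is computed by $\post$ from the $\Sigma_1$-action and the observation only, independently of the $2^{\KKK}$-part of the $e$-action, so the statement applies verbatim to runs of $\TTT$.)

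Finally I would conclude: if $F$ is reachable and both $F(q,I)\neq-1$ and $F(q,I')\neq-1$, the two branches of $T_r$ witnessing these facts have the same length $n$, follow the same observation sequence $o_0\cdots o_{n-1}$, and end in nodes labelled $(q,I)$ and $(q,I')$ respectively, so Proposition~\ref{prop:knowledgebis} forces $I=I'$. Contrapositively, if $I\neq I'$ then no such $w$ exists, i.e., $F$ is not reachable from the initial state. The only mildly delicate point is the bookkeeping that identifies the directions of the run tree with the observations $o_j$ occurring in the play $w$ (together with the remark above reconciling $\TTT$ with $\TTT_\varphi$); everything else is a routine unfolding of the determinization.
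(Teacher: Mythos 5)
Your proof is correct, but it follows a genuinely different decomposition from the one the appendix uses. You characterise the non-$(-1)$ entries of a reachable $F$ as the endpoints of the paths of $\TTT$ that read the action/observation word driving $F_0$ to $F$, realise those paths as same-length, same-observation branches of a run tree, and then let Proposition~\ref{prop:knowledgebis} do all the work; this is essentially the one-line derivation that Section~\ref{sec:antichains} sketches (``Based on this proposition, it is clear that reachable states $F$ \dots satisfy \dots''). The appendix instead gives a self-contained induction on the length of the play, entirely at the level of the counting functions: it assumes the non-$(-1)$ entries of $F_i$ share a common knowledge set and pushes that invariant through the update rule $\delta$, never invoking Proposition~\ref{prop:knowledgebis}. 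Your route buys modularity and makes explicit two points the appendix glosses over --- the correspondence between plays of $G(\TTT,B)$ and branches of run trees, and the reconciliation of $\TTT_\varphi$ (over $\Sigma_1$) with $\TTT=\TTT^0$ (over $\Sigma_1\times 2^{\KKK}$). Its cost is that the entire burden now rests on Proposition~\ref{prop:knowledgebis}, which the paper states but never proves; the appendix's direct induction is the closest thing the paper offers to a justification of that invariant, so a fully self-contained version of your argument would still have to carry out something like that induction, the crux in either presentation being that the knowledge sets produced by $\post$ are determined by the action/observation history together with the $Q^{\aut}$-component.
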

\begin{proof}
We prove the lemma by induction on the length of the path leading to a state $F$.

If $F = F_0$, then since $F(q,I) \not = -1$, we have 
$(q,I) \in Q_0 = \{ (q_0 , S_0) \mid q_0 \in Q_\psi^0 \}$. 
Then, the proposition is true by definition.

Now, suppose that there is reached a state $F_i \in \mathcal{F}$ such that 
for all $(q,I),(q,I') \in Q$ with $F_i(q,I) \not = -1$ and $F_i(q,I') \not = -1$, 
we have $I=I'$. Let this set be denoted by $I_i$. 
We then prove that, after one round in the game, the state $F_{i+1}$ that is reached  
satisfy the property that for all $(q,I),(q,I') \in Q$ such that 
$F_{i+1}(q,I) \not = -1$ and $F_{i+1}(q,I') \not = -1$, we have $I=I'$.

From the construction of the game arena, $F_{i+1}$ is computed from $F_i$ 
using the action $(a,K)$ proposed by the agent and observation $o$ given by the environment. 
That is,
$F_{i+1}(q,I') = max \{ min \{ B+1, F_i(q'',I_i)+ ((q,I')\in \alpha) \mid ((q,I'),o) 
\in \Delta((q'',I_i),(a,K)) \text{ and } F_i(q'',I_i) \not = -1 \} \}$.
From the definition, the states $(q,I')$ for which $F_{i+1}(q,I') \not = -1$ 
are states for which there exists $(q'',I_i)$ with $F_i(q'',I_i) \not = -1$ and 
$I'=Post_a(I^i_{q'',q},o)$ 
Therefore, 
since the state $F_{i+1}$ is uniquely determined by a sequence of actions and observations, 
and the states $q$ and sets $I$ are synchronized along a path, 
we have that all the states $(q,I')$ with $F_{i+1}(q,I') \not = -1$ 
contain the same set $I_{i+1} = I'$ of states of the environment.

\end{proof}

Since we are interested in reachable positions of the safety game only, 
we can just focus on functions $F$ such that for all $(q,I), (q,I') \in Q$, we have $I=I'$. 
We associate to each function $F$ a tuple $\bar{K}$ of sets $I$ corresponding to the states $q$(as in section 6) 
and denote by $\bar{K}(q)$ the set $I$ associated to the state $q$ in $F$. 
Then, the safety game $G(\TTT,B)$ is restricted to reachable states $(F,\bar{K})$.

\begin{lemma}\label{prop:no_states}
The number of states of the safety game 
$G(\TTT,B)$  built for  the \pKLTL formula $\varphi$ in the environment $\modelEnv$
has at most 
$\frac{(B+3)^{2^{|S|+|\varphi|}+1}-1}{B+2} \times (1+ 2^{|S|}) \times (1 + | \Sigma_1 \times 2^{\mathbb{K}} |)$ states.
\end{lemma}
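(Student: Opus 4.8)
The plan is to count the reachable states $(F,\overline{K})$ of the safety game directly from the data defining them. By Lemma~\ref{lemma:knowledge} a reachable state is a pair consisting of a counter function $F : Q \to \{-1,0,\dots,B+1\}$ together with the associated knowledge map $\overline{K} : Q \to 2^S$, where $Q$ is the state set of $\TTT = \TTT^0$. First I would invoke Theorem~\ref{thm:main} (and the remark following it) to bound $|Q|$: the number of states of $\TTT_\varphi$, and hence of $\TTT^0$ before projection, is $O(2^{|S|+|\varphi|})$; write $n = |Q| \le 2^{|S|+|\varphi|}$. The counting then splits into two independent ingredients — the number of possible counter functions $F$ and the number of possible knowledge tuples $\overline{K}$ — plus the game-arena doubling that introduces the environment-controlled copies $F_{a,K}$.

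Next I would bound the number of counter functions. A function $F : Q \to \{-1,0,\dots,B+1\}$ takes one of $B+3$ values on each of the $n$ states, but the relevant refinement is that the knowledge component is already tracked by $\overline{K}$, so for a \emph{reachable} $F$ the pair $(q,I)$ with $F(q,I)\ne -1$ is determined by $q$ alone. Hence $F$ is effectively a function $Q \to \{-1,0,\dots,B+1\}$, of which there are $(B+3)^{n}$; summing the geometric-type bound in the statement, $\sum_{k=0}^{n}(B+3)^{k} = \frac{(B+3)^{n+1}-1}{B+2}$, accounts for the fact that one only counts functions whose support (the states $q$ with $F(q)\ne -1$) can be any subset, giving exactly the first factor $\frac{(B+3)^{2^{|S|+|\varphi|}+1}-1}{B+2}$ once $n$ is replaced by its bound $2^{|S|+|\varphi|}$. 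For the knowledge tuple $\overline{K}$: by Proposition~\ref{prop:knowledgebis} all states $q$ reached along branches following the same observation sequence carry the same knowledge set $I\subseteq S$, so $\overline{K}$ is determined by a single subset of $S$ (the common knowledge at the current observation-history), yielding the factor $1 + 2^{|S|}$ (the $+1$ for the degenerate/empty case corresponding to $F\equiv -1$). Finally, the game arena $G(\TTT,B)$ has, besides the system-controlled states $F$, the environment-controlled states $F_{a,K}$ for each $(a,K)\in\Sigma_1\times 2^{\KKK}$; this multiplies the count by $1 + |\Sigma_1\times 2^{\KKK}|$, giving the third factor.

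Putting these together yields the stated product. The main obstacle is not the arithmetic but justifying that one may restrict to the reduced representation $(F,\overline{K})$ rather than arbitrary functions $Q\times 2^S \to \{-1,\dots,B+1\}$: this is exactly where Lemma~\ref{lemma:knowledge} and Proposition~\ref{prop:knowledgebis} are essential, since naively a state would be a function on $Q\times 2^S$ and the bound would carry an extra tower of exponentials. I would therefore structure the proof as: (i) recall $n = |Q| \le 2^{|S|+|\varphi|}$ from Theorem~\ref{thm:main}; (ii) by Lemma~\ref{lemma:knowledge}, a reachable game state is faithfully encoded as $(F,\overline{K})$ with $F : Q \to \{-1,\dots,B+1\}$ and $\overline{K} : Q \to 2^S$ synchronized; (iii) by Proposition~\ref{prop:knowledgebis}, $\overline{K}$ is determined by one subset of $S$, so there are at most $1+2^{|S|}$ choices; (iv) count the $F$'s by support size to get $\frac{(B+3)^{n+1}-1}{B+2}$; (v) multiply by $1+|\Sigma_1\times 2^{\KKK}|$ for the $F_{a,K}$ states; (vi) substitute $n \le 2^{|S|+|\varphi|}$ and observe the bound is monotone in $n$. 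A minor technical point to check carefully is the exact form of the geometric sum and the placement of the ``$+1$'' terms, matching the off-by-one conventions ($F(q)=-1$ sentinel, empty action set) used in the arena construction.
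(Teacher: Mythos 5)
Your proposal is correct and follows essentially the same route as the paper: reduce reachable game states to pairs $(F,\overline{K})$ via Lemma~\ref{lemma:knowledge}, bound the counter functions by the geometric-sum expression $\frac{(B+3)^{n+1}-1}{B+2}$ with $n\le 2^{|S|+|\varphi|}$, bound the knowledge component by $1+2^{|S|}$, and multiply by $1+|\Sigma_1\times 2^{\KKK}|$ for the environment-controlled copies $F_{a,K}$. The only (minor) divergence is the justification of the factor $1+2^{|S|}$ — the paper counts subsets of individual observations, $1+\sum_{o\in\OOO}2^{|o|}$, before relaxing to $1+2^{|S|}$, whereas you collapse $\overline{K}$ to a single subset of $S$ outright — but both rest on the same single-knowledge-per-reachable-state observation, so this does not affect the bound.
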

\begin{proof}
$\mathcal{F} = \{ F \mid F: Q \rightarrow \{ -1,0,...,B+1 \} \}$. But, according to the previous lemma, 
$\forall (q,I), (q',I') \in Q$ such that $F(q,I) \not = -1$ and $F(q',I') \not = -1$, we have $I = I'$. 
It means that, for a function $F$, there are at most 
$n= |Q| + \sum_{\gamma : \exists i \text{ s.t. } K \gamma \in Sub(\varphi^i) } |Q_\gamma|$ 
states $(q,I)$ for which $F(q,I) \not = -1$. 

Then, because of the way the set $I$ is computed using $Post_a(Y,o)$, excepting the initial state 
that may fall into several observations, there exists $o \in \OOO$ such that $I \subseteq o$.  
This means that there exist $1 + \sum_{o \in \OOO} 2^{|o|}$ sets of states of the environment 
that are reached, where $1$ is for the initial state. 

In conclusion, the number of functions F is 
$[(B+2)^0 + (B+2)^1+ ... (B+2)^n] \times (1 + \sum_{o \in \OOO} 2^{|o|})$. 
That is, $\frac{(B+3)^{n+1}-1}{B+2} \times (1 + \sum_{o \in \OOO} 2^{|o|})$ functions.

Then, for each state $F$, we have $| \Sigma_1 \times 2^{\mathbb{K}} |$ 
states $F_{a,K}$ controlled by the environment. 
This gives a total number of 
$\frac{(B+3)^{n+1}-1}{B+2} \times (1 + \sum_{o \in \OOO} 2^{|o|}) \times (1 + | \Sigma_1 \times 2^{\mathbb{K}} |)$ 
of states of the game  $G(\TTT,B)$.

Now, since $n$ can be bounded by $2^{|S|+|\varphi|}$ and $\sum_{o \in \OOO} 2^{|o|} \leq 2^|S|$, 
we get that the number of states of $G(\TTT, B)$ is bounded by 
$\frac{(B+3)^{2^{|S|+|\varphi|}+1}-1}{B+2} \times (1 + 2^|S|) \times (1 + | \Sigma_1 \times 2^{\mathbb{K}} |)$.
\end{proof}

\begin{proposition}\label{theor:complex}
The realizability of a K-positive KLTL formula $\varphi$ is decidable in 2EXPTIME.
\end{proposition}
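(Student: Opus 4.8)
The plan is to chain together the reductions established so far and then to check that the successive exponential blow-ups, stacked on top of each other, still collapse to a doubly exponential procedure. Concretely, given a $\pKLTL$ formula $\varphi$ and an environment model $\modelEnv$, I would proceed in four steps: first, build the UCT $\TTT_\varphi$ of Theorem~\ref{thm:main}, which has $n = O(2^{|S_e|+|\varphi|})$ states; second, fix the bound $B = 4n^2(n!)^2$ and invoke Theorem~\ref{th:bound} to replace the co-B\"uchi emptiness test for $\TTT_\varphi$ by a $B$-co-B\"uchi emptiness test; third, determinize via Theorem~\ref{th:det} and pass to the safety game $G(\TTT_\varphi,B)$, which by Theorem~\ref{th:win_G} the system wins iff $\lang_{uc,0}(Det(\TTT_\varphi,B))\ne\varnothing$, i.e. iff $\varphi$ is realizable in $\modelEnv$; fourth, solve this safety game by the usual backward attractor fixpoint, which takes time linear in the size of the arena. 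Correctness of the whole pipeline is exactly the chain of equivalences already proved (Theorems~\ref{thm:main}, \ref{th:bound}, \ref{th:det}, \ref{th:win_G}), so what remains is purely the complexity bookkeeping.

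For the complexity estimate the point is to track how the parameters compose. Since $n$ is only singly exponential in $|S_e|+|\varphi|$, Stirling's estimate gives $n! = 2^{O(n\log n)} = 2^{2^{O(|S_e|+|\varphi|)}}$, hence $B = 4n^2(n!)^2$ is at most doubly exponential in $|S_e|+|\varphi|$; in particular $\log B = 2^{O(|S_e|+|\varphi|)}$ is singly exponential. Plugging this into Lemma~\ref{prop:no_states}, the dominant term $(B+3)^{2^{|S_e|+|\varphi|}+1}$ has logarithm $(2^{|S_e|+|\varphi|}+1)\cdot\log(B+3) = 2^{O(|S_e|+|\varphi|)}$, again singly exponential; the remaining factors $(1+2^{|S_e|})$ and $(1+|\Sigma_1\times 2^{\KKK}|)$ are themselves only singly exponential (recall $|\KKK|\le|\varphi|$) and are absorbed. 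Thus $G(\TTT_\varphi,B)$ has $2^{2^{O(|S_e|+|\varphi|)}}$ states, i.e. doubly exponential size. Building $\TTT_\varphi$ costs at most exponential time, building $Det(\TTT_\varphi,B)$ and the arena costs time polynomial in their (doubly exponential) size, and solving a safety game is linear in the arena; hence the total running time is $2^{2^{O(|S_e|+|\varphi|)}}$, which gives the 2EXPTIME upper bound of Proposition~\ref{theor:complex}. (A matching 2EXPTIME lower bound is inherited from plain LTL realizability, a special case, so the bound is tight, but this is not needed for the statement.)

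The step I expect to need the most care — and the one that could mislead a careless reading — is precisely this composition of blow-ups: one might fear a triple exponential, since $B$ is already doubly exponential and it is then raised to an exponential power in Lemma~\ref{prop:no_states}. What rescues us is that exponentiation is additive on the logarithm: a doubly exponential base raised to a singly exponential power has logarithm equal to the \emph{product} of two singly exponential quantities, $\log B$ and $2^{|S_e|+|\varphi|}$, which is still only singly exponential, so the result stays doubly exponential. One must also double-check that the crude bounds in Lemma~\ref{prop:no_states} — the counter range $\{-1,\dots,B+1\}$ and the exponent $2^{|S_e|+|\varphi|}+1$ — really apply here; they follow from $n\le 2^{|S_e|+|\varphi|}$ together with the knowledge-synchronisation of Lemma~\ref{lemma:knowledge} already used to prove that lemma, after which the arithmetic above goes through and the proof is complete.
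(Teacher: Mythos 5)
Your proposal is correct and follows essentially the same route as the paper: build $\TTT_\varphi$ with $2^{O(|S_e|+|\varphi|)}$ states, fix $B=4n^2(n!)^2$, pass to the safety game via the determinization, and conclude from the state-count lemma that the arena is doubly exponential and hence solvable in 2EXPTIME. Your explicit bookkeeping of the exponent composition (that $\log B$ is only singly exponential, so $(B+3)^{2^{|S_e|+|\varphi|}+1}$ stays doubly exponential) is in fact more careful than the paper's one-line appeal to Lemma~\ref{prop:no_states}, but it is the same argument.
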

\begin{proof} 
As also showed in Section 5, the number of states if the co-B\"{u}chi tree automaton $\TTT$ 
is bounded by $2^{|S|+|\varphi|}$ and is built in EXPTIME since there is a polynomial number of formulas $K \gamma$ in $\phi$.

Then, when transforming $\TTT$ into a $B$-co-B\"{u}chi tree automaton, the bound $B$ equals to $4n^2(n!)^2$. 
By applying Lemma \ref{prop:no_states}, we have that the safety game $G(\TTT,B)$ is built in 2EXPTIME.
\end{proof}

\vspace*{-5mm}
\section{Implementation and Case Studies}
\vspace*{-3mm}

\begin{figure}
\centering
\vspace*{-20pt}
\includegraphics[width=0.8\textwidth]{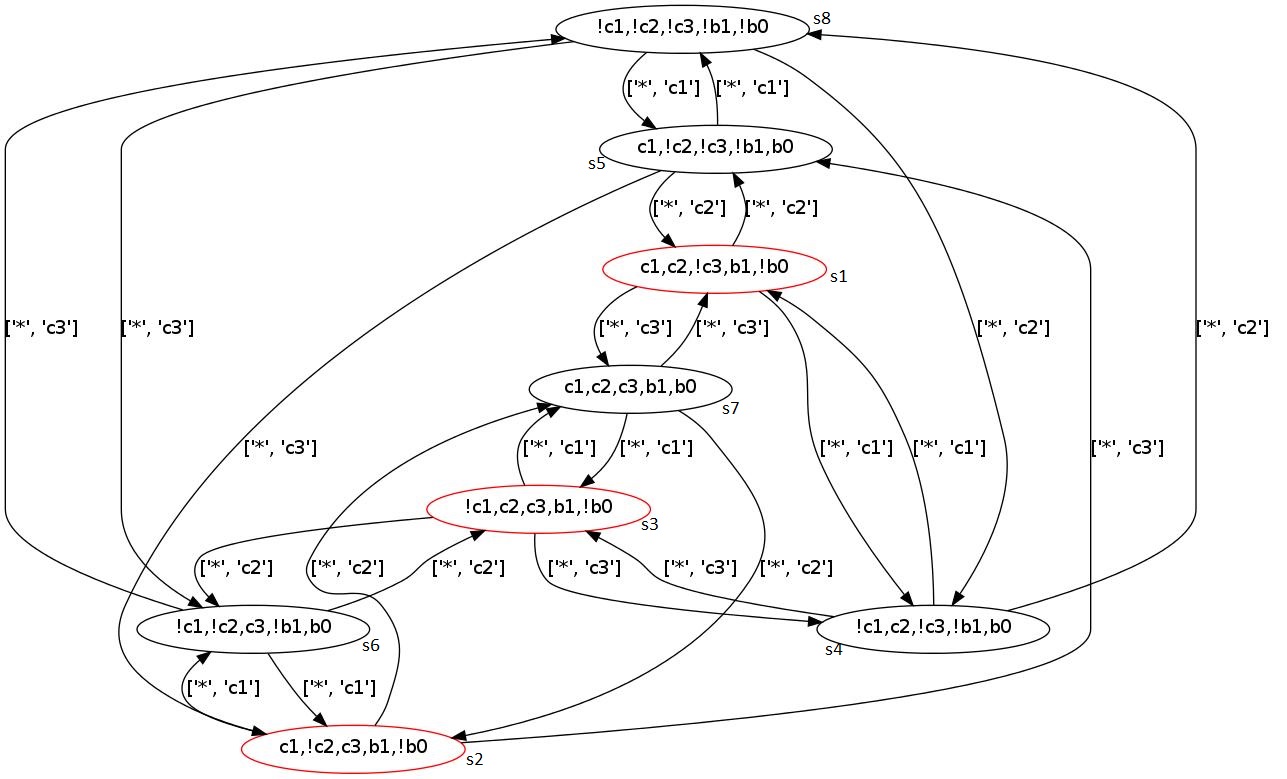}
\caption{The environment for 3-Coins Game}
\label{fig:env-coins}
\end{figure}

\begin{figure}
\centering
\vspace*{-20pt}
\includegraphics[width=0.8\textwidth]{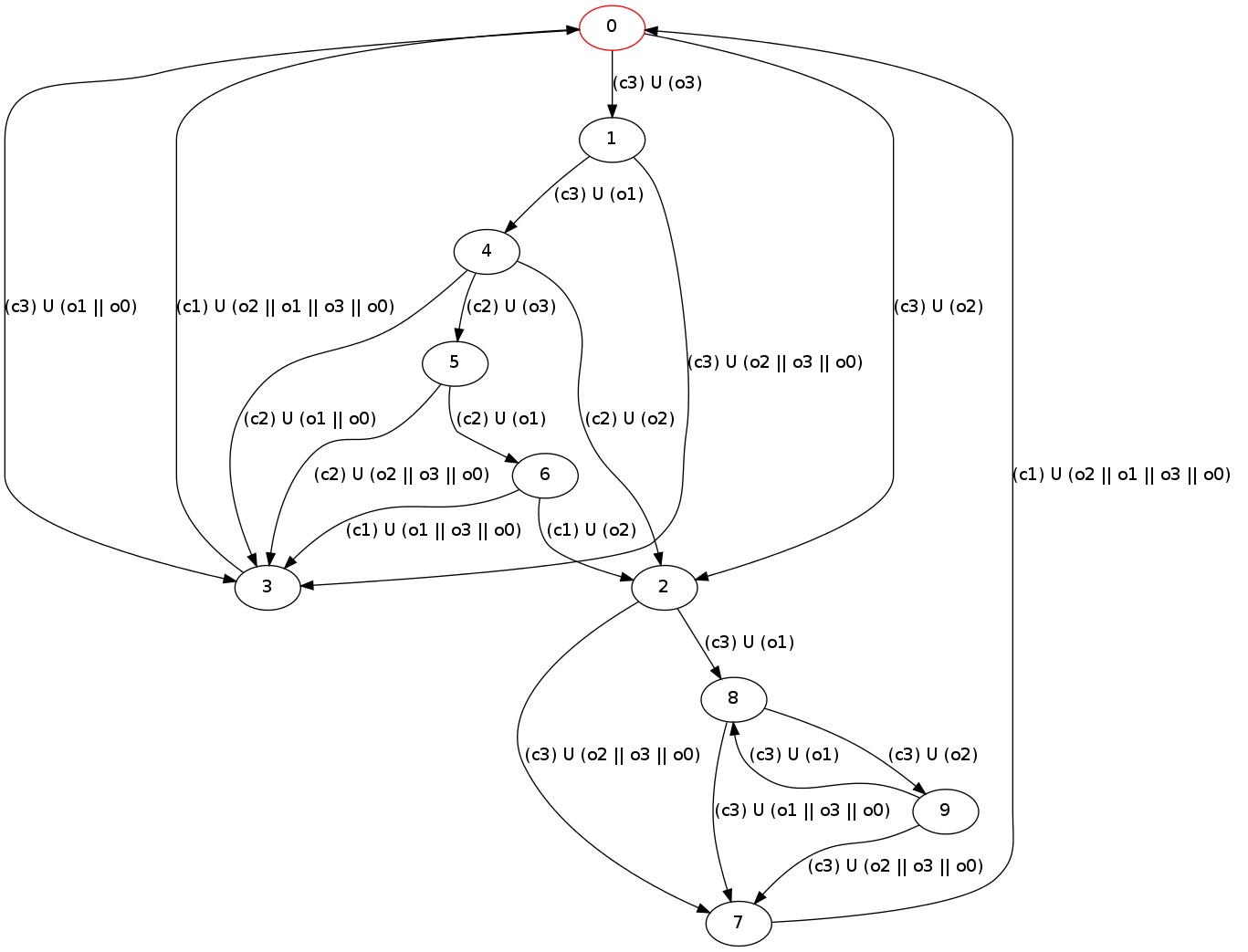}
\caption{The winning strategy for 3-Coins Game}
\label{fig:str-coins}
\end{figure}

In the Moore machine representing the strategy of Fig.\ref{fig:str-coins}, the state $2$ corresponds to the configuration in which there are only \textit{heads} and the states $3$ and $7$ are the states in which he goes when the environment cheated. Then, the strategy has two parts. One that leads to the state $2$ by checking the coins and the second part in which the system plays only $c_3$ and moves between the states $s_1$ and $s_7$.

Another example that illustrates the game with imperfect information
in which we need the knowledge is based on the following enigma:

\begin{example}[n-Prisoners Enigma]
Consider that there are $n$ prisoners in a prison, each one in his own cell and they cannot communicate. 
Also, there is a room with a light bulb and a switch and a policeman that, at each moment of time, sends only one prisoner in that room and gives him the possibility to turn on or off the light. 
The prisoners can observe only the light when they are in the
room. The guardians send the prisoners in the room in any order and
infinitely many times if the game never stops (fairness assumption). At any time, any prisoner can stop the game. At that
point, if every prisoner has visited the room at least once, then they
are all free. Otherwise they will all stay in the jail for
eternity. Of course all the prisoners want to be freed, and therefore
if someone stops the game, he must be sure that all the prisoners have
indeed visited the room at least once. Before the game starts, they
are allowed to communicate, and they know the initial state of the
light.

If you want to solve this puzzle by yourself, don't read the following
paragraph which gives the solution. Assume that the light is initially
off. The solution is that there is a special prisoner, let say
prisoner $n$, that will count  up to $n-1$. For all $1\leq j\leq n-1$,
the fairness assumption ensures that prisoner $j$ will visit the room
again and again until the game stops. The first time he visits the
room while the light is off, it turns it on, otherwise it does
nothing. Prisoner $n$ will turn the light off next time he enters the
room, and increment his counter by $1$. When the counter reaches
$n-1$, prisoner $n$ stops the game because he is sure that all the
prisoners have visited the room at least once.
\end{example}

To model this problem, it is natural to represent the guardians by the
environment and the prisoners by multi-agents. However, our
framework only allows for one agent. Therefore we fix the strategy of
prisoners $1$ to $n-1$ and encode them in the environment model. 
Prisoner $n$ (the system) must figure out a winning strategy (ideally
the counting strategy described above). We have modelled this example
in Acacia-K and indeed, our tool find the strategy described above.
Let us now give more details about the formalization.



For three prisoners, $\Prop = \{ on,x_1, x_2, p_1, p_2, p_3 \}$ where the atomic proposition 
$on$ corresponds to the light, values of $x_i$ for $i \in \{1,2\}$ is $true$ if the prisoner $i$ already turned the light on,
and the proposition $p_i$ for $i \in \{1,2,3\}$ indicates the prisoner that is inside the room.
Then, $\Propv = \{ on, p_3 \}$ and $\Propi = \{ x_1, x_2, p_1, p_2 \}$ indicating that the last prisoner 
sees all the time the light and can observe when he is inside the special room but cannot see what the other prisoners do. 

We assume that at the beginning there is no one in the room. 
Them, the environment can propose an action in $\Sigma_2 = \{ P_1, P_2, P_3 \}$ 
deciding which prisoner is going in the room and prisoner $3$ will decide if he wants the light on or off by choosing an action in $\Sigma_1 = \{ t_{on}, t_{off} \}$. Note that the action of $P3$ is ignored if he is not chosen by the environment. 
The transition relation asks that if the prisoner $p_i, i \in \{ 1,2 \}$ finds for the first time the light off ($x_1=false$ and $on=false$), he turns on the light and the value of $x_i$ changes and remains $true$
for all the reachable states from there.


Then, assuming that the environment is restricted to send all the prisoners in the special room infinitely many times, the 
\pKLTL formula that translates the goal is 
$\Box \wedge_{i=1}^{n} (\Diamond p_1)  \rightarrow \Diamond  K(\wedge_{i=1}^{n-1} x_i)$.
A winning strategy for the prisoner $n$ would be to turn off the light whenever he is sent to the special room and to let it off if it already is. Then, after he finds the light on $n-1$ times when he is sent in that room, thanks to the strategy of the other prisoners, he will know that all of them passed by that room, and even more, all of them switched an the light.
Assuming that the observations set $\OOO = \{ o_0, o_1, o_2, o_3 \}$ where 
$o_0 = \{ s \in S \mid  on \in \tau(s) \text{ and } p_3 \not \in \tau(s) \}$,
$o_1 = \{ s \in S \mid  on \not \in \tau(s) \text{ and } p_3 \in \tau(s) \}$,
$o_2 = \{ s \in S \mid  on \in \tau(s) \text{ and } p_3 \in \tau(s) \}$ and 
$o_3 = \{ s \in S \mid  on \not \in \tau(s) \text{ and } p_3 \not \in \tau(s) \}$,
the strategy synthesized by \Acacia-K for three prisoners and corresponds to the intuitive strategy is illustrated in Figure \ref{fig:str-pris} where the state $10$ in the generated Moore machine corresponds to the moment when the prisoner $p_3$ knows that all the other prisoners passed through the special room and turned on the light.

\begin{figure}
\centering
\vspace*{-10pt}
\includegraphics[width=1\textwidth]{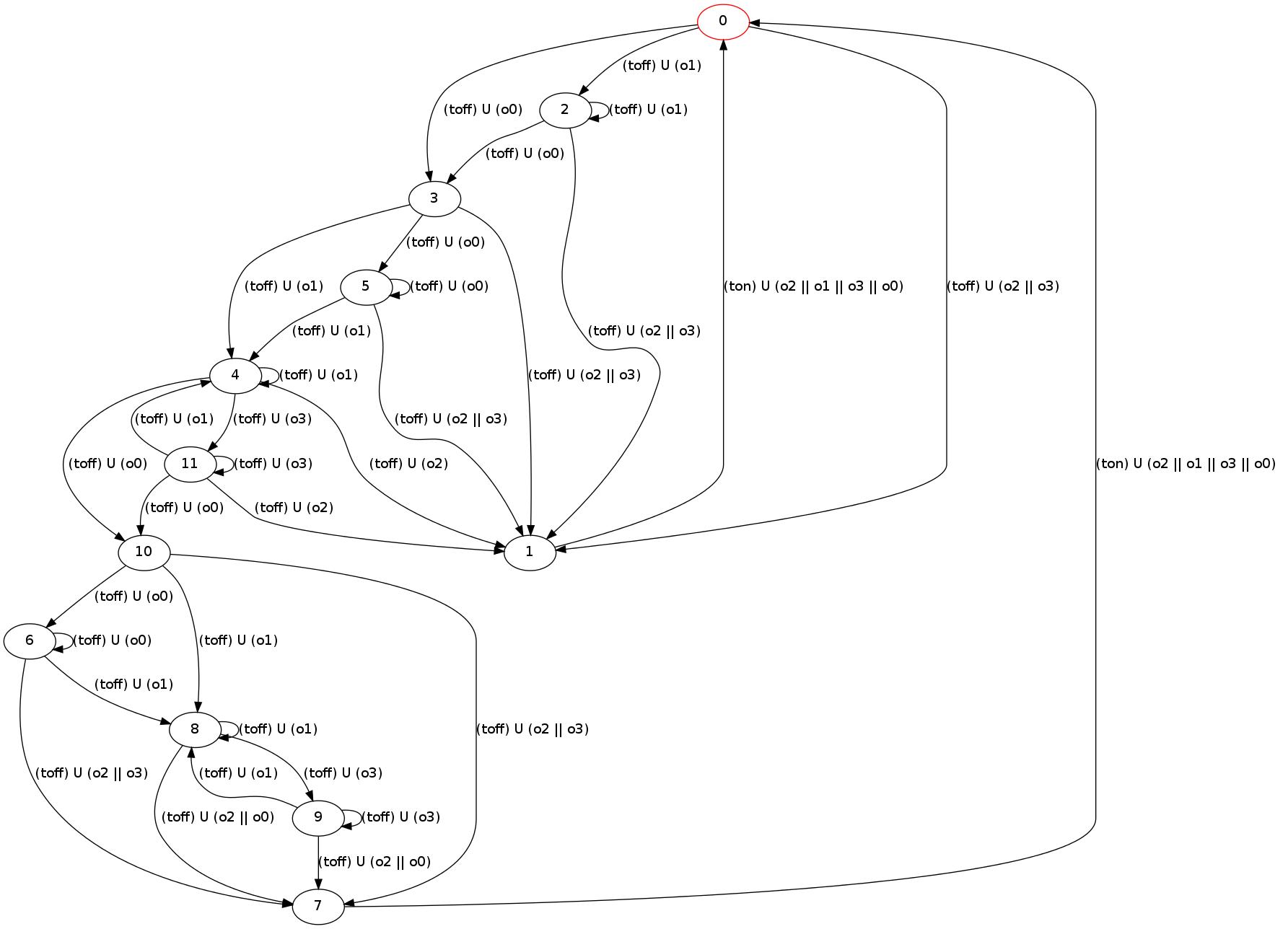}
\caption{The winning strategy for 3-Prisoners Game}
\label{fig:str-pris}
\end{figure}

For this example, \Acacia-K constructed a UCT with 144 states,
synthesised a strategy with 12 states, and the total running time is
1.87s.

\end{document}